\renewcommand{\geq}{\geqslant}
\renewcommand{\leq}{\leqslant}
\newcommand{\AU}{\ensuremath{\mathbf{U}}\xspace}
\newcommand{\AP}{\ensuremath{\mathbf{P}}\xspace}
\newcommand{\AB}{\ensuremath{\mathbf{B}}\xspace}
\newcommand{\AL}{\ensuremath{\mathbf{L}}\xspace}
\newcommand{\AV}{\ensuremath{\mathbf{V}}\xspace}
\newcommand{\AUBL}{\ensuremath{\mathbf{UBL}}\xspace}
\newcommand{\AUL}{\ensuremath{\mathbf{UL}}\xspace}
\renewcommand{\AA}{\ensuremath{\mathbf{A}}\xspace}
\newcommand{\AULV}{\ensuremath{\mathbf{ULV}}\xspace}
\newcommand{\AUV}{\ensuremath{\mathbf{UV}}\xspace}
\newcommand{\A}{\ensuremath{\mathcal{A}}\xspace}
\newcommand{\I}{\ensuremath{\mathcal{I}}\xspace}
\newcommand{\nd}{\noindent}
\newcommand{\tuple}[1]{\langle #1 \rangle }
\newcommand{\triple}[1]{(#1)}
\newcommand{\term}[1]{\ensuremath{\mbox{\small$\mathtt{#1}$}}}
\newcommand{\uri}[1]{\ensuremath{\term{#1}}}
\newcommand{\eval}[1]{[\![ #1 ]\!]}
\newcommand{\assign}{\mbox{$\colon\!\!\!\!=$}}
\newcommand{\intA}[1]{{#1}^{\I_{\A}} }
\renewcommand{\int}[1]{{#1}^{\I} }
\newcommand{\intP}[1]{{P[\![#1]\!]}}
\newcommand{\intC}[1]{{C[\![#1]\!]}}
\newcommand{\spp}{\ensuremath{\mathsf{sp}}}
\newcommand{\subclass}{\ensuremath{\mathsf{sc}}}
\newcommand{\type}{\ensuremath{\mathsf{type}}}
\newcommand{\dom}{\ensuremath{\mathsf{dom}}}
\newcommand{\range}{\ensuremath{\mathsf{range}}}
\newcommand{\eee}{\ensuremath{\mathsf{e}}}
\newcommand{\ii}[1]{\mbox{$(#1)$}}
\newcommand{\fuzzyg}[2]{\mbox{$#1\colon#2$}}
\newcommand{\aggr}{@}
\renewcommand{\vec}[1]{\bar{\mathbf #1}}
 \newtheorem{theorem}{Theorem}[section]
 \newtheorem{proposition}[theorem]{Proposition}
 \newtheorem{lemma}[theorem]{Lemma}
 \newtheorem{example}{Example}[section]
 \newtheorem{definition}{Definition}[section]
 \newtheorem{proof}{Proof}[section]
\newtheorem{remark}{Remark}[section]
\newcommand{\ie}{\textit{i.e.},\xspace}
\newcommand{\eg}{\textit{e.g.},\xspace}
\newcommand{\wrt}{w.r.t.\xspace}
\newcommand{\etal}{\textit{et al.}\xspace}
\renewcommand{\iff}{if and only if\xspace}
\newcommand{\st}{such that\xspace}
\newcommand{\rhodf}{\mbox{$\rho$df}\xspace}
\newcommand{\universe}{\ensuremath{\mathit{universe}}\xspace}
\renewcommand{\implies}{\Rightarrow}
\newcommand{\stt}[1]{\texttt{{\small#1}}}
\newcommand{\louter}{{\sqsupset\!\!\!}\bowtie}
\newcommand{\aSPARQL}{AnQL\xspace}
\newcommand{\bigoplusone}[1]{{{\bigoplus_1}\atop{\text{\tiny$#1$}}}}
\newcommand{\bigoplustwo}[1]{{{\bigoplus_2}\atop{\text{\tiny$#1$}}}}
\newcommand{\bigotimesone}[1]{{{\bigotimes_1}\atop{\text{\tiny$#1$}}}}
\newcommand{\bigotimestwo}[1]{{{\bigotimes_2}\atop{\text{\tiny$#1$}}}}
\newcommand{\intLabel}[1]{\pgfmathparse{2000+#1} \pgfmathtruncatemacro{\res}{\pgfmathresult} \res}
\newcommand{\makegraph}[2]{
  \begin{tikzpicture}[scale=0.65]
    \tikzstyle{every node}=[gray,font=\scriptsize]
    \draw (0,0) node[left]{$T_1$} -- (10,0) ;
    \foreach \startPoint/\endPoint in #1
    {
      \draw[line width=2mm] (\startPoint,0) node[above]{\intLabel{\startPoint}} -- (\endPoint,0) node[above]{\intLabel{\endPoint}};
    }

    \draw (0,-1) node[left]{$T_2$} -- (10,-1) ;
    \foreach \startPoint/\endPoint in #2
    {
      \draw[line width=2mm] (\startPoint,-1) node[below]{\intLabel{\startPoint}} -- (\endPoint,-1) node[below]{\intLabel{\endPoint}};
    }

  \end{tikzpicture}
}
\newcommand{\pow}[1]{2^{#1}} %
\let\Oldfootnote\footnote
\renewcommand{\footnote}[1]{\Oldfootnote{#1}}
\newtheorem{property}{Property}
\journal{Journal of Web Semantics}
\begin{document}

\begin{frontmatter}

\title{A General Framework for Representing, Reasoning and Querying with Annotated Semantic Web Data}

\author[lyon]{Antoine Zimmermann}
\ead{E-mail: antoine.zimmermann@insa-lyon.fr}

\author[deri]{Nuno Lopes}
\ead{nuno.lopes@deri.org}

\author[deri]{Axel Polleres}
\ead{axel.polleres@deri.org}

\author[isti]{Umberto Straccia}
\ead{straccia@isti.cnr.it}

\address[deri]{Digital Enterprise Research Institute, National University of Ireland Galway, Ireland}
\address[isti]{Istituto di Scienza e Tecnologie dell'Informazione (ISTI - CNR), Pisa, Italy}
\address[lyon]{INSA-Lyon, LIRIS, UMR5205, F-69621, France}

\begin{abstract}
  We describe a generic framework for representing and reasoning with annotated Semantic Web data, a task becoming more
  important with the recent increased amount of inconsistent and non-reliable meta-data on the web. We formalise the
  annotated language, the corresponding deductive system and address the query answering problem. 
  Previous contributions on specific RDF annotation domains are encompassed by our unified reasoning formalism as we show by
  instantiating it on (i) temporal, (ii) fuzzy, and (iii) provenance annotations. Moreover, we provide a generic method for
  combining multiple annotation domains allowing to represent, \eg temporally-annotated fuzzy RDF.
  Furthermore, we address the development of a query language -- AnQL -- that is inspired by SPARQL, including several
  features of SPARQL~1.1 (subqueries, aggregates, assignment, solution modifiers) along with the formal definitions of their
  semantics.
\end{abstract}

\begin{keyword}
RDF \sep RDFS \sep Annotations \sep SPARQL \sep query \sep temporal \sep fuzzy
\end{keyword}

\end{frontmatter}

\section{Introduction}

RDF (Resource Description Framework)~\cite{rdf} is the widely used representation language for the Semantic Web and the
Web of Data.  RDF exposes data as triples, consisting of \emph{subject}, \emph{predicate} and \emph{object}, stating
that \emph{subject} is related to \emph{object} by the \emph{predicate} relation.
Several extensions of RDF were proposed in order to deal with
time~\cite{gutierrez-etal-2007,DBLP:conf/www/PuglieseUS08,tapp-bern-09}, truth or imprecise
information~\cite{Mazzieri08,Straccia09f}, trust~\cite{hartig-09,Schenk08} and provenance~\cite{Dividino09}.  All these
proposals share a common approach of extending the RDF language by attaching meta-information about the RDF graph or
triples.
RDF Schema (RDFS)~\cite{rdfs} is the specification of a restricted vocabulary that allows one to deduce further information
from existing RDF triples.
SPARQL~\cite{sparql} is the W3C-standardised query language for RDF.

In this paper, we present an extension of the RDF model to support meta-information in the form of annotations of triples.
We specify the semantics by conservatively extending the RDFS semantics and provide a deductive system for Annotated RDFS.
Further, we define a query language that extends SPARQL and include advanced features such as aggregates, nested queries and
variable assignments, which are part of the not-yet-standardised SPARQL 1.1 specification. The present paper is based on and
extends two previously published articles introducing Annotated RDFS~\cite{Straccia10a} and AnQL (our SPARQL extension)
\cite{Lopes10a}. In addition to improving the descriptions of this existing body of work, we provide the following novelties: 
\begin{enumerate}
\item we introduce a use case scenario that better reflects a realistic example of how annotations can be used;
\item we detail three concrete domains of annotations (temporal, fuzzy, provenance) that were only sketched in our previous publications;
\item we present a detailed and systematic approach for combining multiple annotation domains into a new single complex domain; this represents the most significant novel contribution of the paper;
\item we discuss the integration of annotated triples with standard, non-annotated triples, as well as the integration of data using different annotation domains;
\item we describe a prototype implementation.
\end{enumerate}
Section~\ref{sec:preliminaries} gives preliminary definitions of the RDFS semantics and query answering, restricting ourselves to the sublanguage \rhodf. Our extension of RDF is presented in Section~\ref{sec:annotated-rdfs} together with essential examples of primitive domains.  Our extension of SPARQL, is presented in Section~\ref{sec:annotated-sparql}.  Furthermore, Section~\ref{sec:twists-asparql} presents a discussion of important issues with respect to specific domains and their combination.
Finally, Section~\ref{sec:implementation-notes} describes our prototype implementation.

\subsubsection*{Related Work}
\label{sec:related}

The basis for Annotated RDF were first established by Udrea \etal~\cite{Udrea06a,Udrea10}, where they define triples
annotated with values taken from a \emph{finite partial order}. In their work, triples are of the form
$\triple{s,p\!:\!\lambda,o}$, where the property, rather than the triple is annotated. We instead rely on a richer, not
necessarily finite, structure and provide additional inference capabilities to~\cite{Udrea10}, such as a more involved
propagation of annotation values through schema triples.  
For instance, in the temporal domain, from $\fuzzyg{\triple{a, \subclass, b}}{[2,6]}$ and $\fuzzyg{\triple{b, \subclass,
    c}}{[3,8]}$, we will infer $\fuzzyg{\triple{a, \subclass, c}}{[3,6]}$ ($\subclass$ is the subclass
property). Essentially, Udrea \etal do not provide an operation to combine the annotation in such inferences, while the
algebraic structures we consider support such operations. Also, they require specific algorithms, while we show that a
simple extension to the classical RDF inference rules is sufficient.
The query language presented in this paper consists of conjunctive queries and, while SPARQL's Basic Graph Patterns are
compared to their conjunctive queries, they do not consider extending SPARQL with the possibility of querying
annotations. Furthermore, $\mathsf{OPTIONAL}$, $\mathsf{UNION}$ and $\mathsf{FILTER}$ SPARQL queries are not considered
which results in a subset of SPARQL that can be directly translated into their previously presented conjunctive query
system.

Adding annotations to logical statements was already proposed in the logic programming realm in which Kifer \& Subrahmanian~\cite{Kifer92} present a similar approach, where atomic formulas are annotated with a value taken from a lattice of annotation values, an annotation variable or a
complex annotation, \ie~a function applied to annotation values or variables.
Similarly, we can relate our work to annotated relational databases, especially Green
\etal~\cite{Green_Karvounarakis_Tannen:07} who provides a similar framework for the relational algebra. After presenting
a generic structure for annotations, they focus more specifically on the provenance domain. The specificities of the
relation algebra, especially Closed World Assumption, allows them to define a slightly more general structure for
annotation domains, namely semiring (as opposed to the residuated lattice in our initial
approach~\cite{Straccia10a,Lopes10a}).  In relation to our rule-based RDFS Reasoning, it should be mentioned that Green
\etal~\cite{Green_Karvounarakis_Tannen:07} also provide an algorithm that can decide ground query answers for annotated
Datalog, which might be used for RDFS rules; general query answering or materialisation though might not terminate, due
to the general structure of annotations, in their case.  Karvounarakis \etal~\cite{DBLP:conf/sigmod/KarvounarakisIT10}
extend the work of~\cite{Green_Karvounarakis_Tannen:07} towards various annotations -- not only provenance, but also
confidence, rank, etc. -- but do not specifically discuss their
combinations.%

For the Semantic Web, several extensions of RDF were proposed in order to deal with specific domains such as truth of
imprecise information~\cite{Mazzieri08,MazzieriDragoni:2005aa,Mazzieri:2004aa,Straccia09f},
time~\cite{gutierrez-etal-2007,DBLP:conf/www/PuglieseUS08,tapp-bern-09}, trust~\cite{hartig-09,Schenk08} and
provenance~\cite{Dividino09}.  These approaches are detailed in the following paragraphs.

Straccia~\cite{Straccia09f}, presents Fuzzy RDF in a general setting where triples are annotated with a degree of truth
in $[0,1]$. For instance, ``Rome is a big city to degree 0.8'' can be represented with $\fuzzyg{\triple{\term{Rome},
    \type, \term{BigCity}}}{0.8}$; the annotation domain is $[0,1]$.  For the query language, it formalises conjunctive
queries.
Other similar approaches for Fuzzy RDF~\cite{Mazzieri08,MazzieriDragoni:2005aa,Mazzieri:2004aa} provide the syntax and
semantics, along with RDF and RDFS interpretations of the annotated triples.  In~\cite{Mazzieri:2004aa} the author
describes an implementation strategy that relies on translating the Fuzzy triples into plain RDF triples by using
reification.  However these works focus mostly on the representation format and the query answering problem is not
addressed.

Guti\'errez \etal~\cite{gutierrez-etal-2007} presents the definitions of Temporal RDF, including reduction of the
semantics of Temporal RDF graphs to RDF graphs, a sound and complete inference system and shows that entailment of
Temporal graphs does not yield extra complexity than RDF entailment.  Our Annotated RDFS framework encompasses this work
by defining the temporal domain.  They present conjunctive queries with built-in predicates as the query language for
Temporal RDF, although they do not consider full SPARQL.
Pugliese \etal~\cite{DBLP:conf/www/PuglieseUS08} presents an optimised indexing schema for Temporal RDF, the notion of
normalised Temporal RDF graph and a query language for these graphs based on SPARQL.  The indexing scheme consists of
clustering the RDF data based on their temporal distance, for which several metrics are given.  For the query language
they only define conjunctive queries, thus ignoring some of the more advanced features of SPARQL.
Tappolet and Bernstein~\cite{tapp-bern-09} present another approach to the implementation of Temporal RDF, where each
temporal interval is represented as a named graph~\cite{DBLP:journals/ws/CarrollBHS05} containing all triples valid in
that time period. Information about temporal intervals, such as their relative relations, start and end points, is
asserted in the default graph.  The $\tau$-SPARQL query language allows to query the temporal RDF representation using
an extended SPARQL syntax that can match the graph pattern against the snapshot of a temporal graph at any given time
point and allows to query the start and endpoints of a temporal interval, whose values can then be used in other parts
of the query.

SPARQL extensions towards querying trust have been presented by Hartig~\cite{hartig-09}.  Hartig introduces a trust
aware query language, tSPARQL, that includes a new constructor to access the trust value of a graph pattern.  This value
can then be used in other statements such as $\mathsf{FILTER}{s}$ or $\mathsf{ORDER}$.
Also in the setting of trust management, Schenk~\cite{Schenk08} defines a \emph{bilattice} structure to model
\emph{trust} relying on the dimensions of knowledge and truth.  The defined knowledge about trust in information sources
can then be used to compute the trust of an inferred statement. An extension towards OWL is presented but there is no
query language defined.  Finally, this approach is used to resolve inconsistencies in ontologies arising from connecting
multiple data sources.

In~\cite{Dividino09} the authors also present a generic extension of RDF to represent \emph{meta information}, mostly
focused on provenance and uncertainty.  Such meta information is stored using named graphs and their extended semantics
of RDF, denoted RDF$^+$, assumes a predefined vocabulary to be interpreted as the meta information.  However they do not
provide an extension of the RDFS inference rules or any operations for combining meta information.  The authors also
provide an extension of the SPARQL query language, considering an additional expression that enables querying the RDF
meta information.

Our initial approach of using residuated lattices as the structure for representing
annotations~\cite{Straccia10a,Lopes10a} was extended to the more general semiring structure by Buneman \&
Kostylev~\cite{Buneman10}.  This paper also shows that, once the RDFS inferences of an RDF graph have been computed for
a specific domain, it is possible to reuse these inferences if the graph is annotated with a different domain.  Based on
this result the authors define a universal domain which is possible to transform to other domains by applying the
corresponding transformations.

Aidan Hogan's thesis~\cite[Chapter 6]{hogan-2011} provides a framework for a specific combination of annotations (authoritativeness, rank, blacklisting) within RDFS and (a variant of) OWL 2 RL. This work is orthogonal to ours, in that it does not focus on aspects of query answering, or providing a generic framework for combinations of annotations, but rather on scalable and efficient algorithms for materialising inferences for the specific combined annotations under consideration.

\section{Preliminaries -- Classical RDF and RDFS}
\label{sec:preliminaries}
In this section we present notions and definitions that are necessary for our discussions later. First we give a short
overview of RDF and RDFS.

\subsection{Syntax} \label{sec:rdf-syntax} Consider pairwise disjoint alphabets $\AU$, $\AB$, and $\AL$ denoting,
respectively, \emph{URI references}, \emph{blank nodes} and \emph{literals}.\footnote{We assume $\AU, \AB$, and $\AL$
  fixed, and for ease we will denote unions of these sets simply concatenating their names.}  We call the elements in
$\AUBL$ ($\AB$) \emph{terms} (\emph{variables}, denoted $x,y,z$).
An \emph{RDF triple} is $\tau=\triple{s,p,o} \in \AUBL \times \AU \times \AUBL$.\footnote{As in~\cite{Munoz07} we allow
  literals for $s$.} We call $s$ the \emph{subject}, $p$ the \emph{predicate}, and $o$ the \emph{object}. A \emph{graph}
$G$ is a set of triples, the \universe of $G$, $\universe(G)$, is the set of elements in $\AUBL$ that occur in the
triples of $G$, the \emph{vocabulary} of $G$, $voc(G)$, is $\universe(G) \cap \AUL$.

We rely on a fragment of RDFS, called $\rhodf$~\cite{Munoz07}, that covers essential features of RDFS.  $\rhodf$ is
defined as the following subset of the RDFS vocabulary: $\rhodf = \{ \spp, \subclass, \type, \dom,
\range\}$. Informally, \ii{i} $\triple{p, \spp, q}$ means that property $p$ is a \emph{subproperty} of property $q$;
\ii{ii} $\triple{c, \subclass, d}$ means that class $c$ is a \emph{subclass} of class $d$; \ii{iii} $\triple{a, \type,
  b}$ means that $a$ is of \emph{type} $b$; \ii{iv} $\triple{p, \dom, c}$ means that the \emph{domain} of property $p$
is $c$; and \ii{v} $\triple{p, \range, c}$ means that the \emph{range} of property $p$ is $c$.
In what follows we define a \emph{map} %
 as a function $\mu : \AUBL \to \AUBL$ preserving URIs and literals, \ie $\mu(t) =
t$, for all $t \in \AUL$. Given a graph $G$, we define $\mu(G)= \{ \triple{\mu(s), \mu(p), \mu(o)} \mid \triple{s, p, o}
\in G \}$. We speak of a map $\mu$ from $G_{1}$ to $G_{2}$, and write $\mu : G_{1} \to G_{2}$, if $\mu$
is such that $\mu(G_{1}) \subseteq G_{2}$.

\subsection{Semantics}  
An \emph{interpretation} $\I$ over a vocabulary $V$ is a tuple $\I =\tuple{\Delta_{R}, \Delta_{P}, \Delta_{C},
  \Delta_{L}, \intP{\cdot}, \intC{\cdot}, \int{\cdot}}$, where $\Delta_{R}, \Delta_{P}$, $\Delta_{C}, \Delta_{L}$ are
the interpretation domains of $\I$, which are finite non-empty sets, and $\intP{\cdot}, \intC{\cdot}, \int{\cdot}$ are
the interpretation functions of $\I$. They have to satisfy: {\small
\begin{enumerate}
\item $\Delta_{R}$ are the resources (the domain or universe of $\I$);
\item $\Delta_{P}$ are property names (not necessarily disjoint from $\Delta_{R}$); 
\item $\Delta_{C} \subseteq \Delta_{R}$ are the classes; 
\item $\Delta_{L} \subseteq \Delta_{R}$ are the literal values and contains $\AL \cap V$;
\item  $\intP{\cdot}$ is a function $\intP{\cdot}\colon \Delta_{P} \to 2^{\Delta_{R} \times \Delta_{R}}$;
\item  $\intC{\cdot}$  is a function  $\intC{\cdot}\colon \Delta_{C} \to 2^{\Delta_{R}}$;
\item $\int{\cdot}$ maps each $t \in \AUL \cap V$ into a value $\int{t} \in \Delta_{R} \cup \Delta_{P}$, and such that
  $\int{\cdot}$ is the identity for plain literals and assigns an element in $\Delta_{R}$ to each element in $\AL$;
\end{enumerate}
}

\nd An interpretation $\I$ is a \emph{model} of a ground graph $G$, denoted $\I \models G$, \iff $\I$ is an interpretation over
the vocabulary $\rhodf \cup \universe(G)$ that satisfies the following conditions:

 {\small
 \setlength{\leftmargini}{3mm}
  \begin{description}\label{condRDF}
 \item[Simple:] \
 \begin{enumerate}
\setlength{\itemindent}{-7mm}
 \item for each $\triple{s, p, o} \in G$, $\intA{p} \in\Delta_{P}$ and $(\intA{s}, \intA{o}) \in \intP{\intA{p}}$;
 \end{enumerate}
 \item[Subproperty:] \
 \begin{enumerate}
\setlength{\itemindent}{-7mm}
 \item $\intP{\intA{ \spp}}$ is transitive over $\Delta_{P}$;
 \item if $(p, q) \in \intP{\intA{ \spp}}$ then $p, q \in \Delta_{P}$ and $\intP{p} \subseteq \intP{q}$; 
 \end{enumerate}
 \item[Subclass:] \
 \begin{enumerate}
\setlength{\itemindent}{-7mm}
 \item $\intP{\intA{\subclass}}$ is transitive over $\Delta_{C}$; 
 \item if $(c, d) \in \intP{\intA{\subclass}}$ then $c, d \in \Delta_{C}$ and $\intC{c} \subseteq \intC{d}$; 
 \end{enumerate}
 \item[Typing I:] \
 \begin{enumerate}
\setlength{\itemindent}{-7mm}
 \item $x \in \intC{c}$ \iff $(x,c) \in \intP{\intA{\type}}$;
 \item if $(p, c) \in \intP{\intA{\dom}}$ and $(x, y) \in \intP{p}$ then $x \in \intC{c}$;
 \item if $(p, c) \in \intP{\intA{\range}}$ and $(x, y) \in \intP{p}$ then $y \in \intC{c}$;
 \end{enumerate}
 \item[Typing II:] \
 \begin{enumerate}
\setlength{\itemindent}{-7mm}
 \item For each $\eee \in \rhodf$, $\intA{\eee} \in \Delta_{P}$
 \item if $(p, c) \in \intP{\intA{\dom}}$ then $p \in \Delta_{P}$ and $c \in \Delta_{C}$
 \item if $(p, c) \in \intP{\intA{\range}}$ then $p \in \Delta_{P}$ and $c \in \Delta_{C}$
 \item if $(x, c) \in \intP{\intA{\type}}$ then $c \in \Delta_{C}$
 \end{enumerate}
 \end{description}
 }

\nd Entailment among ground graphs $G$ and $H$ is as usual.
Now, $G \models H$, where $G$ and $H$ may contain blank nodes, \iff for any grounding $G'$ of $G$ there is a grounding $H'$ of $H$ such that $G' \models H'$.\footnote{A grounding $G'$ of graph $G$ is obtained, as usual, by replacing variables in $G$ with terms in \AUL.}

\begin{remark}\label{rem1}
  In~\cite{Munoz07}, the authors define two variants of the semantics: the default one includes reflexivity of
  $\intP{\int{ \spp}}$ (resp. $\intC{\int{\subclass}}$) over $\Delta_{P}$ (resp. $\Delta_{C}$) but we are only
  considering the alternative semantics presented in~\cite[Definition 4]{Munoz07} which omits this requirement.  Thus,
  we do not support an inference such as $G \models \triple{a, \subclass, a}$, which anyway are of marginal interest.
\end{remark}

\begin{remark}\label{rem3}
  In a First-Order Logic (FOL) setting, we may interpret classes as unary
  predicates, and (RDF) predicates as binary predicates. Then
  \begin{enumerate}
  \item a subclass relation between class $c$ and $d$ may be encoded as the formula $\forall x. c(x) \implies d(x)$
  \item a subproperty relation between property $p$ and $q$ may be encoded as $\forall x\forall y. p(x,y) \implies
    q(x,y)$
  \item domain and range properties may be represented as: $\forall x \forall y. p(x,y) \implies c(x)$ and $\forall x
    \forall y. p(x,y) \implies c(y)$
  \item the transitivity of a property can be represented as $\forall x\forall y\exists z. (p(x,z) \land p(z,y))
    \implies p(x,y)$
  \end{enumerate}

\nd  Although this remark is trivial, we will see that it will play an important
  role in the formalisation of annotated RDFS.
\end{remark}

\subsection{Deductive system} 
In what follows, we provide the sound and complete deductive system for our language derived from~\cite{Munoz07}.  %
The system is arranged in groups of rules
that captures the semantic conditions of models. In every rule, $A,B,C,X$, and $Y$ are meta-variables representing
elements in $\AUBL$ and $D,E$ represent elements in $\AUL$.  The rules are as follows:
{\small
 \setlength{\leftmargini}{5mm}
  \begin{enumerate}
  \item Simple: \\[0.5em]
    \begin{tabular}{llll}
      $(a)$ & $\frac{G}{G'}$ for a map $\mu:G' \to G$ & $(b)$ & $\frac{G}{G'}$ for  $G' \subseteq G$ 
    \end{tabular}
  \item Subproperty: \\[0.5em]
    \begin{tabular}{llll}
      $(a)$ & $\frac{\triple{A,  \spp, B},  \triple{B,  \spp, C}}{\triple{A,  \spp, C}}$ & $(b)$ & $\frac{\triple{D,  \spp, E},  \triple{X, D, Y}}{\triple{X, E, Y}}$ 
    \end{tabular}
  \item Subclass: \\[0.5em]
    \begin{tabular}{llll}
      $(a)$ & $\frac{\triple{A, \subclass, B},  \triple{B, \subclass, C}}{\triple{A, \subclass, C}}$ & $(b)$ & $\frac{\triple{A, \subclass, B},  \triple{X, \type, A}}{\triple{X, \type, B}}$ 
    \end{tabular}
  \item Typing: \\[0.5em]
    \begin{tabular}{llll}
      $(a)$ & $\frac{\triple{D, \dom, B},  \triple{X, D, Y}}{\triple{X, \type, B}}$ & $(b)$ & $\frac{\triple{D, \range, B},  \triple{X, D, Y}}{\triple{Y, \type, B}}$ 
    \end{tabular}
  \item Implicit Typing: \\[0.5em]
    \begin{tabular}{llll}
      $(a)$ & $\frac{\triple{A, \dom, B},  \triple{D,  \spp, A}, \triple{X, D, Y}}{\triple{X, \type, B}}$  &
      $(b)$ & $\frac{\triple{A, \range, B},  \triple{D,  \spp, A}, \triple{X, D, Y}}{\triple{Y, \type, B}}$
    \end{tabular}
  \end{enumerate}
}
\noindent A reader familiar with~\cite{Munoz07} will notice that these rules are as rules 1-5 of~\cite{Munoz07}
(which has 7 rules). We excluded the rules handling reflexivity (rules 6-7) which are not needed to answer queries.
Furthermore, as noted in~\cite{Munoz07}, the ``Implicit Typing'' rules are a necessary addition to the rules presented
in~\cite{haye-04} for complete RDFS entailment.  These represent the case when variable $A$ in $\triple{D, \spp, A}$ and
$\triple{A, \dom, B}$ or $\triple{A, \range, B}$, is a property implicitly represented by a blank node. 

We denote with  $\{\tau_{1}, \ldots, \tau_{n}\} \vdash_{\mathsf{RDFS}} \tau$ that the consequence $\tau $ is obtained from the premise $\tau_{1}, \ldots, \tau_{n}$ by applying one of the inference rules 2-5 above. Note that $n \in\{2,3\}$.  $\vdash_{\mathsf{RDFS}}$ is extended to the set of all RDFS rules as well, in which case $n\in \{1,2,3\}$.

If a graph $G'$ can be obtained by recursively applying rules 1-5 from a graph $G$, the sequence of applied rules is
called a \emph{proof}, denoted $G \vdash G'$, of $G'$ from $G$.  The following proposition shows that our proof
mechanism is sound and complete \wrt the \rhodf semantics:

\begin{proposition}[Soundness and completeness~\cite{Munoz07}]
  Inference $\vdash$ based on rules 1-5 as of~\cite{Munoz07} and applied to our semantics defined above is sound and
  complete for $\models$, that is, $G \vdash G'$ \iff $G \models G'$.
\end{proposition}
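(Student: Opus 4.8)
\emph{Overview and soundness.} The plan is to prove the two directions separately, adapting the argument of~\cite{Munoz07} and checking that dropping the reflexivity rules (rules~6--7 there) together with the matching reflexivity conditions on $\intP{\int{\spp}}$ and $\intP{\int{\subclass}}$ (cf.\ Remark~\ref{rem1}) leaves the reasoning intact. For soundness ($G \vdash G' \Rightarrow G \models G'$) I would argue by induction on the length of the proof $G \vdash G'$, so that it suffices to show every individual rule is sound, \ie that each model of the premise graph is a model of the conclusion graph. Rule~1(a) is precisely the RDF interpolation lemma: a map $\mu\colon G' \to G$ witnesses $G \models G'$; rule~1(b) is trivial, since a model of $G$ is a model of every subgraph. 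Each of the remaining rules mirrors exactly one semantic condition: 2(a) transitivity of $\intP{\int{\spp}}$ over $\Delta_P$; 2(b) the inclusion $\intP{p} \subseteq \intP{q}$ whenever $(p,q) \in \intP{\int{\spp}}$; 3(a) transitivity of $\intP{\int{\subclass}}$ over $\Delta_C$; 3(b) the inclusion $\intC{c} \subseteq \intC{d}$ together with the ``Typing~I'' equivalence $x \in \intC{c} \Leftrightarrow (x,c) \in \intP{\int{\type}}$; 4(a)--(b) the domain and range conditions; and 5(a)--(b) the composition of the subproperty inclusion with a domain or range condition, where the ``Typing~II'' clauses ensure the intermediate term lies in $\Delta_P$. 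Each verification is then immediate from the definition of model.

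\emph{Completeness.} For $G \models G' \Rightarrow G \vdash G'$ the strategy is a canonical-model construction. Let $\mathrm{cl}(G)$ be the (necessarily finite) graph obtained by applying rules~2--5 to $G$ until saturation; then $G \vdash \mathrm{cl}(G)$. From $\mathrm{cl}(G)$ I would build an interpretation $\I_G$ over $\rhodf \cup \universe(G)$ whose domains $\Delta_R, \Delta_P, \Delta_C, \Delta_L$ are the obvious finite subsets of $\rhodf \cup \universe(G)$ dictated by $\mathrm{cl}(G)$, with each term interpreted by itself, $(a,b) \in \intP{p}$ iff $\triple{a,p,b} \in \mathrm{cl}(G)$, and class membership read off from the $\type$ triples of $\mathrm{cl}(G)$. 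The core step is to verify $\I_G \models G$: each semantic condition holds either by construction of $\I_G$ or because $\mathrm{cl}(G)$ is closed under the relevant rule(s), and here the ``Implicit Typing'' rules~5(a)--(b) are exactly what makes $\I_G$ satisfy the domain/range conditions when a property is reached through a subproperty chain --- the gap in the rules of~\cite{haye-04} noted right after the rule list. Given $\I_G \models G$, the hypothesis yields $\I_G \models G'$; unwinding the definition of $\I_G$ shows that every triple of a suitable grounding of $G'$ already occurs in $\mathrm{cl}(G)$ --- the blank nodes of $G'$ being sent to the matching elements of $\universe(G)$ --- so $G'$ is derivable from $\mathrm{cl}(G)$ by a single application of rule~1(a), and $G \vdash G'$ follows.

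\emph{Main obstacle.} I expect the real work to be in the completeness direction: arranging the canonical interpretation $\I_G$ so that it meets all of the ``Typing~I/II'' and ``Implicit Typing''-related conditions simultaneously (this is where closure under rule~5 is indispensable), and checking that omitting the reflexive pairs of $\intP{\int{\spp}}$ and $\intP{\int{\subclass}}$ is harmless --- since neither the alternative semantics nor rules~1--5 require them, $\I_G$ stays a model and no intended entailment is lost, only the marginal $\triple{a,\subclass,a}$-style conclusions of Remark~\ref{rem1}. The remaining blank-node bookkeeping --- handling blank nodes in $G$ via the grounding clause of $\models$, and turning $\I_G \models G'$ into an honest proof through rule~1(a) --- is routine RDF reasoning, but must be carried out with some care.
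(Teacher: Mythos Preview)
Your sketch is correct and follows the standard route that the cited reference~\cite{Munoz07} takes: soundness by checking each rule against the corresponding semantic clause, completeness via a Herbrand-style canonical interpretation built from the syntactic closure $\mathrm{cl}(G)$, followed by rule~1(a) to handle blank nodes in $G'$. The paper itself does not supply an independent proof of this proposition --- it simply imports the result from~\cite{Munoz07} --- so there is nothing further to compare against, and your outline is exactly what one would expect to find there, including the observation that dropping reflexivity (rules~6--7) is harmless under the alternative semantics of Remark~\ref{rem1}.
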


\begin{proposition}[\cite{Munoz07}] \label{pp1}
 Assume  $G \vdash G'$ then there is a proof of $G'$ from $G$ where the rule $(1a)$ is used at most once and at the end.
\end{proposition}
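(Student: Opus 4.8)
The plan is to normalise an arbitrary proof by a terminating rewriting argument on its sequence of rule applications, following the technique of~\cite{Munoz07}. Two remarks set the stage. First, rule $(1b)$ is the special case of rule $(1a)$ obtained with the identity map: $\mathrm{id}$ is a map, and $\mathrm{id}\colon G'\to G$ holds precisely when $G'\subseteq G$. Hence we may assume that every proof uses only rule $(1a)$ and the triple-producing rules~2--5, each of which enlarges the current graph by exactly one triple. Second, the triple-producing rules commute with maps, a fact I will call the \emph{lifting lemma}: if an instance of a rule $r\in\{2,\dots,5\}$ derives a triple $\tau$ from premises $P$, and $\mu$ is any map, then the same rule $r$ derives $\mu(\tau)$ from $\mu(P)$. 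Throughout I write $H\to H'$ for a single rule application.

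The main thing to get right — and essentially the only real work in the proof — is the lifting lemma, which I would establish by inspecting the five rule groups. The only side conditions appearing in rules~2--5 are constraints of the form ``$D,E\in\AUL$'' together with the implicit requirement that a term used in predicate position is a URI; both are invariant under maps, since a map is the identity on $\AUL$ and hence on $\AU$. No rule~2--5 produces a term not already occurring among its premises (or in $\rhodf$), so $\mu(\tau)$ is well defined as soon as $\mu$ is defined on the terms of $P$, and the matching of the meta-variables $A,B,C,X,Y$ stays consistent after applying $\mu$ because $\mu$ is a function. The implicit-typing rules~5(a)--(b), in which the ``property'' position may be a blank node that $\mu$ sends to an arbitrary term, are the only delicate case, and they cause no trouble exactly because that position carries no side condition ($A\in\AUBL$).

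Given the lifting lemma, I rewrite a proof $G=G_0\to\cdots\to G_n=G'$ using two local moves, each preserving the endpoints $G$ and $G'$ and the validity of the derivation. The \emph{swap} move replaces $G_i\xrightarrow{(1a),\mu}G_{i+1}\xrightarrow{\,r\,}G_{i+1}\cup\{\tau\}$, with $r\in\{2,\dots,5\}$, by $G_i\xrightarrow{\,r\,}G_i\cup\{\mu(\tau)\}\xrightarrow{(1a),\mu}G_{i+1}\cup\{\tau\}$; this is legitimate since $\mu(P)\subseteq\mu(G_{i+1})\subseteq G_i$, so by the lifting lemma $r$ applies to $G_i$ and yields $\mu(\tau)$, while $\mu\bigl(G_{i+1}\cup\{\tau\}\bigr)=\mu(G_{i+1})\cup\{\mu(\tau)\}\subseteq G_i\cup\{\mu(\tau)\}$, so rule $(1a)$ with the same $\mu$ justifies the second step. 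The \emph{merge} move replaces two consecutive applications $G_i\xrightarrow{(1a),\mu_1}G_{i+1}\xrightarrow{(1a),\mu_2}G_{i+2}$ by the single application $G_i\xrightarrow{(1a),\,\mu_1\circ\mu_2}G_{i+2}$, which is legitimate because the composition of two maps is again a map and $(\mu_1\circ\mu_2)(G_{i+2})\subseteq\mu_1(G_{i+1})\subseteq G_i$.

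Finally I would argue termination and read off the conclusion. Order proofs lexicographically by the pair $\bigl(\text{number of }(1a)\text{-steps},\ \sum_{j}(n-j)\bigr)$, where the sum ranges over the positions $j$ of the $(1a)$-steps in a proof of length $n$. A swap keeps the first component fixed (it replaces two steps by two steps) and lowers the second by exactly~$1$, since it moves one $(1a)$-step one position later; a merge lowers the first component. As this order is well founded, the rewriting stops. In a terminal proof no swap applies, so no $(1a)$-step is immediately followed by a rule-2--5 step, and no merge applies, so no two $(1a)$-steps are adjacent; together these force any $(1a)$-step to occupy the last position. Hence the terminal proof uses rule $(1a)$ at most once, and, if at all, only as its final step, which is the claim.
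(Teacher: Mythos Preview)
The paper does not actually supply a proof of this proposition; it is stated with a citation to~\cite{Munoz07} and used as a black box. So there is nothing in the paper to compare against.

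That said, your argument stands on its own. The absorption of rule~$(1b)$ into~$(1a)$ via the identity map is correct; the lifting lemma holds for the reasons you give (the only side conditions in rules~2--5 constrain meta-variables $D,E$ to $\AUL$, on which every map is the identity, and the paper explicitly allows literals in subject position, so no well-formedness issue arises after mapping); the swap and merge rewrites are sound; and your lexicographic measure does terminate. One cosmetic point: in the swap, the map $\mu$ is by definition a total function on $\AUBL$, so you need not separately argue that $\mu(\tau)$ is defined---but you noted this yourself. Your proof in fact yields the slightly stronger conclusion that neither $(1a)$ nor $(1b)$ appears except possibly once at the end, which is what the paper uses immediately afterwards when defining $cl(G)$ via $\vdash^{*}$.
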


\nd Finally, the \emph{closure} of a graph $G$ is defined as $cl(G) = \{\tau \mid G \vdash^{*}~\tau\}$, where $\vdash^{*}$ is as
$\vdash$ except that rule $(1a)$ is excluded. Note that the size of the closure of $G$ is polynomial in the size of $G$
and that the closure is \emph{unique}. Now we can prove that:

\begin{proposition} \label{pp2}
 $G \vdash G'$ \iff $G' \subseteq cl(G)$ or $G'$ is obtained from $cl(G)$ by applying rule $(1a)$.
\end{proposition}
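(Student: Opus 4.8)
The plan is to prove the equivalence in Proposition~\ref{pp2} by a double implication, leaning on Propositions~\ref{pp1} and the soundness/completeness result, together with the definition of $cl(G)$ as the set of triples derivable from $G$ using rules (1b)--(5) only (that is, $\vdash^{*}$).

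For the direction from right to left, suppose first that $G' \subseteq cl(G)$. By definition of the closure, every triple $\tau \in G'$ satisfies $G \vdash^{*} \tau$, hence $G \vdash \tau$. Since a proof of $G'$ from $G$ can be assembled by concatenating the proofs of the individual triples of $G'$ and then taking their union (rules (1b)--(5) only add triples, so no interference arises), we get $G \vdash G'$. If instead $G'$ is obtained from $cl(G)$ by a single application of rule $(1a)$ — i.e.\ there is a map $\mu\colon G' \to cl(G)$ — then we first derive $cl(G)$ from $G$ (again by concatenating the derivations of its finitely many triples) and then apply $(1a)$ once to obtain $G'$; hence $G \vdash G'$.

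For the direction from left to right, assume $G \vdash G'$. By Proposition~\ref{pp1} there is a proof of $G'$ from $G$ in which rule $(1a)$ is used at most once and only at the very end. Two cases arise. If $(1a)$ is never used, the proof uses only rules (1b)--(5), so every triple of $G'$ is in $cl(G)$, i.e.\ $G' \subseteq cl(G)$. If $(1a)$ is used exactly once at the end, let $G''$ be the graph obtained just before that final step; the derivation of $G''$ from $G$ uses only rules (1b)--(5), so $G'' \subseteq cl(G)$, and $G'$ is obtained from $G''$ by one application of $(1a)$, meaning there is a map $\mu\colon G' \to G''$. Composing with the inclusion $G'' \subseteq cl(G)$ gives a map $\mu\colon G' \to cl(G)$, so $G'$ is obtained from $cl(G)$ by applying rule $(1a)$, as required.

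The main point requiring care — and the step I would expect to need the most attention — is the observation that derivations with rules (1b)--(5) are monotone (triples are only added, never removed) and can therefore be freely concatenated and unioned without one sub-derivation invalidating another; this is what lets us pass between "$G \vdash^{*} \tau$ for each $\tau \in G'$'' and "$G \vdash^{*} G'$'' when building the closure. One should also note that $cl(G)$ is finite (indeed polynomial-sized, as remarked in the text), so deriving all of $cl(G)$ before the final $(1a)$ step is a finite proof. Everything else is bookkeeping on the shape of proofs guaranteed by Proposition~\ref{pp1}.
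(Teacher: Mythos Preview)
The paper does not supply an explicit proof of Proposition~\ref{pp2}; it is simply stated as a consequence of Proposition~\ref{pp1} (``Now we can prove that:'') and left to the reader. Your argument follows exactly the natural route the paper gestures at---normalise the proof via Proposition~\ref{pp1}, then split on whether the terminal $(1a)$-step is present---and is correct.

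One small imprecision worth fixing: you write that ``rules (1b)--(5) only add triples, so no interference arises'', but rule $(1b)$ does the opposite---it passes to a subgraph. What you actually need (and implicitly use) is the \emph{monotonicity of rules $(2)$--$(5)$}: their premises remain satisfied in any supergraph, so the individual derivations $G \vdash^{*} \tau$ for each $\tau\in G'$ can be merged into a single forward-chaining saturation that produces a graph containing all of $G'$ (indeed all of $cl(G)$), after which a single application of $(1b)$ extracts $G'$. Any intermediate uses of $(1b)$ inside the individual derivations can be dropped, since removing triples never helps rules $(2)$--$(5)$. With this clarification your right-to-left direction is clean; the left-to-right direction is already fine as written, including the composition of the map $\mu\colon G'\to G''$ with the inclusion $G''\subseteq cl(G)$.
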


\subsection{Query Answering}
Concerning query answering, we are inspired by~\cite{Gutierrez04} and the Logic Programming setting and we assume that a
RDF graph $G$ is \emph{ground}, that is blank nodes have been skolemised, \ie replaced with terms in $\AUL$.

A \emph{query} is of the rule-like form $$q(\vec{x}) \leftarrow \exists \vec{y}.\varphi(\vec{x},\vec{y})$$ where
$q(\vec{x})$ is the \emph{head} and $\exists \vec{y}.\varphi(\vec{x},\vec{y})$ is the \emph{body} of the query, which is
a conjunction (we use the symbol $``,''$ to denote conjunction in the rule body) of triples $\tau_{i}$ ($1 \leq i \leq n$). $\vec{x}$ is a vector of variables occurring in
the body, called the \emph{distinguished variables}, $\vec{y}$ are so-called \emph{non-distinguished variables} and are
distinct from the variables in $\vec{x}$, each variable occurring in $\tau_{i}$ is either a distinguished or a
non-distinguished variable. If clear from the context, we may omit the existential quantification $\exists \vec{y}$.  

In a query, we allow built-in triples of the form $\triple{s,p,o}$, where $p$ is a built-in predicate taken from a reserved vocabulary and having a \emph{fixed
  interpretation}. We generalise the built-ins to any $n$-ary predicate $p$, where
$p$'s arguments may be $\rhodf$ variables, values from $\AUL$, and $p$ has a
fixed interpretation. We will assume that the evaluation of the predicate can be decided in finite time.  For
convenience, we write ``functional predicates''\footnote{A predicate $p(\vec{x},y)$ is functional if for any $\vec{t}$ there is  \emph{unique} $t'$ for which $p(\vec{t},t')$ is true.} as \emph{assignments} of the form $x\assign f(\vec{z})$ and assume that the function $f(\vec{z})$ is safe. We also assume that a non functional built-in predicate $p(\vec{z})$ should be safe as well.

A query example is:
\begin{tabbing}
 $q(x,y) \leftarrow$ \= \triple{y, \term{created}, x}, \triple{y, \type, \term{Italian}},\\
                   \> \triple{x, \term{exhibitedAt}, \term{Uffizi}}
\end{tabbing} 
\nd  having intended meaning to retrieve all the artefacts $x$ created by Italian artists $y$, being exhibited at Uffizi Gallery.

In order to define an \emph{answer} to a query we introduce the following:

\begin{definition}[Query instantiation]
Given a vector $\vec{x} = \tuple{x_1, \dots, x_k}$ of variables, a \emph{substitution} over
  $\vec{x}$ is a vector of terms $\vec{t}$ replacing variables in $\vec{x}$ with terms of $\AUBL$. Then,
given a query $q(\vec{x}) \leftarrow \exists \vec{y}.\varphi(\vec{x},\vec{y})$, and two substitutions $\vec{t},
  \vec{t'}$ over $\vec{x}$ and $\vec{y}$, respectively, the \emph{query instantiation} $\varphi(\vec{t}, \vec{t}')$ is derived from $\varphi(\vec{x},
  \vec{y})$ by replacing  $\vec{x}$ and $\vec{y}$ with $\vec{t}$ and
  $\vec{t}'$, respectively.
\end{definition}

\nd Note that a query instantiation is an RDF graph.

\begin{definition}[Entailment]
  \label{def:entailment}
  Given a graph $G$, a query $q(\vec{x}) \leftarrow \exists \vec{y}.\varphi(\vec{x},\vec{y})$, and a vector $\vec{t}$ of
  terms in $\universe(G)$, we say that $q(\vec{t})$ is \emph{entailed} by $G$, denoted $G \models q(\vec{t})$, \iff in
  any model $\I$ of $G$, there is a vector $\vec{t}'$ of terms in $\universe(G)$ such that $\I$ is a model of the query
  instantiation $\varphi(\vec{t}, \vec{t}')$.
\end{definition}

\begin{definition}
  If $G \models q(\vec{t})$ then $\vec{t}$ is called an \emph{answer} to $q$. The \emph{answer set} of $q$ \wrt~$G$ is
  defined as $ans(G,q) = \{\vec{t} \mid G \models q(\vec{t}) \}$.
\end{definition}
We next show how to compute the answer set.  
The following can be shown:
\begin{proposition}\label{prop:rdf_closure}
  Given a graph $G$, $\vec{t}$ is an \emph{answer} to $q$ \iff there exists an instantiation $\varphi(\vec{t},\vec{t}')$
  that is true in the closure of $G$~(\ie all triples in $\varphi(\vec{t},\vec{t}')$ are in $cl(G)$).  
\end{proposition}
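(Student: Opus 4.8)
The plan is to read both directions of the equivalence off the closure, combining soundness and completeness of $\vdash$ (the cited result of~\cite{Munoz07}) with the canonical model that underlies its completeness half. Throughout, recall that $G$ is ground, that rules~2--5 and rule~$(1b)$ never introduce a term outside $\universe(G)\cup\rhodf$, so that $cl(G)$ is a finite ground graph closed under rules~2--5, and that we read the body $\varphi$ as a conjunction of $\rhodf$ triples (any built-in triples in $\varphi$ are evaluated under their fixed interpretation and are orthogonal to the argument below).

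\emph{``If''.} Suppose there is a substitution $\vec{t}'$ over $\vec{y}$ such that every triple $\tau$ of $\varphi(\vec{t},\vec{t}')$ lies in $cl(G)$. For each such $\tau$ we have $G\vdash^{*}\tau$, hence $G\vdash\tau$, hence $G\models\tau$ by soundness. Therefore every model of $G$ satisfies every triple of $\varphi(\vec{t},\vec{t}')$, i.e.\ is a model of that instantiation. Since this single $\vec{t}'$ works in every model of $G$, Definition~\ref{def:entailment} gives $G\models q(\vec{t})$, so $\vec{t}$ is an answer.

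\emph{``Only if''.} Suppose $\vec{t}$ is an answer, i.e.\ $G\models q(\vec{t})$. Let $\I$ be the canonical interpretation built from $cl(G)$: its domains are generated by $\universe(G)\cup\rhodf$, each term is interpreted by itself, and the extensions $\intP{\cdot}$ and $\intC{\cdot}$ are read off from the $\spp$/$\subclass$/$\type$/$\dom$/$\range$/plain triples occurring in $cl(G)$ in the natural way. Because $cl(G)$ is closed under rules~2--5, which mirror exactly the semantic conditions defining a model, $\I$ is a model of $G$ -- this is precisely the interpretation used in the completeness proof of~\cite{Munoz07} -- and it is moreover \emph{tight}: for a ground triple $\tau$ over $voc(G)$ one has $\I\models\tau$ iff $\tau\in cl(G)$ (``if'' by construction; ``only if'' because the extensions were chosen minimally while $cl(G)$ is already closed, so $\I$ satisfies nothing beyond $cl(G)$). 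Now apply Definition~\ref{def:entailment} to $G$ with this particular model $\I$: there is a substitution $\vec{t}'$ of terms in $\universe(G)$ such that $\I$ is a model of $\varphi(\vec{t},\vec{t}')$. Since $\varphi(\vec{t},\vec{t}')$ is a conjunction of ground triples each satisfied by $\I$, tightness forces $\tau\in cl(G)$ for every triple $\tau$ of $\varphi(\vec{t},\vec{t}')$, i.e.\ $\varphi(\vec{t},\vec{t}')$ is true in $cl(G)$.

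The only non-routine ingredient is the ``only if'' direction, and within it the two properties of the canonical model $\I$: that it is a model of $G$, and that it is tight. Both are established, in the plain RDFS setting, in the completeness proof of~\cite{Munoz07}, and are reused here essentially verbatim; everything else is bookkeeping. An alternative packaging replaces $\vec{y}$ by fresh blank nodes $\vec{b}$, observes that $G\models q(\vec{t})$ is equivalent to the graph entailment $G\models\varphi(\vec{t},\vec{b})$, applies completeness to obtain $G\vdash\varphi(\vec{t},\vec{b})$, and then invokes Propositions~\ref{pp1}--\ref{pp2} to conclude that $\varphi(\vec{t},\vec{b})$ is either contained in $cl(G)$ or obtained from it by a single map $\mu$, whence $\vec{t}':=\mu(\vec{b})$ does the job; but this first equivalence again rests on the canonical model, so it is only a reformulation of the same argument.
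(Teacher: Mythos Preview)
The paper does not actually supply a proof of this proposition; it is stated after ``The following can be shown:'' and left without argument. Your write-up therefore fills a gap rather than diverging from an existing proof, and the line you take---soundness for the ``if'' direction, a canonical (Herbrand-style) model read off $cl(G)$ for the ``only if'' direction---is the standard and expected one, matching the completeness machinery of~\cite{Munoz07} that the paper already leans on. The alternative packaging via blank nodes and Propositions~\ref{pp1}--\ref{pp2} that you sketch at the end is also sound and is, as you note, just a repackaging of the same canonical-model argument. Nothing is missing; the only minor cosmetic point is that ``over $voc(G)$'' in the tightness clause should really be ``over $voc(G)\cup\rhodf$'', since query bodies may mention $\rhodf$ terms, but this does not affect the argument.
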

Therefore, we have a simple method to determine $ans(G, q)$. Compute the closure $cl(G)$ of $G$ and store it
into a database, \eg using the method~\cite{iann09}. It is easily verified that any query can be mapped into an SQL
query over the underlying database schema. Hence, $ans(G, q)$ can be determined by issuing such an SQL query to the
database.
\section{RDFS with Annotations}
\label{sec:annotated-rdfs}
This section presents the extension to RDF towards generic annotations.  Throughout this paper we will use an RDF
dataset describing companies, acquisitions between companies and employment history.  This dataset is partially
presented in Figure~\ref{fig:dataset-example}. We consider this data to be annotated with the temporal 
domain, which intuitively means that the annotated triple is valid in dates contained in the annotation interval (the
exact meaning of the annotations will be explained later).  Also, the information in this example can be derived from
Wikipedia and thus we can consider this data also annotated with the provenance domain (although not explicitly
represented in the example). %
We follow the modelling of employment records proposed by DBpedia, for instance a list of employees of Google is
available as members of the class \texttt{\url{http://dbpedia.org/class/yago/GoogleEmployees}}.  For presentation
purposes we use the shorter name \uri{googleEmp}. We also introduce \uri{SkypeCollab} (resp. \uri{EbayCollab}) to represent Skype's (resp. Ebay's) collaborators.
\begin{figure}[t]
  \centering
\begin{small}
\begin{tabular}{|l|}
\hline
{\tiny\ }\\
\fuzzyg{\triple{\uri{youtubeEmp}, \subclass, \uri{googleEmp}}}{[2006,2011]} \\
\fuzzyg{\triple{\uri{steveChen}, \type, \uri{youtubeEmp}}}{[2005,2011]} \\
\fuzzyg{\triple{\uri{chadHurley}, \type, \uri{youtubeEmp}}}{[2005,2010]} \\
\fuzzyg{\triple{\uri{jawedKarim}, \type, \uri{youtubeEmp}}}{[2005,2011]} \\
\fuzzyg{\triple{\uri{jawedKarim}, \type, \uri{paypalEmp}}}{[2000,2005]} \\
\fuzzyg{\triple{\uri{paypalEmp}, \subclass, \uri{ebayEmp}}}{[2002,2011]} \\
\fuzzyg{\triple{\uri{chadHurley}, \type, \uri{paypalEmp}}}{[2002,2005]} \\
\fuzzyg{\triple{\uri{skypeEmp}, \subclass, \uri{ebayEmp}}}{[2005,2011]} \\
\fuzzyg{\triple{\uri{SkypeCollab}, \subclass, \uri{EbayCollab}}}{[2005,2009]} \\
\fuzzyg{\triple{\uri{SkypeCollab}, \subclass, \uri{EbayCollab}}}{[2009,2011]} \\
\fuzzyg{\triple{\uri{niklasZennstrom}, \uri{ceo}, \uri{skype}}}{[2003,2007]} \\
\fuzzyg{\triple{\uri{ceo}, \spp, \uri{worksFor}}}{[-\infty,+\infty]} \\
\fuzzyg{\triple{\uri{larryPage}, \uri{worksFor}, \uri{google}}}{[1998,2011]} \\
\fuzzyg{\triple{\uri{sergeyBrin}, \uri{worksFor}, \uri{google}}}{[1998,2011]} \\
{\tiny\ }\\
\hline
\end{tabular}
\end{small}
  \caption{Company acquisition dataset example}
  \label{fig:dataset-example}
\end{figure}

\subsection{Syntax}

Our approach is to extend triples with annotations, where an annotation is taken from a specific domain.\footnote{The
  readers familiar with the annotated logic programming framework~\cite{Kifer92}, will notice the similarity of the
  approaches.}
  
An \emph{annotated triple} is an expression $\fuzzyg{\tau}{\lambda}$, where $\tau$ is a triple and $\lambda$ is an
\emph{annotation value} (defined below). An \emph{annotated graph} is a finite set of annotated triples.  The intended
semantics of annotated triples depends of course on the meaning we associate to the annotation values. For instance, in
a temporal setting~\cite{gutierrez-etal-2007}, 
\[
\fuzzyg{\triple{\uri{niklasZennstrom}, \uri{ceoOf},    \uri{skype}}}{[2003,2007]}
\]
\nd has intended meaning ``Niklas was CEO of Skype during the period $2003$ to $2007$'', while in the fuzzy
setting~\cite{Straccia09f} $\fuzzyg{\triple{\uri{skype}, \uri{ownedBy}, \uri{bigCompany}}}{0.3}$ has intended meaning
``Skype is owned by a big company to a degree not less than 0.3''.

\subsection{RDFS Annotation Domains}
\label{sec:rdfs-annot-doma}

To start with, let us consider a non-empty set $L$. Elements in $L$ are our annotation values. For example,
in a fuzzy setting, $L = [0,1]$, while in a typical temporal setting, $L$ may be time points or time intervals. 
In our annotation framework, an interpretation will map statements to elements of the annotation domain.  Our semantics
generalises the formulae in Remark~\ref{rem3} by using a well known algebraic structure.

We say that an \emph{annotation domain} for RDFS is an idempotent, commutative semi-ring 
\[
D = \tuple{L,  \oplus, \otimes, \bot, \top} \ ,
\]
\nd where $\oplus$ is $\top$-annihilating~\cite{Buneman10}. That is, for $\lambda, \lambda_{i} \in L$

\begin{enumerate}
\item $\oplus$ is idempotent, commutative, associative;
\item $\otimes$ is commutative and associative;
\item $\bot \oplus \lambda = \lambda$, $\top \otimes \lambda = \lambda$, $\bot \otimes \lambda = \bot$, and $\top \oplus \lambda = \top$; 
\item $\otimes$ is  distributive over $\oplus$, \ie 
$\lambda_{1} \otimes (\lambda_{2} \oplus \lambda_{3}) =  (\lambda_{1} \otimes \lambda_{2}) \oplus (\lambda_{1} \otimes \lambda_{3})$; 
\end{enumerate}

\nd It is well-known that there is a natural partial order on any idempotent semi-ring: an annotation domain 
$D = \tuple{L,  \oplus, \otimes, \bot, \top}$ induces a partial order $\preceq$ over $L$ defined as:

\[
\lambda_{1} \preceq \lambda_{2} \mbox{ \ \iff  \ }  \lambda_{1} \oplus \lambda_{2} = \lambda_{2} \ .
\]

\nd The order $\preceq$ is used to express redundant/entailed/subsumed information. For instance, for temporal intervals, an
annotated triple $\fuzzyg{\triple{s,p,o}}{[2000,2006]}$ entails $\fuzzyg{\triple{s,p,o}}{[2003,2004]}$, as $[2003,2004]
\subseteq [2000,2006]$ (here, $\subseteq$ plays the role of $\preceq$).

\begin{remark}\label{remRes}
In previous work~\cite{Straccia10a,Lopes10a}, an annotation domain was assumed to be a more specific structure, namely a  residuated bounded lattice $D = \tuple{L, \preceq, \land, \lor, \otimes,  \implies, \bot, \top}$. That is, 
\begin{enumerate}
\item  $\tuple{L, \preceq,\land, \lor, \bot, \top}$ is a bounded lattice, where $\bot$ and $\top$ are bottom and top elements, and $\land$ and $\lor$ are meet and join operators;

\item $\tuple{L, \otimes, \top}$ is a commutative monoid.

\item $\implies$ is the so-called residuum of $\otimes$, \ie for all $\lambda_{1},\lambda_{2}, \lambda_{3}$, $\lambda_{1} \otimes \lambda_{3}\preceq \lambda_{2}$ \iff $\lambda_{3} \preceq (\lambda_{1} \implies \lambda_{2})$. 
\end{enumerate}

\nd Note that any bounded residuated lattice satisfies the conditions of an annotation domain. 
In~\cite{Buneman10} it was shown that we may use a slightly weaker structure than residuated lattices for annotation domains.
\end{remark}

\begin{remark}\label{remsemi}
Observe that $\tuple{L,  \preceq, \oplus, \bot, \top}$ is a bounded join semi-lattice.
\end{remark}

\begin{remark}\label{rem:crisp}
  Note that the domain $D_{01} = \tuple{\{0,1\}, \max, \min, 0, 1}$ corresponds to the boolean case. In fact, in this
  case annotated RDFS will turn out to be the same as classical RDFS.
\end{remark}

\begin{remark}\label{rem4}
  We use $\oplus$ to combine information about the same statement.  For instance, in temporal logic, from
  $\fuzzyg{\tau}{[2000,2006]}$ and $\fuzzyg{\tau}{[2003,2008]}$, we infer $\fuzzyg{\tau}{[2000,2008]}$, as $[2000,2008] =
  [2000,$ $2006]\cup [2003,2008]$; here, $\cup$ plays the role of $\oplus$. In the fuzzy context, from $\fuzzyg{\tau}{0.7}$ and
  $\fuzzyg{\tau}{0.6}$, we infer $\fuzzyg{\tau}{0.7}$, as $0.7 = \max(0.7, 0.6)$ (here, $\max$ plays the role of
  $\oplus$).
\end{remark}

\begin{remark}\label{rem5}
  We use $\otimes$ to model the ``conjunction'' of information. In fact, a $\otimes$ is a generalisation of
  boolean conjunction to the many-valued case. In fact, $\otimes$ satisfies also that
 \begin{enumerate}
 \item $\otimes$ is bounded: \ie $\lambda_{1} \otimes \lambda_{2} \preceq \lambda_{1}$.   

 \item $\otimes$ is $\preceq$-monotone, \ie~for $\lambda_{1} \preceq \lambda_{2}$, $\lambda \otimes \lambda_{1} \preceq \lambda \otimes \lambda_{2}$
 
\end{enumerate}
\nd   For instance, on interval-valued temporal logic, from $\fuzzyg{\triple{a,
      \subclass, b}}{[2000,2006]}$ and $\fuzzyg{\triple{b, \subclass, c}}{[2003,2008]}$, we will infer
  $\fuzzyg{\triple{a, \subclass, c}}{[2003,2006]}$, as $[2003,2006] = [2000,2006] \cap [2003,2008]$; here, $\cap$ plays
  the role of $\otimes$.\footnote{As we will see, $\oplus$ and $\otimes$ may be more involved.} In the fuzzy context, one
  may chose any t-norm~\cite{HajekP98,Klement00}, \eg product, and, thus, from $\fuzzyg{\triple{a, \subclass, b}}{0.7}$ and $\fuzzyg{\triple{b,
      \subclass, c}}{0.6}$, we will infer $\fuzzyg{\triple{a, \subclass, c}}{0.42}$, as $0.42 = 0.7 \cdot 0.6)$ (here,
  $\cdot $ plays the role of $\otimes$). 
\end{remark}

\begin{remark}\label{remdistr}
Observe that the distributivity condition is used to guarantee that \eg we obtain the same annotation $\lambda \otimes (\lambda_{2} \oplus \lambda_{3}) = (\lambda_{1} \otimes \lambda_{2}) \oplus (\lambda_{1} \otimes \lambda_{3})$ of the triple 
$\triple{a, \subclass, c}$ that can be inferred from triples 
$\fuzzyg{\triple{a, \subclass, b}}{\lambda_{1}}$, 
  $\fuzzyg{\triple{b, \subclass, c}}{\lambda_{2}}$ and 
  $\fuzzyg{\triple{b, \subclass,  c}}{\lambda_{3}}$.

\end{remark}

\nd Finally, note that, conceptually, in order to build an annotation domain, one has to:
\begin{enumerate}
\item determine the set of annotation values $L$ (typically a countable set\footnote{Note that one may use XML decimals in $[0,1]$ in place of real numbers for the fuzzy domain.}), 
identify the top and bottom elements;

\item define a suitable operations $\otimes$ and $\oplus$ that acts as ``conjunction'' and ``disjunction'' function, to support the intended inference over
  schema axioms, such as 
 \begin{quote}
  ``from $\fuzzyg{\triple{a, \subclass, b}}{\lambda}$ and $\fuzzyg{\triple{b, \subclass,
      c}}{\lambda'}$ infer $\fuzzyg{\triple{a, \subclass, c}}{\lambda \otimes \lambda'}$''
\end{quote}       

\nd and
 \begin{quote}
  ``from $\fuzzyg{\tau}{\lambda}$ and $\fuzzyg{\tau}{\lambda'}$ infer $\fuzzyg{\tau}{\lambda \oplus \lambda'}$''
\end{quote}       

\end{enumerate}

\subsection{Semantics}
Fix an annotation domain $D = \tuple{L,  \oplus, \otimes, \bot, \top}$. 
Informally, an interpretation $\I$ will assign to a triple
$\tau$ an element of the annotation domain $\lambda \in L$.
Formally, an \emph{annotated interpretation} $\I$ over a vocabulary $V$ is a tuple
$$\I=\tuple{\Delta_{R}, \Delta_{P}, \Delta_{C}, \Delta_{L}, \intP{\cdot}, \intC{\cdot}, \int{\cdot}}$$ where $\Delta_{R},
\Delta_{P}, \Delta_{C}, \Delta_{L}$ are interpretation domains of $\I$ and $\intP{\cdot}, \intC{\cdot}, \int{\cdot}$ are
interpretation functions of $\I$.

They have to satisfy:
\begin{enumerate}
\item $\Delta_{R}$ is a nonempty finite set of resources, called the domain or universe of $\I$;

\item $\Delta_{P}$ is a finite set of property names (not necessarily disjoint from $\Delta_{R}$);

\item $\Delta_{C} \subseteq \Delta_{R}$ is a distinguished subset of $\Delta_{R}$ identifying if a resource denotes a
  class of resources;

\item $\Delta_{L} \subseteq \Delta_{R}$, the set of literal values, $\Delta_{L}$ contains all plain literals in $\AL
  \cap V$;

\item\label{item5} $\intP{\cdot}$ maps each property name $p \in \Delta_{P}$ into a function $\intP{p}: \Delta_{R} \times
  \Delta_{R}\to L$, \ie assigns an annotation value to each pair of resources;

\item\label{item6} $\intC{\cdot}$ maps each class $c \in \Delta_{C}$ into a function $\intC{c} : \Delta_{R} \to L$, \ie assigns an
  annotation value representing class membership in $c$ to every resource;

\item $\int{\cdot}$ maps each $t \in \AUL \cap V$ into a value $\int{t} \in \Delta_{R} \cup \Delta_{P}$ and such that
  $\int{\cdot}$ is the identity for plain literals and assigns an element in $\Delta_{R}$ to each element in $\AL$.
\end{enumerate}

\nd An interpretation $\I$ is a \emph{model} of an annotated ground graph $G$, denoted $\I \models G$, \iff $\I$ is an interpretation over
the vocabulary $\rhodf \cup \universe(G)$ that satisfies the following conditions:

{\small
 \setlength{\leftmargini}{3mm}
\begin{description}
\item[Simple:] \
\begin{enumerate}
\setlength{\itemindent}{-5mm}
\item $\fuzzyg{\triple{s, p, o}}{\lambda} \in G$ implies $\int{p} \in\Delta_{P}$ and $\intP{\int{p}}(\int{s}, \int{o})
  \succeq \lambda$;
\end{enumerate}
\item[Subproperty:] \
\begin{enumerate}
\setlength{\itemindent}{-5mm}
\item $\intP{\int{ \spp}}(p, q) \otimes \intP{\int{ \spp}}(q, r) \preceq \intP{\int{ \spp}}(p, r)$;  
\item $\intP{\int{ p}}(x, y) \otimes \intP{\int{ \spp}}(p, q) \preceq \intP{\int{ q}}(x, y)$;
\end{enumerate}
\item[Subclass:] \
\begin{enumerate}
\setlength{\itemindent}{-5mm}
\item $\intP{\int{ \subclass}}(c, d) \otimes \intP{\int{ \subclass}}(d, e) \preceq \intP{\int{ \subclass}}(c, e)$;  
\item $\intC{\int{c}}(x) \otimes \intP{\int{ \subclass}}(c, d) \preceq \intP{\int{ d}}(x)$;  
\end{enumerate}
\item[Typing I:] \
\begin{enumerate}
\setlength{\itemindent}{-5mm}
\item $\intC{c}(x) =  \intP{\int{\type}}(x,c)$;
\item $\intP{\int{\dom}}(p, c) \otimes  \intP{p}(x,y) \preceq  \intC{c}(x)$;
\item $\intP{\int{\range}}(p, c) \otimes  \intP{p}(x,y) \preceq  \intC{c}(y)$;
\end{enumerate}
\item[Typing II:] \
\begin{enumerate}
\setlength{\itemindent}{-5mm}
\item For each $\eee \in \rhodf$, $\int{\eee} \in \Delta_{P}$;
\item $\intP{\int{\spp}}(p, q)$ is defined only for $p,q \in \Delta_{P}$;
\item $\intC{\int{\subclass}}(c, d)$ is defined only for $c,d \in \Delta_{C}$;
\item $\intP{\int{\dom}}(p, c)$ is defined only for $p \in \Delta_{P}$ and $c \in \Delta_{C}$;
\item  $\intP{\int{\range}}(p, c)$ is defined only for $p \in \Delta_{P}$ and $c \in \Delta_{C}$;
\item $\intP{\int{\type}}(s, c)$ is defined only for $c \in \Delta_{C}$.
\end{enumerate}
\end{description}
}

\nd Intuitively, a triple $\fuzzyg{\triple{s, p, o}}{\lambda}$ is %
satisfied by %
$\I$ if $(s, o)$ belongs to the extension of $p$ to a ``wider'' extent than $\lambda$.
Note that the major differences from the classical setting relies on items~\ref{item5} and~\ref{item6}. 

We further note
that the classical setting is as the case in which the annotation domain is $D_{01}$ where $L=\{0,1\}$.

Finally, entailment among annotated ground graphs $G$ and $H$ is as usual.
Now, $G \models H$, where $G$ and $H$ may contain blank nodes, \iff for any grounding $G'$ of $G$ there is a grounding $H'$ of $H$ such that $G' \models H'$.

\begin{remark}\label{remBot}
  Note that we always have that $G \models \fuzzyg{\tau}{\bot}$. Clearly, triples of the form $\fuzzyg{\tau}{\bot}$ are
  uninteresting and, thus, in the following we do not consider them as part of the language.
\end{remark}

\nd As for the crisp case, it can be shown that:
\begin{proposition}
Any annotated RDFS graph has a finite model.
\end{proposition}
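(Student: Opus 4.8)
The plan is to construct, for any annotated graph $G$, an explicit finite model. First I would observe that, since entailment for graphs with blank nodes reduces via groundings to the ground case, it suffices to build a finite model of a ground annotated graph; moreover by Remark~\ref{remBot} we may ignore triples annotated with $\bot$. The natural candidate is a \emph{canonical (Herbrand-style) model} built over the vocabulary actually occurring in $G$: take $\Delta_{R} = \Delta_{P} = \universe(G) \cup \rhodf$, with $\Delta_{C}$ the set of terms appearing in a class position (object of $\type$ or $\subclass$, or in $\dom$/$\range$ target position) and in the $\rhodf$ vocabulary as needed, $\Delta_{L} = \AL \cap V$, and $\int{\cdot}$ the identity on this vocabulary. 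All four domains are finite because $G$ is a finite set of triples over a finite slice of $\AUBL$. The interpretation functions $\intP{p}$ and $\intC{c}$ are then defined by reading off annotation values from the \emph{closure} of $G$ under the annotated inference rules.

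More precisely, I would define $\intP{\int{p}}(x,y) = \bigoplus \{\lambda \mid \fuzzyg{\triple{x,p,y}}{\lambda} \text{ is derivable from } G\}$, with the empty join taken to be $\bot$, and similarly $\intC{\int{c}}(x) = \intP{\int{\type}}(x,c)$ to make Typing~I(1) hold by definition. The join here is well-defined and lands in $L$ because only finitely many distinct triples $\triple{x,p,y}$ over the fixed finite vocabulary can ever appear, and $\oplus$ is associative, commutative and idempotent (so the value is the join of a finite subset of $L$, which exists in the bounded join semi-lattice of Remark~\ref{remsemi}). By construction $\I$ satisfies the Simple condition (the derivable annotation on $\fuzzyg{\triple{s,p,o}}{\lambda}\in G$ is $\succeq \lambda$ since $\lambda$ is one of the joined values) and the Typing~II typing-discipline conditions (restrict $\intP{\cdot}$, $\intC{\cdot}$ to the appropriate domains, and derivability respects these positions by the shape of the rules).

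The substantive step is checking the Subproperty, Subclass and Typing~I propagation inequalities, e.g.\ $\intP{\int{\spp}}(p,q)\otimes\intP{\int{q}}(x,y)\text{-type}$ constraints, $\intP{\int{\subclass}}(c,d)\otimes\intC{\int{c}}(x)\preceq\intC{\int{d}}(x)$, the transitivity inequalities, and the domain/range ones. Here I would argue: let $\lambda_1 = \intP{\int{\subclass}}(c,d)$ and $\lambda_2 = \intC{\int{c}}(x)$; these are finite joins $\bigoplus_i \mu_i$ and $\bigoplus_j \nu_j$ of derivable annotations. Using distributivity of $\otimes$ over $\oplus$, $\lambda_1\otimes\lambda_2 = \bigoplus_{i,j}(\mu_i\otimes\nu_j)$, and each $\mu_i\otimes\nu_j$ is the annotation of a triple $\fuzzyg{\triple{x,\type,d}}{\mu_i\otimes\nu_j}$ derivable by one application of the annotated Subclass rule from derivable premises; hence $\mu_i\otimes\nu_j$ is among the values joined to form $\intC{\int{d}}(x)$, so the whole join $\lambda_1\otimes\lambda_2$ is $\preceq \intC{\int{d}}(x)$ by the semi-lattice order. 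The transitivity conditions are handled the same way using associativity of $\otimes$ and the transitivity of the annotated $\spp$/$\subclass$ rules; the $\dom$/$\range$ conditions likewise. I expect the main obstacle to be purely bookkeeping: pinning down exactly which "derivable" closure to use (it must include the annotated analogues of all of rules~2--5, closed under $\oplus$-aggregation of like triples, and one must confirm this closure is finite over the fixed vocabulary and that every required model inequality corresponds to a single rule instance), and making sure the distributivity argument is applied in the correct direction so that one gets $\preceq$ rather than equality. Once the closure and the read-off definitions are set up, every model condition is verified by a one-rule-instance-plus-distributivity argument, and finiteness is immediate from finiteness of $\universe(G)$.
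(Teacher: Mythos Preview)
Your proposal is correct, but it takes a substantially more elaborate route than the paper. The paper simply builds the \emph{trivial} finite model: take $\Delta_{R}=\Delta_{P}=\Delta_{C}=\Delta_{L}$ to be the (finite) set of literals occurring in $G$, let $\int{\cdot}$ collapse every non-literal term onto a fixed element, and set $\intP{p}(x,y)=\top$ and $\intC{c}(x)=\top$ everywhere. Every semantic condition of the form $\ldots\otimes\ldots\preceq\ldots$ then holds because the right-hand side is $\top$, and the Simple condition holds because $\top\succeq\lambda$ for any $\lambda$. No closure, no distributivity, no bookkeeping.

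What you build instead is essentially the \emph{canonical} (least) model, read off from the annotated closure and glued together via distributivity of $\otimes$ over $\oplus$. That argument is sound (the closure is finite, each triple carries a unique maximal annotation after rule~(6), and each model inequality does correspond to a single rule instance), and it is exactly the construction one would want for proving \emph{completeness} of the deductive system. But for the bare existence of \emph{some} finite model---which is all this proposition claims, in order to conclude that consistency is never an issue---it is considerable overkill. The paper's $\top$-model gets the result in two lines; your construction buys you a much stronger object (a model that is tight on every triple) at the cost of the distributivity and closure-finiteness verifications you correctly anticipate.
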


\begin{proof}
  Let $G$ be an annotated graph over domain $D$.  Let $Lit = L \cap
  \mathit{universe}(G)$ be the set of literals present in $G$ and $l_0 \in Lit$.  We define the interpretation
  $\mathcal{I}$ over $V$ as follows: {\small
    \begin{enumerate}
    \item $\Delta_{R} = \Delta_{P} = \Delta_{C} = Lit = \Delta_{L} = Lit$;
    \item $\forall x,y,p\ \intP{p}(x,y) \mapsto \top$;
    \item $\forall x,c\ \intC{c}(x) \mapsto \top$;
    \item
      \begin{enumerate}
      \item $\forall l \in L, \int{l} = l$
      \item $\forall x \in V, \int{l} = l_0$
      \end{enumerate}
    \end{enumerate}
  } 
  \nd It is easy to see that $\mathcal{I}$ satisfies all the conditions of RDF-satisfiability and thus is a model
  of $G$.
\end{proof}
\noindent Therefore, we do not have to care about consistency.

\subsection{Examples of primitive domains}
\label{sec:domain-examples}

\noindent To demonstrate the power of our approach, we illustrate its application to some domains:
fuzzy~\cite{Straccia09f}, temporal~\cite{gutierrez-etal-2007} and
provenance.%

\subsubsection{The fuzzy domain}
\noindent To model fuzzy RDFS~\cite{Straccia09f} we may define the annotation domain as $D_{[0,1]} = \tuple{[0,1], \max,
  \otimes, 0, 1}$ where $\otimes$ is any continuous t-norm on $[0,1]$.
\begin{example}
  \label{exUSf}
  Adapting our example of employment records to the fuzzy domain we can state the following: Skype collaborators are also
  Ebay collaborators to some degree since Ebay possesses 30\% of Skype's shares, and also that Toivo is a part-time Skype collaborator: {\small
\[
\begin{array}{l}
  \fuzzyg{\triple{\uri{SkypeCollab}, \subclass, \uri{EbayCollab}}}{0.3} \\
  \fuzzyg{\triple{\uri{toivo}, \type, \uri{SkypeCollab}}}{0.5} 
\end{array}
\]
}
\nd Then, \eg~under the product t-norm $\otimes$, we can infer the following triple:
  {\small
    \[
    \begin{array}{l}
      \fuzzyg{\triple{\uri{toivo}, \type, \uri{EbayCollab}}}{0.15} 
    \end{array}
    \]
  } 

\end{example}
\subsubsection{The temporal domain}
\label{sec:temporal} 

Most of the semantic information on the Web deals with time in an implicit or explicit way. Social relation graphs,
personal profiles, information about various entities continuously evolve and do not remain static. This dynamism can
take various forms: certain information is only valid in a specific time interval (\eg somebody's address), some data
talks about events that took place at a specific time point in the past (\eg beginning of a conference), some data
describe eternal truth (\eg tigers are mammals), or truth that is valid from a certain point of time onwards forever
(e.g Elvis is dead), or creation or change dates of online information items (\eg the edit history of a wiki
page).
We believe that treating web data in a time-sensitive way is one of the biggest steps towards turning the Semantic Web
idea into reality.

\paragraph*{Precise temporal information}
For our representation of the temporal domain we aim at using non-discrete time as it is necessary to model temporal
intervals with any precision.  
however, for presentation purposes we will show the dates as years only.

\paragraph{Modelling the temporal domain}
\label{sec:temporal-modelling}

To start with, \emph{time points} are elements of 
the value space~$\mathbb{Q} \cup\ \{-\infty,
+\infty\}$.
A \emph{temporal interval} is a non-empty interval $[\alpha_1, \alpha_2]$, where
$\alpha_{i}$ are time points.  An empty interval is denoted as $\emptyset$.
We define a partial order on intervals as $I_{1} \leq I_{2} \mbox{ \iff\ } I_{1} \subseteq I_{2}$.  The intuition here is
that if a triple is true at time points in $I_{2}$ and $I_{1} \leq I_{2}$ then, in particular, it is true at any time
point in $I_{1} \neq \emptyset$.

Now, apparently the set of intervals would be a candidate for $L$, which however is not the case. The reason is that,
\eg in order to represent the upper bound interval of $\fuzzyg{\tau}{[1,5]}$ and $\fuzzyg{\tau}{[8,9]}$ we rather need
the union of intervals, denoted $\{[1,5], [8,9]\}$, meaning that a triple is true both in the former as well as
in the latter interval. %
Now, we define $L$ as (where $\bot =
\{\emptyset\}, \top = \{[-\infty,+\infty]\}$) {\small
\[
  L = \{t \mid t \mbox{\small \ is a finite set of disjoint temporal intervals} \}
  \cup \{\bot,\top\} \ .
\]
}
\noindent Therefore, a \emph{temporal term} is an element $t \in L$, \ie a set of pairwise disjoint time intervals. We
allow to write $[\alpha]$ as a shorthand for $[\alpha, \alpha]$, $\fuzzyg{\tau}{\alpha}$ as a shorthand of
$\fuzzyg{\tau}{\{[\alpha]\}}$ and $\fuzzyg{\tau}{[\alpha, \alpha']}$ as a shorthand of $\fuzzyg{\tau}{\{[\alpha,
  \alpha']\}}$. Furthermore, on $L$ we define the following partial order:
\[
t_1 \preceq t_2 \mbox{ \iff } \forall I_{1} \in t_1 \exists I_2 \in t_2, \mbox{ such that } I_1 \leq I_2 \ .
\]
\noindent Please note that $\preceq$ is the Hoare order on power sets~\cite{Abramsky94}, which is a pre-order. For the
anti-symmetry property, assume that $t_1 \preceq t_2$ and $t_2 \preceq t_1$: so for $I_{1} \in t_{1}$, there is $I_{2}
\in t_{2}$ for which there is $I_{3} \in t_{1}$ such that $I_{1} \subseteq I_{2} \subseteq I_{3}$. But, $t_{1}$ is
maximal and, thus, $I_{1} = I_{3} = I_{2}$. So, $t_{1} = t_{2}$ and, thus, $\preceq$ is a partial order. Similarly as
for time intervals, the intuition for $\preceq$ is that if a triple is true at time points in intervals in $t_{2}$ and
$t_{1} \preceq t_{2}$, then, in particular, it is true at any time point in intervals in $t_{1}$. Essentially, if $t_{1}
\preceq t_{2}$ then a temporal triple $\fuzzyg{\tau_{2}}{t_{2}}$ is true to a larger ``temporal extent'' than the
temporal triple $\fuzzyg{\tau_{1}}{t_{1}}$. It can also be verified that $\tuple{L, \preceq, \bot, \top}$ is a bounded
lattice. Indeed, to what concerns us, the partial order $\preceq$ induces the following join ($\oplus$) operation on
$L$. Intuitively, if a triple is true at $t_{1}$ and also true at $t_{2}$ then it will be true also for time points
specified by $t_{1} \oplus t_{2}$ (a kind of union of time points). As an example, if $\fuzzyg{\tau}{\{[2,5], [8,12]\}}$ and
$\fuzzyg{\tau}{\{[4,6], [9,15]\}}$ are true then we expect that this is the same as saying that $\fuzzyg{\tau}{\{[2,6],
  [8,15]\}}$ is true. The join operator will be defined in such way that $\{[2,5], [8,12]\} \oplus \{[4,6], [9,15]\} =
\{[2,6], [8,15]\}$. Operationally, this means that $t_{1} \oplus t_{2}$ will be obtained as follows: \ii{i} take the union
of the sets of intervals $t = t_{1} \cup t_{2}$; and \ii{ii} join overlapping intervals in $t$ until no more overlapping
intervals can be obtained. Formally, {\small
\[
t_1 \oplus t_2 =  \inf \{t \mid t \succeq t_{i}, i=1,2 \} \ .
\]
}
\nd It remains to define the meet $\otimes$ over sets of intervals. Intuitively, we would like to support
inferences such as ``from $\fuzzyg{\triple{a, \subclass, b}}{\{[2,5], [8,12]\}}$ and $\fuzzyg{\triple{b, \subclass, c}}{
  \{[4,6], [9,15]\}}$ infer $\fuzzyg{\triple{a, \subclass, b}}{\{[4,5], [9,12]\}}$'', where $\{[2,5], [8,12]\} \otimes
\{[4,6]$, $[9,15]\}= \{[4,5], [9,12]\}$.  We get it by means of {\small
\[
t_1 \otimes t_2 = \sup \{t \mid t \preceq t_{i}, i=1,2 \}  \ .
\]
}
\nd Note that here the t-norm used for modelling ``conjunction'' coincides with the lattice meet operator.

\begin{example}\label{exU1}
  Using the data from our running example, we can infer that
  $$\fuzzyg{\triple{\uri{chadHurley}, \type, \uri{googleEmp}}}{[2006, 2010]}$$
  where 
  $$\{[2005,2010]\} \otimes \{[2006,2011]\} = \{[2006,2010]\}$$
\end{example}

\noindent In~\cite{gutierrez-etal-2007} are described some further features such as a ``Now'' time point (which is just
a defined time point in $D_{T}$) and anonymous time points, allowing to state that a triple is true at some
point. Adding anonymous time points would require us to extend the lattice by appropriate operators, \eg $[4, T] \oplus
[T, 8] = [4,8]$ (where $T$ is an anonymous time point), etc. %

\subsubsection{Provenance domain}
\label{sec:provenance-domain}

Identifying provenance of triples is regarded as an important issue for dealing with the heterogeneity of Web Data, and several proposals have been made to model provenance~\cite{Ding_Finin_Peng_PinheirodaSilva_McGuinness:05,Carroll_Bizer_Hayes_Stickler:05,Flouris_Fundulaki_Pediaditis_Theoharis_Christophides:09,Hartig:09}. Typically, provenance is identified by a URI, usually the URI of the document in which the triples are defined or possibly a URI identifying a name graph. However, provenance of inferred triples is an issue that have been little tackled in the literature~\cite{Delbru_Polleres_Tummarello_Decker:08,Flouris_Fundulaki_Pediaditis_Theoharis_Christophides:09}. We propose to address this issue by introducing an annotation domain for provenance.

The intuition behind our approach is similar to the one of~\cite{Delbru_Polleres_Tummarello_Decker:08} and~\cite{Flouris_Fundulaki_Pediaditis_Theoharis_Christophides:09} where provenance of an inferred triple is defined as the aggregation of provenances of documents that allow to infer that triple. For instance, if a document $d_1$ defines $\triple{\uri{youtubeEmp}, \subclass, \uri{googleEmp}}{:}d_1$ and a second document $d_2$ defines $\triple{\uri{chadHurley}, \type, \uri{youtubeEmp}}{:}d_2$, then we can infer $\triple{\uri{chadHurley}, \type, \uri{googleEmp}}{:}d_1\land d_2$.

Such a mechanism makes sense and would fit well as a meet operator, but these approaches do not address the join operation which should take place when identical triples are annotated differently. We improve this with the following formalisation.

\paragraph{Modelling the provenance domain}
\label{sec:provenance-modelling}

We start from a countably infinite set of \emph{atomic provenances} \AP which, in practice, can be represented by URIs. We consider the propositional formulae made from symbols in \AP (atomic propositions), logical \textsl{or} ($\lor$) and logical \textsl{and} ($\land$), for which we have the standard entailment $\models$. A \emph{provenance value} is an equivalent class for the logical equivalence relation, \ie~the set of annotation values is the quotient set of $\AP$ by the logical equivalence. The order relation is $\models$, $\otimes$ and $\oplus$ are $\land$ and $\lor$ respectively. We set $\top$ to \textsl{true} and $\bot$ to \textsl{false}.

\begin{example}
  Consider the following data:
\[
\begin{array}{l}
\fuzzyg{\triple{\uri{chadHurley}, \uri{worksFor}, \uri{youtube}}}{\texttt{\footnotesize chad}}\\
\fuzzyg{\triple{\uri{chadHurley}, \type, \uri{Person}}}{\texttt{\footnotesize chad}}\\
\fuzzyg{\triple{\uri{youtube}, \type, \uri{Company}}}{\texttt{\footnotesize chad}}\\
\fuzzyg{\triple{\uri{Person}, \subclass, \uri{Agent}}}{\texttt{\footnotesize foaf}} \\
\fuzzyg{\triple{\uri{worksFor}, \dom, \uri{Person}}}{\texttt{\footnotesize workont}} \\
\fuzzyg{\triple{\uri{worksFor}, \range, \uri{Company}}}{\texttt{\footnotesize workont}} 
\end{array}
\]
We can deduce that \uri{chadHurley} is an \uri{Agent} in two different ways: using the first, fourth and fifth statement
or using the second and fourth statement. So, it is possible to infer the following annotated triple:
\begin{tabbing}
 \triple{\uri{chadHurley}, \type, \uri{Agent}}:\=(\texttt{\footnotesize chad}$\land$\texttt{\footnotesize foaf}$\land$\texttt{\footnotesize workont})\\
 \>$\lor$(\texttt{\footnotesize chad}$\land$\texttt{\footnotesize foaf})
\end{tabbing}
However, since $(\texttt{\footnotesize chad}\land\texttt{\footnotesize foaf}\land\texttt{\footnotesize workont})\lor(\texttt{\footnotesize chad}\land\texttt{\footnotesize foaf})$ is logically
equivalent to $\texttt{\footnotesize chad}\land\texttt{\footnotesize foaf}$, the aggregated inference can be collapsed into:
\[
\fuzzyg{\triple{\uri{chadHurley}, \type, \uri{Agent}}}{\texttt{\footnotesize chad}\land\texttt{\footnotesize foaf}}
\]
\end{example}

\nd Intuitively, a URI denoting a provenance can also denote a RDF graph, either by using a named graph approach, or implicitly by getting a RDF document by dereferencing the URI. In this case, we can see the conjunction operation as a union of graphs and disjunction as an intersection of graphs.

\paragraph{Comparison with other approaches}

\cite{Delbru_Polleres_Tummarello_Decker:08} does not formalise the semantics and properties of his aggregation operation (simply denoted by $\land$) nor the exact rules that should be applied to correctly and completely reason with provenance. Query answering is not tackled either.

The authors of~\cite{Flouris_Fundulaki_Pediaditis_Theoharis_Christophides:09} are providing more insight on the formalisation and actually detail the rules by reusing (tacitly)~\cite{Munoz07}. They also provide a formalisation of a simple query language. However, the semantics they define is based on a strong restriction of $\rhodf$\footnote{Remember that \rhodf is already a restriction of RDFS.}. 

As an example, they define the answers to the query $(?x, \type, ?y, ?c)$ as the tuples $(X,Y,C)$ such that there is a
triple $(X, \type, Y, C)$ which can be inferred from only the application of rules (3a) and (3b). This means that a
domain or range assertion would not provide additional answers to that type of query.

Finally, none of those papers discuss the possibility of universally true statements (the $\top$ provenance) or the statements from unknown provenance ($\bot$). They also do not consider mixing non-annotated triples with annotated ones as we do in Section~\ref{sec:classical_triples}.

\subsection{Deductive system}
An important feature of our framework is that we are able to provide a deductive system in the style of the one for
classical RDFS.  Moreover, \emph{the schemata of the rules are the same for any annotation domain} (only support for the
domain dependent $\otimes$ and $\oplus$ operations has to be provided) and, thus, are amenable to an easy implementation
on top of existing systems.  The rules are arranged in groups that capture the semantic conditions of models, $A,B,C,X$
and $Y$ are meta-variables representing elements in $\AUBL$ and $D,E$ represent elements in $\AUL$.  The rule set
contains two rules, $(1a)$ and $(1b)$, that are the same as for the crisp case, while rules $(2a)$ to $(5b)$ are the
annotated rules homologous to the crisp ones. Finally, rule $(6)$ is specific to the annotated case.

Please note that
rule $(6)$ is destructive \ie~this rule removes the premises as the conclusion is inferred.  We also assume that a rule
is not applied if the consequence is of the form $\fuzzyg{\tau}{\bot}$ (see Remark~\ref{remBot}).  It can be shown that:

{\scriptsize
\renewcommand{\arraystretch}{2}
  \begin{enumerate}
  \item Simple: \\[0.5em]
    \begin{tabular}{ll}
      $(a)$ & $\frac{G}{G'}$ for a map $\mu:G' \to G$ \\ $(b)$ & $\frac{G}{G'}$ for  $G' \subseteq G$ 
    \end{tabular}
  \item Subproperty: \\[0.5em]
    \begin{tabular}{ll}
      $(a)$ & $\frac{\fuzzyg{\triple{A,  \spp, B}}{\lambda_{1}},  \fuzzyg{\triple{B,  \spp, C}}{\lambda_{2}}}{\fuzzyg{\triple{A,  \spp, C}}{\lambda_{1}\otimes \lambda_{2}}}$ \\
      $(b)$ & $\frac{\fuzzyg{\triple{D, \spp, E}}{\lambda_{1}},  \fuzzyg{\triple{X, D, Y}}{\lambda_{2}}}{\fuzzyg{\triple{X, E, Y}}{\lambda_{1} \otimes \lambda_{2}}}$ 
    \end{tabular}
  \item Subclass: \\[0.5em]
    \begin{tabular}{ll}
      $(a)$ & $\frac{\fuzzyg{\triple{A, \subclass, B}}{\lambda_{1}},  \fuzzyg{\triple{B, \subclass, C}}{\lambda_{2}}}{\fuzzyg{\triple{A, \subclass, C}}{\lambda_{1} \otimes \lambda_{2}}}$ \\
      $(b)$ & $\frac{\fuzzyg{\triple{A, \subclass, B}}{\lambda_{1}},  \fuzzyg{\triple{X, \type, A}}{\lambda_{2}}}{\fuzzyg{\triple{X, \type, B}}{\lambda_{1} \otimes \lambda_{2}}}$ 
    \end{tabular}
  \item Typing: \\[0.5em]
    \begin{tabular}{ll}
      $(a)$ & $\frac{\fuzzyg{\triple{D, \dom, B}}{\lambda_{1}},  \fuzzyg{\triple{X, D, Y}}{\lambda_{2}}}{\fuzzyg{\triple{X, \type, B}}{\lambda_{1} \otimes \lambda_{2}}}$ \\
  $(b)$ & $\frac{\fuzzyg{\triple{D, \range, B}}{\lambda_{1}},  \fuzzyg{\triple{X, D, Y}}{\lambda_{2}}}{\fuzzyg{\triple{Y, \type, B}}{\lambda_{1} \otimes \lambda_{2}}}$
    \end{tabular}
  \item Implicit Typing: \\[0.5em]
    \begin{tabular}{ll}
      $(a)$ & $\frac{\fuzzyg{\triple{A, \dom, B}}{\lambda_{1}},  \fuzzyg{\triple{D,  \spp, A}}{\lambda_{2}},
        \fuzzyg{\triple{X, D, Y}}{\lambda_{3}}}{\fuzzyg{\triple{X, \type, B}}{\lambda_{1} \otimes \lambda_{2} \otimes
          \lambda_{3}}}$ \\
      $(b)$ & $\frac{\fuzzyg{\triple{A, \range, B}}{\lambda_{1}},  \fuzzyg{\triple{D,  \spp, A}}{\lambda_{2}}, \fuzzyg{\triple{X, D, Y}}{\lambda_{3}}}{\fuzzyg{\triple{Y, \type, B}}{\lambda_{1} \otimes \lambda_{2} \otimes \lambda_{3}}}$ 
    \end{tabular}
  \item  Generalisation:\\[0.5em]
    \begin{tabular}{l}
      $\frac{\fuzzyg{\triple{X, A, Y}}{\lambda_{1}},  \fuzzyg{\triple{X, A, Y}}{\lambda_{2}}}{\fuzzyg{\triple{X, A, Y}}{\lambda_{1} \oplus \lambda_{2}}}$
    \end{tabular}
  \end{enumerate}
}

\begin{proposition}[Soundness and completeness]
  For an annotated graph, the proof system $\vdash$ is sound and complete for $\models$, that is, (1) if $G \vdash
  \fuzzyg{\tau}{\lambda}$ then $G \models  \fuzzyg{\tau}{\lambda}$ and (2) if $G \models  \fuzzyg{\tau}{\lambda}$ then there is $\lambda' \succeq   \lambda$ with $G \vdash  \fuzzyg{\tau}{\lambda'}$.
\end{proposition}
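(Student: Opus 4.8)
The plan is to prove the two directions by the classical recipe, adapted to carry annotations. Direction (1) (soundness) is handled by induction on the length of a proof $G=G_{0},G_{1},\dots,G_{n}$, checking that each rule preserves the relation ``$\I\models\cdot$''; direction (2) (completeness) is handled by building from $G$ a canonical model $\I_{G}$ out of the closure of $G$ under rules $(2a)$--$(6)$, and then reading the maximal derivable annotation of $\tau$ off the value $\I_{G}$ assigns to it. As in the crisp case, it suffices to treat ground graphs: by the obvious analogue of Proposition~\ref{pp1}, rule $(1a)$ can always be postponed to the very end of a proof, so the blank-node bookkeeping is exactly as in~\cite{Munoz07} and is orthogonal to the annotation machinery.

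\textbf{Soundness.} I would show by induction that $\I\models G$ implies $\I\models G_{i}$ for every $i$; since $\fuzzyg{\tau}{\lambda}$ occurs in $G_{n}$ (modulo the final $(1a)$, treated as above), this yields $G\models\fuzzyg{\tau}{\lambda}$. Rules $(1a)$ and $(1b)$ are precisely the crisp simple-entailment steps. For each combination rule $(2a)$--$(5b)$, if $\I$ satisfies the premises then the relevant entries of $\intP{\cdot}$ and $\intC{\cdot}$ are $\succeq$ the stated annotations, so by $\preceq$-monotonicity and commutativity of $\otimes$ (Remark~\ref{rem5}) their $\otimes$-combination is $\succeq\lambda_{1}\otimes\lambda_{2}$ (resp.\ $\otimes\lambda_{3}$), and the matching semantic condition (\textsf{Subproperty}/\textsf{Subclass}/\textsf{Typing I}) then forces the conclusion triple to be satisfied. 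For the destructive rule $(6)$, dropping the two premises can only make a graph easier to satisfy, while any model of both premises satisfies $\fuzzyg{\triple{X,A,Y}}{\lambda_{1}\oplus\lambda_{2}}$ because $\oplus$ is the join of $\preceq$ (Remark~\ref{remsemi}); hence again $\I\models G_{i}$ implies $\I\models G_{i+1}$. The convention that a rule is not fired when its conclusion is $\fuzzyg{\tau}{\bot}$ is harmless since $\fuzzyg{\tau}{\bot}$ holds in every interpretation (Remark~\ref{remBot}).

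\textbf{Completeness.} Let $cl(G)$ be obtained by applying rules $(2a)$--$(6)$ to saturation, so that each triple $\tau$ over $\rhodf\cup\universe(G)$ carries a single annotation $\ell(\tau)\in L$, equal to the join of all annotations derivable for $\tau$ (with $\ell(\tau)=\bot$ if there is none); then $G\vdash\fuzzyg{\tau}{\lambda}$ iff $\lambda\preceq\ell(\tau')$ for the appropriate $\tau'$, the annotated analogue of Proposition~\ref{pp2}. Define $\I_{G}$ over $\rhodf\cup\universe(G)$ following the crisp construction of~\cite{Munoz07}: take $\Delta_{R}=\rhodf\cup\universe(G)$, let $\Delta_{P}$ and $\Delta_{C}$ be the sets of terms forced to be properties resp.\ classes by the \textsf{Typing II} conditions (including the blank-node properties needed for the implicit-typing rules), and set $\intP{\int{p}}(s,o)=\ell(\triple{s,p,o})$, $\intC{\int{c}}(x)=\ell(\triple{x,\type,c})$, with the remaining components as in the crisp case. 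One checks $\I_{G}\models G$: each model condition matches one of rules $(2a)$--$(6)$, and saturation of $cl(G)$ under that rule is exactly the (in)equality needed for $\I_{G}$ --- e.g.\ saturation under $(3a)$ gives $\intP{\int{\subclass}}(c,d)\otimes\intP{\int{\subclass}}(d,e)\preceq\intP{\int{\subclass}}(c,e)$, saturation under $(6)$ with the definition of $\ell$ makes each $\intP{\int{p}}(s,o)$ a genuine join (so $\fuzzyg{\tau}{\lambda}\in G$ gives $\succeq\lambda$), and $\intC{\int{c}}(x)=\intP{\int{\type}}(x,c)$ holds by construction. Finally, if $G\models\fuzzyg{\tau}{\lambda}$ with $\tau=\triple{s,p,o}$, then $\I_{G}\models\fuzzyg{\tau}{\lambda}$, i.e.\ $\ell(\tau)=\intP{\int{p}}(s,o)\succeq\lambda$; taking $\lambda'=\ell(\tau)$ gives $G\vdash\fuzzyg{\tau}{\lambda'}$ with $\lambda'\succeq\lambda$. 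The case where $\tau$ contains blank nodes is recovered by one concluding application of $(1a)$, as in the crisp proof.

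\textbf{Main obstacle.} Two points carry the real work. First, verifying that $\I_{G}$ is a model is not routine: it requires going through all of \textsf{Simple}, \textsf{Subproperty}, \textsf{Subclass}, \textsf{Typing I}, \textsf{Typing II} and the implicit-typing conditions and matching each to the corresponding saturation rule, and --- as already in~\cite{Munoz07} --- choosing $\Delta_{P}$ and $\Delta_{C}$ carefully so that rules $(5a),(5b)$ are accounted for. Second, one must ensure $cl(G)$, equivalently the annotation $\ell(\tau)$, is well defined: the set of annotations derivable for a fixed $\tau$ is directed under $\preceq$ (combine two derivations via rule $(6)$), but in a general idempotent semiring it need not be finite, so well-definedness rests on the relevant join being attained in $L$ --- automatic when $L$ is finite and true in the concrete domains we consider, and precisely why the statement concludes with $\lambda'\succeq\lambda$ rather than $\lambda'=\lambda$.
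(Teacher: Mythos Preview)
The paper does not actually provide a proof of this proposition: it is stated immediately after the rule set with the phrase ``It can be shown that'' and no argument is given, neither inline nor in the appendix (which only treats Theorem~\ref{thm:strongquasi}, Proposition~\ref{prop:normalP} and Theorem~\ref{thm:normalisation}). The only hints the paper offers are the explicit reliance on the crisp result of~\cite{Munoz07}, the remark that the annotated analogues of Propositions~\ref{pp1} and~\ref{pp2} hold, and the assertion that the closure is polynomial in $|G|$.

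Your proposal is therefore not so much a different route as a reconstruction of the intended argument, and it is the right one: soundness by rule-by-rule induction using $\preceq$-monotonicity of $\otimes$ and the join property of $\oplus$, and completeness via a canonical model read off the closure, exactly mirroring the crisp construction in~\cite{Munoz07}. The treatment of the destructive rule $(6)$ is correct, and postponing $(1a)$ to the end is justified by the annotated analogue of Proposition~\ref{pp1} that the paper explicitly asserts.

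The one point worth flagging is the termination/well-definedness issue you raise at the end. The paper simply \emph{states} that $cl(G)$ is polynomial in $|G|$, without argument; your observation that in a general idempotent semiring with non-idempotent $\otimes$ this is not automatic is well taken. What makes it work is that rule $(6)$ is destructive and $\otimes$ is bounded ($\lambda_{1}\otimes\lambda_{2}\preceq\lambda_{1}$), so once each triple carries a single annotation, a fresh application of rules $(2)$--$(5)$ followed by $(6)$ can never strictly increase that annotation unless it combines previously uncombined premises; since there are only polynomially many triples over $\universe(G)$, a forward-chaining computation that only fires a rule when its conclusion is not $\preceq$ the current annotation terminates. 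You identify this correctly as the place where the real work lies; the paper leaves it implicit.
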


\nd We point out that rules $2 - 5$ can be represented concisely using the following inference rule:
\begin{center}
  \begin{tabular}{cl}
    $(AG)$ & $\frac{
      \fuzzyg{\tau_{1}}{\lambda_{1}},\ \ldots,\ \fuzzyg{\tau_{n}}{\lambda_{n}, \{\tau_{1}, \ldots \tau_{n}\} \vdash_{\mathsf{RDFS}} \tau}
    }
    {
      \fuzzyg{\tau}{\bigotimes_{i} \lambda_{i}}
    }$
  \end{tabular}
\end{center}
\nd Essentially, this rule says that if a classical RDFS triple $\tau$ can be inferred by applying a classical RDFS
inference rule to triples $\tau_{1}, \ldots \tau_{n}$ (denoted $\{\tau_{1}, \ldots, \tau_{n}\} \vdash_{\mathsf{RDFS}}
\tau$), then the annotation term of $\tau$ will be $\bigotimes_{i} \lambda_{i}$, where $\lambda_{i}$ is the annotation
of triple $\tau_{i}$.  It follows immediately that, using rule $(AG)$, in addition to rules $(1)$ and $(6)$ from
the deductive system above, it is easy to extend these rules to cover complete RDFS. %
Finally, like for the classical case, the \emph{closure} is defined as $cl(G) = \{\fuzzyg{\tau}{\lambda} \mid
G \vdash^{*} \fuzzyg{\tau}{\lambda} \}$, where $\vdash^{*}$ is as $\vdash$ without rule $(1a)$.
Note again that the size of the closure of $G$ is polynomial in $|G|$ and can be computed in
polynomial time, provided that the computational complexity of operations $\otimes$ and $\oplus$ are polynomially bounded
(from a computational complexity point of view, it is as for the classical case, plus the cost of the operations
$\otimes$ and $\oplus$ in $L$). Eventually, similar propositions as Propositions~\ref{pp1} and \ref{pp2} hold.

\begin{example}
  As an example, consider the following triples from
  Figure~\ref{fig:dataset-example}:
  \begin{small}
    \[
    \begin{array}{l}
      \fuzzyg{\triple{\uri{youtubeEmp}, \subclass, \uri{googleEmp}}}{[2006,2011]} \\
      \fuzzyg{\triple{\uri{chadHurley}, \uri{worksFor}, \uri{youtubeEmp}}}{[2005,2010]}
    \end{array}
    \]
  \end{small}
  we infer the following triple:
  \begin{small}
    \[
    \begin{array}{l}
      \fuzzyg{\triple{\uri{chadHurley}, \type, \uri{googleEmp}}}{[2006,2010]}
    \end{array}
    \]
  \end{small}
  
\end{example}

\subsection{Query Answering} \label{aqa} Informally, queries %
are as for the classical case where triples are replaced with annotated triples in which \emph{annotation variables}
(taken from an appropriate alphabet and denoted $\Lambda$) may occur.  We allow built-in triples of the form
$\triple{s,p,o}$, where $p$ is a built-in predicate taken from a reserved vocabulary and having a \emph{fixed
  interpretation} on the annotation domain $D$, such as $\triple{\lambda, \preceq , l}$ stating that the value of
$\lambda$ has to be $\preceq$ than the value $l \in L$. We generalise the built-ins to any $n$-ary predicate $p$, where
$p$'s arguments may be annotation variables, $\rhodf$ variables, domain values of $D$, values from $\AUL$, and $p$ has a
fixed interpretation. We will assume that the evaluation of the predicate can be decided in finite time. As for the crisp case, for
convenience, we write ``functional predicates'' as \emph{assignements} of the form $x\assign f(\vec{z})$ and assume that
the function $f(\vec{z})$ is safe. We also assume that a non functional built-in predicate $p(\vec{z})$ should be safe as well.

For instance, informally for a given time interval $[t_{1}, t_{2}]$, we may define $x\assign length([t_{1}, t_{2}])$ as true \iff
the value of $x$ is $t_{2} - t_{1}$.

\begin{example}\label{exUs4}
  \noindent Considering our dataset from Figure~\ref{fig:dataset-example} as input and the query asking for people that work for Google between 2002 and 2011 and the temporal term at which this was true:
  \begin{align*}
    q(x, \Lambda) \leftarrow \fuzzyg{\triple{x, \uri{worksFor}, \uri{google}}}{\Lambda'},&\\
    \Lambda\assign (\Lambda'& \land [2002,2011])
  \end{align*}
\noindent will get the following answers:
\[
\begin{array}{l}
 \tuple{\uri{steveChen}, [2006, 2011]}\\
 \tuple{\uri{chadHurley}, [2006, 2010]}\\
 \tuple{\uri{jawedKarim}, [2006, 2011]}\\
 \tuple{\uri{larryPage}, [2002, 2011]}\\
 \tuple{\uri{sergeyBrin}, [2002, 2011]}.
 \end{array}
 \]
\end{example}

\noindent Formally, an \emph{annotated query} is of the form
$$q(\vec{x},\vec{\Lambda}) \leftarrow \exists \vec{y}\exists\mathbf{\Lambda}'.\varphi(\vec{x}, \vec{\Lambda},\vec{y},\vec{\Lambda}')$$
in which $\varphi(\vec{x}, \vec{\Lambda},\vec{y},\vec{\Lambda}')$ is a conjunction (as for the crisp case, we use ``,'' as conjunction symbol) of annotated triples and built-in
predicates, $\vec{x}$ and $ \vec{\Lambda}$ are the distinguished variables, $\vec{y}$ and $\vec{\Lambda}'$ are the
vectors of \emph{non-distinguished variables} (existential quantified variables), and $\vec{x}$, $\vec{\Lambda}$,
$\vec{y}$ and $\vec{\Lambda}'$ are pairwise disjoint. Variables in $\vec{\Lambda}$ and $\vec{\Lambda}'$ can only appear
in annotations or built-in predicates. The query head contains at least one variable.

Given an annotated graph $G$, a query $q(\vec{x}, \vec{\Lambda}) \leftarrow \exists
\vec{y}\exists\mathbf{\Lambda}'.\varphi(\vec{x}, \vec{\Lambda}, \vec{y},\vec{\Lambda}')$, a vector $\vec{t}$ of terms in
$uni\-verse(G)$ and a vector $\vec{\lambda}$ of annotated terms in $L$, we say that $q(\vec{t}, \vec{\lambda})$ is
\emph{entailed} by $G$, denoted $G \models q(\vec{t}, \vec{\lambda})$, \iff in any model $\I$ of $G$, there is a vector
$\vec{t}'$ of terms in $ \universe(G)$ and a vector $\vec{\lambda}'$ of annotation values in $L$ such that $\I$ is a
model of $\varphi(\vec{t}, \vec{\lambda}, \vec{t}', \vec{\lambda}')$. If $G \models q(\vec{t}, \vec{\lambda})$ then
$\tuple{\vec{t}, \vec{\lambda}}$ is called an \emph{answer} to $q$. The \emph{answer set} of $q$ \wrt~$G$ is ($\preceq$
extends to vectors point-wise) 
\[
\begin{array}{ll}
ans(G, q) = \{ \tuple{\vec{t}, \vec{\lambda}} \mid G \models  q(\vec{t}, \vec{\lambda}), \vec{\lambda} \neq \vec{\bot} 
\mbox { and  } \\
\hspace*{0.5cm} \mbox { for any }  \vec{\lambda}' \neq \vec{\lambda} 
   \mbox { such that } G \models  q(\vec{t}, \vec{\lambda}'), \vec{\lambda}' \preceq \vec{\lambda}  \mbox { holds} \} \ .
\end{array}
\]
\noindent That is, for any tuple $\vec{t}$, the vector of annotation values $\vec{\lambda}$ is as large as possible. This is to avoid that redundant/subsumed answers occur in the answer set. %
The following can be shown:
\begin{proposition}\label{propU2}
  Given a graph $G$, $\tuple{\vec{t}, \vec{\lambda}}$ is an \emph{answer} to $q$ \iff $\exists
  \vec{y}\exists\mathbf{\Lambda}'.\varphi(\vec{t}, \vec{\lambda}, \vec{y},\vec{\Lambda}')$ is true in the closure of $G$
  and $\lambda$ is $\preceq$-maximal.\footnote{$\exists \vec{y}\exists\mathbf{\Lambda}'.\varphi(\vec{t}, \vec{\lambda},
    \vec{y},\vec{\Lambda}')$ is true in the closure of $G$ \iff for some $\vec{t}'$, $\vec{\lambda}'$ for all triples in
    $\varphi(\vec{t},\vec{\lambda},\vec{t}',\vec{\lambda}')$ there is a triple in $cl(G)$ that subsumes it and the
    built-in predicates are true, where an annotated triple $\fuzzyg{\tau}{\lambda_{1}}$ subsumes
    $\fuzzyg{\tau}{\lambda_{2}}$ \iff $\lambda_{2} \preceq \lambda_{1}$.}
\end{proposition}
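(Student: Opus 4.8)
The plan is to reduce the query-answering problem over arbitrary models to a purely syntactic condition on the closure $cl(G)$, mirroring the strategy used for Proposition~\ref{prop:rdf_closure} in the crisp case, but now tracking annotation values through the soundness/completeness of the annotated deductive system. First I would unfold the definition of "answer": $\tuple{\vec{t}, \vec{\lambda}} \in ans(G,q)$ means $G \models q(\vec{t}, \vec{\lambda})$ with $\vec{\lambda} \neq \vec{\bot}$, and $\vec{\lambda}$ is $\preceq$-maximal among all $\vec{\lambda}'$ with $G \models q(\vec{t},\vec{\lambda}')$. So it suffices to show the equivalence \[ G \models \exists \vec{y}\exists\mathbf{\Lambda}'.\varphi(\vec{t}, \vec{\lambda}, \vec{y},\vec{\Lambda}') \iff \exists \vec{y}\exists\mathbf{\Lambda}'.\varphi(\vec{t}, \vec{\lambda}, \vec{y},\vec{\Lambda}') \text{ is true in } cl(G), \] after which the $\preceq$-maximality clauses on both sides match up automatically, since both sides range over exactly the same set of witnessing $\vec{\lambda}$.

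For the ($\Leftarrow$) direction I would argue soundness: if for each conjunct of $\varphi$ (under some witnessing $\vec{t}', \vec{\lambda}'$) there is a subsuming triple in $cl(G)$ and all built-in predicates evaluate to true, then since $cl(G)$ is obtained from $G$ by the sound rules $\vdash^{*}$, the Soundness and Completeness Proposition gives $G \models \fuzzyg{\tau}{\lambda}$ for each such triple; the subsumption relation $\lambda_2 \preceq \lambda_1$ together with the semantic condition ($\intP{\int{p}}(\int{s},\int{o}) \succeq \lambda$ is upward-monotone in the model, and $\preceq$ is transitive) then yields that every model of $G$ satisfies $\varphi(\vec{t}, \vec{\lambda}, \vec{t}', \vec{\lambda}')$; finally the built-in predicates have a fixed domain-independent interpretation, so they hold in every model too. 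Hence $G \models q(\vec{t},\vec{\lambda})$.

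For the ($\Rightarrow$) direction I would use completeness of $\vdash$ plus a canonical-model construction. Suppose $G \models q(\vec{t},\vec{\lambda})$. Build the canonical annotated interpretation $\I_{cl(G)}$ whose domains are the universe of $G$ (plus $\rhodf$), where $\intP{\int{p}}(\int{s},\int{o})$ is set to the $\oplus$-join of all $\mu$ over annotations $\lambda''$ of triples $\fuzzyg{\triple{s,p,o}}{\lambda''} \in cl(G)$ (and $\bot$ if none), and $\intC{\cdot}$ defined from $\type$-triples via Typing~I; one checks that $\I_{cl(G)}$ is a model of $G$ because $cl(G)$ is closed under rules $(2a)$–$(6)$, which exactly encode the model conditions (this is where the generalisation rule $(6)$ ensuring closure under $\oplus$ is needed, so that the canonical assignment is well-defined and tight). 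Applying $G \models q(\vec{t},\vec{\lambda})$ to this particular model yields witnesses $\vec{t}', \vec{\lambda}'$ with $\I_{cl(G)} \models \varphi(\vec{t},\vec{\lambda},\vec{t}',\vec{\lambda}')$; unwinding the definition of satisfaction in the canonical model, each triple conjunct of $\varphi$ is subsumed by a triple of $cl(G)$ and each built-in holds, which is precisely "$\varphi(\vec{t},\vec{\lambda},\vec{t}',\vec{\lambda}')$ true in $cl(G)$."

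The main obstacle I expect is getting the canonical-model construction exactly right with respect to the annotations and $\rhodf$'s typing side-conditions: one must verify that the assignment $\intP{\int{p}}(x,y) = \bigoplus\{\lambda'' \mid \fuzzyg{\triple{x,p,y}}{\lambda''}\in cl(G)\}$ genuinely satisfies all the Subproperty/Subclass/Typing inequalities — e.g. $\intP{\int{\spp}}(p,q)\otimes\intP{\int{\spp}}(q,r)\preceq\intP{\int{\spp}}(p,r)$ — which follows from closure under rule $(2a)$ together with distributivity of $\otimes$ over $\oplus$ (Remark~\ref{remdistr}) and the fact that $cl(G)$ already contains all $\oplus$-combinations via rule $(6)$; the Typing~II definedness conditions require care to ensure the canonical interpretation functions are only defined where allowed. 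A secondary subtlety is that blank nodes have been skolemised (the graph is ground) and the non-distinguished variables $\vec{y}$ range over $\universe(G)$, so the finiteness of $cl(G)$ guarantees the search over witnesses terminates; I would note this but not belabour it, since the analogous argument for the crisp Proposition~\ref{prop:rdf_closure} is assumed.
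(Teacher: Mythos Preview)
The paper does not actually supply a proof of this proposition: it is introduced by ``The following can be shown:'' and immediately followed by the discussion of computing answers via the closure. So there is no paper proof to compare against.

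Your approach is the natural one and is correct in outline: reduce to the equivalence between semantic entailment of the body and its truth in $cl(G)$, then argue soundness for $(\Leftarrow)$ and a canonical-model construction for $(\Rightarrow)$, exactly as one would do for the crisp Proposition~\ref{prop:rdf_closure} (which the paper likewise states without proof). Two small remarks. First, since rule~$(6)$ is \emph{destructive}, the closure $cl(G)$ contains at most one annotated triple $\fuzzyg{\tau}{\lambda}$ for each underlying triple $\tau$; hence your canonical assignment $\intP{\int{p}}(x,y) = \bigoplus\{\lambda'' \mid \fuzzyg{\triple{x,p,y}}{\lambda''}\in cl(G)\}$ collapses to that single value (or $\bot$), which slightly simplifies the verification of the model conditions. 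Second, the check that the Subproperty/Subclass/Typing inequalities hold in the canonical model is indeed where the work lies, and your appeal to closure under rules $(2a)$--$(5b)$ plus distributivity (Remark~\ref{remdistr}) is the right justification; the Typing~II definedness side-conditions are handled by letting $\Delta_P$ and $\Delta_C$ be exactly those terms that occur in the appropriate positions in $cl(G)$, which is forced by closure under the typing rules.
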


\noindent Therefore, we may devise a similar query answering method as for the crisp case by computing the closure,
store it into a database and then using SQL queries with the appropriate support of built-in predicates and domain
operations.

\subsection{Queries with aggregates} \label{sec:aggr}

As next, we extend the query language by allowing so-called aggregates to occur in a query. Essentially, aggregates may
be like the usual SQL aggregate functions such as $\mathsf{SUM}, \mathsf{AVG}, \mathsf{MAX}, \mathsf{MIN}$. But, we have
also domain specific aggregates such as $\oplus$ and $\otimes$.  

The following examples present some queries that can be expressed with the use of built-in queries and aggregates.

\begin{example}
  Using a built-in aggregate we can pose a query that, for each employee, retrieves his maximal time of employment for
  any company in the following way:
  \[
  \begin{array}{lcl}
  q(x, \mathit{maxL}) & \leftarrow & \fuzzyg{\triple{x, \uri{worksFor}, y}}{\lambda},\\
                      &            & \mathit{maxL} \assign \mathit{maxlength}(\lambda)
  \end{array}
  \]
  \nd Here, the $\mathit{maxlength}$ built-in predicate returns, given a set of temporal intervals, the maximal interval
  in the set.
\end{example}

\begin{example}\label{exAA}
  Suppose we are looking for employees that work for some companies for a certain time period. We would like to know the
  average length of their employment. Then such a query will be expressed as
\[
\begin{array}{lcl}
q(x, avgL) & \leftarrow & \fuzzyg{\triple{x, \uri{worksFor}, y}}{\lambda},\\
           &            & \mathsf{GroupedBy}(x),\\
           &            & avgL \assign \mathsf{AVG}[length(\lambda)]
\end{array}
\]
\nd Essentially, we group by the employee, compute for each employee the time he worked for a company by means of the
built-in function $length$, and compute the average value for each group. That is, $g = \{\tuple{t, t_{1}},\ldots,
\tuple{t, t_{n}}\}$ is a group of tuples with the same value $t$ for employee $x$, and value $t_{i}$ for $y$, where each
length of employment for $t_{i}$ is $l_{i}$ (computed as $length(\cdot)$), then the value of $avgL$ for the group $g$ is
$(\sum_{i} l_{i})/n$.
\end{example}

\nd Formally, let $\aggr$ be an aggregate function with $\aggr \in \{\mathsf{SUM}, \mathsf{AVG}, \mathsf{MAX},
\mathsf{MIN}, \mathsf{COUNT}, \oplus, \otimes\}$ then a query with aggregates is of the form
\[
\begin{array}{lcl}
  q(\vec{x}, \vec{\Lambda},\alpha) & \leftarrow & \exists \vec{y}\exists\mathbf{\Lambda}'.\varphi(\vec{x}, \vec{\Lambda}, \vec{y},\vec{\Lambda}'),\\
                                   &            & \mathsf{GroupedBy(\vec{w})},\\
                                   &            &  \alpha \assign\aggr[f(\vec{z})]
\end{array}
\]

\nd where $\vec{w}$ are variables in $\vec{x}$, $\vec{y}$ or $\vec{\Lambda}$ and each variable in $\vec{x}$ and $ \vec{\Lambda}$ occurs in $\vec{w}$ and any variable in $\vec{z}$ occurs in $\vec{y}$ or $\vec{\Lambda'}$.

From a semantics point of view, we say that $\I$  \emph{is a model of} (\emph{satisfies}) $q(\vec{t},\vec{\lambda}, a)$, denoted 
$\I \models q(\vec{t},\vec{\lambda}, a)$ \iff
\[
\begin{array}{l}
a =   \aggr [a_{1}, \ldots, a_{k}]  \mbox{ where }g = \{ \tuple{\vec{t}, \vec{\lambda}, \vec{t}'_{1},\vec{\lambda}'_{1}}, \ldots , \tuple{\vec{t}, \vec{\lambda}, \vec{t}'_{k},\vec{\lambda}_{k}'} \}, \\
\hspace{1.3cm}\mbox{is a group of $k$ tuples with identical projection}\\
\hspace{1.3cm}\mbox{on the variables in } \vec{w}, \varphi(\vec{t}, \vec{\lambda}, \vec{t}'_{r},\vec{\lambda}'_{r}) \mbox{ is true in } \I \\
\hspace{1.3cm}\mbox {and } a_{r} =f(\vec{\vec{t}}) \mbox{ where } \vec{\vec{t}} \mbox{ is the projection of }  \tuple{\vec{t}'_{r}, \vec{\lambda}'_{r}}\\
\hspace{1.3cm}\mbox{on the variables } \vec{z} \ . \\
\end{array}
\]

\nd Now, the notion of  $G \models q(\vec{t},\vec{\lambda}, a)$ is as usual: any model of $G$ is a model of $q(\vec{t},\vec{\lambda}, a)$.

Eventually, we further allow to order answers according to some ordering functions. 

\begin{example} \label{exx}
Consider Example~\ref{exAA}. We additionally would like to order the employee according to the average length of employment. 
Then such a query will be expressed as
\[
\begin{array}{lcl}
q(x,avgL) & \leftarrow & \fuzzyg{\triple{x, \uri{worksFor}, y}}{\lambda},\\
          &            & \mathsf{GroupedBy}(x),\\
          &            & avgL \assign \mathsf{AVG}[length(\lambda)],\\
          &            & \mathsf{OrderBy}(avgL)
\end{array}
\]

\end{example}

\nd Formally, a query with ordering is of the form
\[
\begin{array}{lcl}
q(\vec{x}, \vec{\Lambda}, z) & \leftarrow & \exists \vec{y}\exists\mathbf{\Lambda}'.\varphi(\vec{x}, \vec{\Lambda}, \vec{y},\vec{\Lambda}'), \mathsf{OrderBy}(z)
\end{array}
\]

\nd or, in case grouping is allowed as well, it is of the form
\[
\begin{array}{lcl}
  q(\vec{x}, \vec{\Lambda},z, \alpha) & \leftarrow & \exists \vec{y}\exists\mathbf{\Lambda}'.\varphi(\vec{x}, \vec{\Lambda}, \vec{y},\vec{\Lambda}'),\\
                                      &            & \mathsf{GroupedBy(\vec{w})},\\
                                      &            & \alpha \assign\aggr[f(\vec{z})],\\
                                      &            & \mathsf{OrderBy}(z)
\end{array}
\]

\nd From a semantics point of view, the notion of $G \models q(\vec{t},\vec{\lambda}, z, a)$ is as before, but the
notion of answer set has to be enforced with the fact that the answers are now ordered according to the assignment to
the variable $z$. Of course, we require that the set of values over which $z$ ranges can be ordered (like string,
integers, reals). In case the variable $z$ is an annotation variable, the order is induced by $\preceq$. In case,
$\preceq$ is a partial order then we may use some linearisation method for posets, such as~\cite{Labrador10}.
Finally, note that the additional of the SQL-like statement $\mathsf{LIMIT}(k)$ can be added straightforwardly.
\section{\aSPARQL: Annotated SPARQL}
\label{sec:annotated-sparql}

Our introduced query language so far allows for conjunctive queries. Languages like SQL and SPARQL allow to pose more
complex queries including built-in predicates to filter solutions, advanced features such as negation or aggregates.
In this section we will present an extension of the SPARQL~\cite{sparql} query language, called \aSPARQL, that enables
querying annotated graphs.  We will begin by presenting some preliminaries on SPARQL.

\subsection{SPARQL}
\label{sec:class-SPARQL}

SPARQL~\cite{sparql} is the W3C recommended query language for RDF. A \emph{SPARQL query} is defined by a triple
$Q=(P,G,V)$, where $P$ is a \emph{graph pattern} and the \emph{dataset} $G$ is an RDF graph and $V$ is the \emph{result
  form}.  We will restrict ourselves to \textsf{SELECT} queries in this work so it is sufficient to consider the result
form $V$ as a list of variables.
\begin{remark}
  \label{fn:nonamedgraphs}
  Note that, for presentation purposes, we simplify the notion of datasets by excluding named graphs and thus GRAPH
  queries. Our definitions can be straightforwardly extended to named graphs and we refer the reader to the SPARQL W3C
  specification~\cite{sparql} for details.
\end{remark}
We base our semantics of SPARQL on the semantics presented by P\'erez \etal~\cite{DBLP:journals/tods/PerezAG09},
extending the multiset semantics to lists, which are considered a multiset with ``default'' ordering.
RDF triples, possibly with variables in subject, predicate or object positions, are called \emph{triple patterns}. In
the basic case, graph patterns are sets of triple patterns, also called \emph{basic graph patterns} (BGP).  Let $\AU$,
$\AB$, $\AL$ be defined as before and let $\AV$ denote a set of variables, disjoint from $\AUBL$.  We further
denote by $\mathit{var}(P)$ the set of variables present in a graph pattern $P$.
\begin{definition}[Solution~{\cite[Section\ 12.3.1]{sparql}}]
  Given a graph $G$ and a BGP $P$, a \emph{solution} $\theta$ for $P$ over $G$ is a mapping over a subset $V$ of
  $\mathit{var}(P)$, \ie $\theta: V \to \mathit{term}(G)$ \st $G \models P\theta$ where $P\theta$ represents the
  triples obtained by replacing the variables in graph pattern $P$ according to $\theta$, and where $G \models P\theta$ means that any triple in $P\theta$ is entailed by $G$.  We call $V$ the
  \emph{domain} of $\theta$, denoted by $\mathit{dom}(\theta)$. For convenience, sometimes we will use the notation $\theta = \{x_{1}/t_{1}, \ldots, x_{n}/t_{n}\}$ to indicate that $\theta(x_{i}) = t_{i}$, \ie variable $x_{i}$ is assigned to term $t_{i}$.
\end{definition}
Two mappings $\theta_1$ and $\theta_2$ are considered \emph{compatible} if for all $x \in \mathit{dom}(\theta_1) \cap
\mathit{dom}(\theta_2), \theta_1(x) = \theta_2(x)$.
We call the evaluation of a BGP $P$ over a graph $G$, denoted $\eval{P}_{G}$, the set of solutions.  
\begin{remark}\label{rem:no-vars}
  Note that variables in the domain of $\theta$ play the role of distinguished variables in conjunctive queries and
  there are no non-distinguished variables.
\end{remark}

\nd The notion of solution for BGPs is the same as the notion of answers for conjunctive queries:

\begin{proposition} \label{pp}
Given a graph $G$ and a BGP $P$, then the solutions of $P$ are the same as the answers of the query
$q(var(P)) \leftarrow P$ (where $var(P)$ is the vector of variables in $P$), \ie~$ans(G,q) = \eval{P}_{G}$.
\end{proposition}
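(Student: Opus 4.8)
The plan is to unfold both sides of the claimed equality and show they coincide as sets, using the characterisation of conjunctive-query answers in terms of the closure already established earlier. First I would recall that, by Proposition~\ref{prop:rdf_closure}, a tuple $\vec{t}$ is an answer to $q(var(P)) \leftarrow P$ precisely when there is an instantiation of the body that is true in $cl(G)$; here, however, the query $q(var(P)) \leftarrow P$ has \emph{no} non-distinguished variables (all variables of $P$ appear in the head), so the instantiation of the body is just $P\theta$ for the substitution $\theta$ sending $var(P)$ to $\vec{t}$. Hence $\vec{t} \in ans(G,q)$ iff every triple of $P\theta$ lies in $cl(G)$.

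Next I would translate the SPARQL side. By definition, $\theta \in \eval{P}_{G}$ iff $\mathit{dom}(\theta) \subseteq var(P)$ and $G \models P\theta$, where the latter means every triple of $P\theta$ is entailed by $G$. By the Soundness and Completeness Proposition for $\vdash$, $G \models \tau$ iff $G \vdash \tau$, and by Proposition~\ref{pp2} together with the remark that closure membership captures exactly the rule-$(1a)$-free derivations, $G \vdash \tau$ for a ground triple $\tau$ is equivalent to $\tau \in cl(G)$ (rule $(1a)$ only adds triples obtainable by a further map, which for a fully ground target triple reduces to membership in $cl(G)$ already — more carefully, a ground $\tau$ with $G \vdash \tau$ satisfies $\tau \in cl(G)$). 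So $G \models P\theta$ iff every triple of $P\theta$ is in $cl(G)$, which is exactly the condition characterising answers above.

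The only remaining bookkeeping is the identification of ``solutions $\theta$'' with ``answer tuples $\vec{t}$'': a solution is a mapping on a subset $V$ of $var(P)$, whereas an answer is a vector over \emph{all} of $var(P)$. I would handle this by noting that, since every variable of $P$ occurs in the head of $q(var(P)) \leftarrow q$, Remark~\ref{rem:no-vars} applies and any solution must in fact have domain all of $var(P)$ (a proper sub-mapping cannot make $P\theta$ a ground graph entailed by the ground graph $G$, because unbound variables remain), so $\theta$ corresponds bijectively to the tuple $\vec{t} = \theta(var(P))$. Under this correspondence the two membership conditions established above are literally the same, giving $ans(G,q) = \eval{P}_{G}$.

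I expect the main obstacle to be the careful matching of the domain of a SPARQL solution with the full variable vector $var(P)$, i.e.\ ruling out ``partial'' solutions; this is where Remark~\ref{rem:no-vars} and the assumption that $G$ is ground do the work, and it is the one place where the argument is more than a direct substitution of the already-proved closure characterisations.
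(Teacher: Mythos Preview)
The paper does not actually supply a proof for this proposition; it is stated as an observation linking the SPARQL BGP semantics to the conjunctive-query semantics and left without argument. Your proposal therefore cannot be compared against a paper proof, but it is a correct and reasonably careful justification of the claim.

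Two minor remarks. First, when you recall the SPARQL side you write $\mathit{dom}(\theta)\subseteq var(P)$; note that the paper's own Definition~\ref{def:semantics-sparql} for the evaluation of a triple pattern already imposes $\mathit{dom}(\theta)=\mathit{var}(P)$, so the ``partial solution'' worry you flag at the end is resolved directly by that definition rather than via Remark~\ref{rem:no-vars}. Second, your reduction of $G\models\tau$ to $\tau\in cl(G)$ for ground $\tau$ is exactly right and is the only non-trivial step: Proposition~\ref{pp2} gives either $\tau\in cl(G)$ or $\tau$ obtained from $cl(G)$ by rule~$(1a)$, and since maps fix elements of $\AUL$, a ground $\tau$ in the second case already lies in $cl(G)$. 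With these small adjustments your argument is complete.
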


\nd We present the syntax of SPARQL based on~\cite{DBLP:journals/tods/PerezAG09} and present \emph{graph patterns}
similarly.  A \emph{triple pattern}~$\triple{s,p,o}$ is a graph pattern where $s, o \in \AULV$ and $p \in
\AUV$.\footnote{We do not consider blank nodes in triple patterns since they can be considered as variables.}  Sets of
triple patterns are called \emph{Basic Graph Patterns (BGP)}.  A generic \emph{graph pattern} is defined in a recursive
manner: any BGP is a graph pattern; if $P$ and $P'$ are graph patterns, $R$ is a filter expression~(see~\cite{sparql}),
then $(P\ \mathsf{AND}\ P')$, $(P\ \mathsf{OPTIONAL}\ P')$, $(P\ \mathsf{UNION}\ P')$, $(P\ \mathsf{FILTER}\ R)$ are
graph patterns.  As noted in Remark~\ref{fn:nonamedgraphs} we do not consider $\mathsf{GRAPH}$ patterns.

Evaluations of more complex patterns including FILTERs, OPTIONAL patterns, AND patterns, UNION patterns, etc. are
defined by an algebra that is built on top of this \emph{basic graph pattern
  matching}~(see~\cite{sparql,DBLP:journals/tods/PerezAG09}).

\begin{definition}[SPARQL Relational Algebra]
 \label{def:sparql-relalg}
 Let $\Omega_1$ and $\Omega_2$ be sets of mappings:
 \begin{tabbing}
     $\Omega_1 \louter \Omega_2$\ \=$=$\ \=\kill
     $\Omega_1 \bowtie \Omega_2$\>$=$\ \>$\{\theta_1 \cup \theta_2 \mid \theta_1 \in \Omega_1, \theta_2 \in \Omega_2,
     \theta_1\textrm{ and }\theta_2\textrm{ compatible}\}$\\
     $\Omega_1 \doublecup \Omega_2$\>$=$\>$\{\theta \mid \theta \in \Omega_1\textrm{ or } \theta \in \Omega_2\}$\\
     $\Omega_1 - \Omega_2$\>$=$\>$\{\theta_1 \in \Omega_1 \mid \textrm{ for all }\theta_2 \in \Omega_2,\
     \theta_1\textrm{ and }\theta_2\textrm{ not compatible} \}$\\
     $\Omega_1 \louter \Omega_2$\>$=$\>$(\Omega_1 \bowtie \Omega_2) \doublecup (\Omega_1 - \Omega_2)$
   \end{tabbing}
\end{definition}

\begin{definition}[Evaluation {\cite[Definition 2.2]{DBLP:journals/tods/PerezAG09}}]
\label{def:semantics-sparql}
Let $\tau = \triple{s,p,o}$ be a triple pattern, $P, P_1, P_2$ graph patterns and $G$ an RDF graph, then the evaluation
$\eval{\cdot}_{G}$ is recursively defined as follows: {
\begin{tabbing}
$\eval{t}_{G}\ =\ \{\theta \mid \mathit{dom}(\theta)=\mathit{var}(P)\textrm{ and }G \models \tau\theta\}$\\
$\eval{P_1\ \mathsf{FILTER}\ P_2}_{G}\quad\quad$\=$=$\ \=$\eval{P_1}_{G}$\ \=$\louter$\ \=$\eval{P_2}_{G}$\kill
$\eval{P_1\ \mathsf{AND}\ P_2}_{G}$\>$=$\>$\eval{P_1}_{G}$\>$\ \bowtie$\>$\eval{P_2}_{G}$\\
$\eval{P_1\ \mathsf{UNION}\ P_2}_{G}$\>$=$\>$\eval{P_1}_{G}$\>$\ \doublecup$\>$\eval{P_2}_{G}$\\
$\eval{P_1\ \mathsf{OPTIONAL}\ P_2}_{G}$\>$=$\>$\eval{P_1}_{G}$\>$\louter$\>$\eval{P_2}_{G}$\\
$\eval{P\ \mathsf{FILTER}\ R}_{G}$\>$=$\>$\{ \theta \in \eval{P}_{G} \mid R\theta \mbox{ is true } \}$
\end{tabbing}}

Let $R$ be a $\mathsf{FILTER}$\footnote{For simplicity, we will omit from the presentation $\mathsf{FILTER}{s}$ such as
  comparison operators (`$<$', `$>$',`$\leq$',`$\geq$'), data type conversion and string functions and refer the reader
  to~\cite[Section 11.3]{sparql} for details.} expression, $u,v \in \AV \cup \AUBL$. The valuation of $R$ on a
substitution $\theta$, written $R\theta$, is \emph{true}
if:%
{
\begin{tabbing}
(1) $R = \mathsf{BOUND}(v)$ with $v \in \mathit{dom}(\theta)$;\\
(2) $R = \mathsf{isBLANK}(v)$ with $v \in \mathit{dom}(\theta)$ and $\theta(v) \in \AB$;\\
(3) $R = \mathsf{isIRI}(v)$ with $v \in \mathit{dom}(\theta)$ and $\theta(v) \in \AU$;\\
(4) $R = \mathsf{isLITERAL}(v)$ with $v \in \mathit{dom}(\theta)$ and $\theta(v) \in \AL$;\\
(5) $R = (u = v)$ with $u,v \in \mathit{dom}(\theta)\cup \AUBL \wedge\ \theta(u) = \theta(v)$;\\
(6) $R = (\neg R_1)$ with $R_1\theta\mbox{ is false}$;\\ 
(7) $R = (R_1 \vee R_2 )$ with $R_1\theta\mbox{ is true}$ or $R_2\theta\mbox{ is true}$;\\
(8) $R = (R_1 \wedge R_2)$ with $R_1\theta\mbox{ is true}$ and $R_2\theta\mbox{ is true}$.
\end{tabbing}}

\noindent$R\theta$ yields an error (denoted $\varepsilon$), if: {
\begin{enumerate}[(1)]
\setlength{\itemsep}{-1.5mm}
\item $R = \mathsf{isBLANK}(v)$,$R = \mathsf{isIRI}(v)$, or $R = \mathsf{isLITERAL}(v) $ and $v \not\in \mathit{dom}(\theta)\cup T$;
\item $R = (u = v)$ with $u \not\in \mathit{dom}(\theta)\cup T$ or $v\not\in \mathit{dom}(\theta)\cup T$;
\item $R = (\neg R_1)$ and $R_1\theta=\varepsilon$;
\item $R = (R_1 \vee R_2 )$ and $(R_1\theta\not=\top$ and $R_2\theta\not=\top)$ and $(R_1\theta=\varepsilon$ or $R_2\theta=\varepsilon)$;
\item $R = (R1 \wedge R2)$ and $R_1\theta=\varepsilon$ or $R_2\theta=\varepsilon$.
\end{enumerate}}
\noindent Otherwise $R\theta$ is \emph{false}.
\end{definition}

\nd In order to make the presented semantics compliant with the SPARQL specification~\cite{sparql}, we need to introduce an
extension to consider unsafe $\mathsf{FILTER}{s}$ (also presented in~\cite{DBLP:conf/semweb/AnglesG08}):
\begin{definition}[\textsf{OPTIONAL} with \textsf{FILTER} Evaluation]
\label{def:optsemantics}
Let $P_1, P_2$ be graph patterns $R$ a \textsf{FILTER} expression. A mapping $\theta$ is in $\eval{P_1\ \textsf{OPTIONAL}\
(P_2\ \textsf{FILTER}\ R)}_{DS}$ \iff:
\begin{itemize} 
\item $\theta = \theta_1\cup \theta_2$, s.t. $\theta_1 \in \eval{P_1}_{G}$, $\theta_2 \in \eval{P_2 }_{G}$ are compatible and
  $R\theta$ is true, or
\item $\theta \in \eval{P_1}_{G}$ and $\forall \theta_2 \in \eval{P_2}_{G}$,  $\theta$ and $\theta_2$ are not compatible, or
\item $\theta \in \eval{P_1}_{G}$ and $\forall \theta_2 \in \eval{P_2}_{G}$ s.t. $\theta$ and $\theta_2$ are compatible, and
  $R\theta_{3}$ is false for $\theta_{3} = \theta \cup \theta_2$.
\end{itemize}
\end{definition}

\subsection{\aSPARQL}
\label{sec:annotated-sparql2}

We are now ready to extend SPARQL for querying annotated RDF. We call the novel query language \aSPARQL.  For the rest
of this Section we fix a specific annotation domain, $D = \tuple{L, \oplus, \otimes, \bot, \top}$, as defined in
Section~\ref{sec:rdfs-annot-doma}.

\subsubsection{Syntax}
\label{sec:aSPARQL-syntax}

We take inspiration on the notion of conjunctive annotated queries discussed in Section~\ref{aqa}.
A \emph{simple \aSPARQL query} is defined -- analogously to a SPARQL query -- as a triple $Q=(P,G,V, A)$ with the differences
that \begin{inparaenum}[(1)]
\item $G$ is an annotated RDF graph;
\item we allow annotated graph patterns as presented in Definition~\ref{def:pattern} and
\item $A$ is the set of annotation variables taken from an infinite set $\AA$ (distinct from $\AV$).
\end{inparaenum}
We further denote by $\mathit{avar}(P)$ the set of annotation variables present in a graph pattern $P$.

\begin{definition}[Annotated Graph Pattern] 
  \label{def:pattern}
  Let $\lambda$ be an annotation value from $L$ or an annotation variable from $\AA$.  We call $\lambda$ an
  \emph{annotation label}.
  Triple patterns in annotated \aSPARQL are defined the same way as in SPARQL.  For a triple pattern $\tau$, we call
  $\fuzzyg{\tau}{\lambda}$ an \emph{annotated triple pattern} and sets of annotated triple patterns are called \emph{basic
    annotated patterns} (BAP).
  A generic \emph{annotated graph pattern} is defined in a recursive manner: any BAP is an annotated graph pattern; if $P$ and $P'$ are annotated graph
  patterns, $R$ is a filter expression (see~\cite{sparql}), then $(P\ \mathsf{AND}\ P')$,
  $(P\ \mathsf{OPTIONAL}\ P')$, $(P\ \mathsf{UNION}\ P')$, $(P\ \mathsf{FILTER}\ R)$
  are annotated graph patterns.
\end{definition}

\begin{example} \label{exx1} Suppose we are looking for Ebay employees during some time period and that optionally
  owned a car during that period.  This query can be posed as follows:
\begin{Verbatim}
SELECT ?p ?l ?c WHERE {
   (?p type ebayEmp):?l
   OPTIONAL{(?p hasCar ?c):?l}
}
\end{Verbatim}
Assuming our example dataset from Figure~\ref{fig:dataset-example} extended with the following triples:
{\small
  \[
  \begin{array}{l}
    \fuzzyg{\triple{\uri{toivo}, \type, \uri{paypalEmp}}}{[2000,2009]} \\
    \fuzzyg{\triple{\uri{toivo}, \uri{hasCar}, \uri{peugeot}}}{[1999,2005]}\\
    \fuzzyg{\triple{\uri{toivo}, \uri{hasCar}, \uri{renault}}}{[2005,2010]}\\
  \end{array}
  \]
}
we will get the following answers:
{\small
\[
\begin{array}{lcl}
  \theta_{1} & = & \{?p/\uri{toivo}, ?l/[2002, 2009]\} \\
  \theta_{2} & = & \{?p/\uri{toivo}, ?l/[2002, 2005], ?c/\uri{peugeot} \} \\
  \theta_{3} & = &  \{?p/\uri{toivo}, ?l/[2005, 2009], ?c/\uri{renault} \} \  .
\end{array}
\]
}
The first answer corresponds to the answer in which the $\mathsf{OPTIONAL}$ pattern is not satisfied, so we get the
annotation value $[2002, 2009]$ that corresponds to the time \uri{toivo} is an Ebay employee.  In the second and third answers,
the $\mathsf{OPTIONAL}$ pattern is also matched and, in this case, the annotation value is restricted to the time when
Toivo is employed by Paypal and has a car.  %
\end{example}
Note that -- as we will see -- this first query will return as a result for the annotation variable the periods where a
car was owned. %

\begin{example} \label{exx2} A slightly different query can be the employees of Ebay during some time period and
  optionally owned a car at some point during their stay.  This query -- which will rather return the time periods of
  employment -- can be written as follows:
\begin{Verbatim}[commandchars=+\[\]]
SELECT ?p ?l ?c WHERE {
   (?p type ebayEmp):?l 
   OPTIONAL {(?p hasCar ?c):?l2
   FILTER (?l2 +(+preceq+) ?l)}
}
\end{Verbatim}
Using the input data from  Example~\ref{exx1}, we obtain the following answers:
    \[
    \begin{array}{lcl}
  \theta_{1} & = &  \{ ?p/\uri{toivo}, ?l/[2002, 2009]\} \\
  \theta_{2} & = &  \{ ?p/\uri{toivo}, ?l/[2002, 2009], ?c/renault \}
    \end{array}
    \]
    In this example the $\mathsf{FILTER}$ behaves as in SPARQL by removing from the answer set the mappings that do not
    make the $\mathsf{FILTER}$ expression true. 
\end{example}

\nd This query also exposes the issue of unsafe filters, noted in~\cite{DBLP:conf/semweb/AnglesG08} and we presented the
semantics to deal with this issue in Definition~\ref{def:optsemantics}.

\subsubsection{Semantics}
\label{sec:aSPARQL-semantics}

We are thus ready to define the semantics of \aSPARQL queries by extending the notion of SPARQL BGP matching.  As for
the SPARQL query language, we are going to define the notion of solutions for BAP as the equivalent notion of answers
set of annotated conjunctive queries.  Just as matching BGPs against RDF graphs is at the core of SPARQL semantics,
matching BAPs against annotated RDF graphs is the heart of the evaluation semantics of \aSPARQL.

We extend the notion of \emph{substitution} to include a substitution of annotation variables in which we do not allow
any assignment of an annotation variable to $\bot$ (of the domain $D$).  An annotation value of $\bot$, although it is a
valid answer for any triple, does not provide any additional information and thus is of minor interest. Furthermore this
would contribute to increasing the number of answers unnecessarily.

\begin{definition}[BAP evaluation]
  \label{def:btpmatch}
  Let $P$ be a BAP and $G$ an annotated RDF graph. We define \emph{evaluation} $\eval{P}_G$ as the list %
  of substitutions that are \emph{solutions} of $P$, \ie $\eval{P}_G = \{\theta \mid G \models \theta(P) \}$, and where $G
  \models \theta(P)$ means that any annotated triple in $\theta(P)$ is entailed by $G$.
\end{definition}

\nd As for SPARQL, we have:
\begin{proposition} \label{ppp}
Given an annotated graph $G$ and a BAP $P$, the solutions of $P$ are the same as the answers of the annotated query
$q(var(P)) \leftarrow P$ (where $var(P)$ is the vector of variables in $P$), \ie~$ans(G,q) = \eval{P}_{G}$.
\end{proposition}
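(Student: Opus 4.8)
The plan is to mirror the proof of Proposition~\ref{pp}, which already established the analogous statement for the classical (non-annotated) case, and to lift it through the definitions that tie together annotated conjunctive queries and BAP evaluation. Concretely, I would start by unfolding both sides of the claimed equality $ans(G,q) = \eval{P}_G$ for the query $q(var(P)) \leftarrow P$, where $P$ is a BAP. On the right, Definition~\ref{def:btpmatch} gives $\eval{P}_G = \{\theta \mid G \models \theta(P)\}$, where $\theta$ is a substitution over the ordinary variables and the annotation variables of $P$ (with no annotation variable sent to $\bot$), and $G \models \theta(P)$ means every annotated triple of $\theta(P)$ is entailed by $G$. On the left, the annotated answer set from Section~\ref{aqa} collects the pairs $\tuple{\vec t,\vec\lambda}$ with $G \models q(\vec t,\vec\lambda)$ that are $\preceq$-maximal in the annotation component. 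The key is that, for a query whose head lists \emph{all} the variables of the body $P$ (both \rhodf variables and annotation variables), there are no non-distinguished variables, so the existential quantifiers $\exists\vec y\exists\mathbf{\Lambda}'$ are vacuous and the entailment condition $G \models q(\vec t,\vec\lambda)$ reduces exactly to: in every model $\I$ of $G$, $\I$ is a model of $\varphi(\vec t,\vec\lambda)$, i.e. $G \models \theta(P)$ where $\theta$ is the substitution encoded by $\tuple{\vec t,\vec\lambda}$.

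Next I would handle the two residual discrepancies between the definitions. First, the answer set $ans(G,q)$ in Section~\ref{aqa} additionally imposes $\preceq$-maximality of $\vec\lambda$ and excludes $\vec\lambda = \vec\bot$; Definition~\ref{def:btpmatch} excludes assignments of annotation variables to $\bot$ but, as written, does not visibly impose maximality. I would argue that this gap is only apparent: since $G \models \fuzzyg{\tau}{\bot}$ always (Remark~\ref{remBot}) and triples of the form $\fuzzyg{\tau}{\bot}$ have been removed from the language, any substitution $\theta \in \eval{P}_G$ must in fact assign to each annotation variable the $\preceq$-largest value $\lambda$ for which the corresponding triple is still entailed — this is exactly the intended reading of ``solution'' that matches the conjunctive-query answer set, and I would make this precise by appealing to Proposition~\ref{propU2}, which characterises annotated answers via $\preceq$-maximality in the closure of $G$. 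Second, I would invoke Proposition~\ref{propU2} (and its underlying closure-based reasoning, the annotated analogue of Proposition~\ref{prop:rdf_closure}) to replace the quantification over all models of $G$ by the concrete, computable condition ``$\varphi(\vec t,\vec\lambda)$ is true in $cl(G)$'' on both sides; once both $ans(G,q)$ and $\eval{P}_G$ are re-expressed in these closure terms, the two descriptions become literally identical, since $P$ and $\varphi$ are the same conjunction of annotated triple patterns.

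Finally I would close the equality by a routine two-way inclusion: given $\tuple{\vec t,\vec\lambda}\in ans(G,q)$, the encoded substitution $\theta$ satisfies $G\models\theta(P)$ and sends no annotation variable to $\bot$, hence $\theta\in\eval{P}_G$; conversely, given $\theta\in\eval{P}_G$, reading off $\vec t$ and $\vec\lambda$ from $\theta$ yields $G\models q(\vec t,\vec\lambda)$ with $\vec\lambda\neq\vec\bot$, and $\preceq$-maximality follows from the closure characterisation, so $\tuple{\vec t,\vec\lambda}\in ans(G,q)$. The main obstacle I anticipate is precisely the bookkeeping around the $\preceq$-maximality condition and the role of $\bot$: the definition of $\eval{P}_G$ in Definition~\ref{def:btpmatch} is stated somewhat tersely, so the bulk of the argument is to pin down that ``$\theta$ is a solution of $P$'' carries the same maximality discipline as the conjunctive-query answer set, rather than anything deep about SPARQL algebra (which does not enter, since BAPs have no \textsf{AND}/\textsf{OPTIONAL}/\textsf{UNION}/\textsf{FILTER} structure). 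Everything else is a direct transcription of the proof of Proposition~\ref{pp} with triples replaced by annotated triples and $\models$ replaced by annotated entailment.
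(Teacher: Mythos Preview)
The paper gives no explicit proof of this proposition; it is stated as the annotated analogue of Proposition~\ref{pp} and left to the reader. Your strategy --- unfold both sides, observe that $q(var(P))\leftarrow P$ has no non-distinguished variables so the existential quantifiers are vacuous, and then match the two entailment conditions --- is exactly the intended argument and is correct in spirit.

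There is, however, one point where your argument does not quite go through as written. You try to \emph{derive} $\preceq$-maximality of the annotation component from Definition~\ref{def:btpmatch} together with the closure characterisation (Proposition~\ref{propU2}), claiming that ``any substitution $\theta\in\eval{P}_G$ must in fact assign to each annotation variable the $\preceq$-largest value $\lambda$ for which the corresponding triple is still entailed''. That is not what Definition~\ref{def:btpmatch} says: it only requires $G\models\theta(P)$, and annotated entailment is downward closed in the annotation (if $G\models\fuzzyg{\tau}{\lambda}$ and $\lambda'\preceq\lambda$ with $\lambda'\neq\bot$, then $G\models\fuzzyg{\tau}{\lambda'}$ as well). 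So on the bare definition, $\eval{P}_G$ would contain non-maximal substitutions that $ans(G,q)$ excludes. The paper closes this gap not by an argument but by \emph{convention}: Remark~\ref{remsubs} explicitly stipulates that only domain-maximal answers are retained in $\eval{\cdot}_G$. Once you invoke that remark, the two sides coincide literally and the proof is the trivial unfolding you describe; without it, the equality fails in one direction.
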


\nd For the extension of the SPARQL relational algebra to the annotated case we introduce -- inspired by the definitions in~\cite{DBLP:journals/tods/PerezAG09} -- definitions of compatibility and union of substitutions: 

\begin{definition}[$\otimes$-compatibility] 
  Two substitutions $\theta_1$ and $\theta_2$ are \emph{$\otimes$-compatible} \iff \ii{i} $\theta_1$ and $\theta_2$ are
  compatible for all the non-annotation variables, \ie $\theta_1(x) = \theta_2(x)$ for any non-annotation variable $x\in
  \mathit{dom}(\theta_1) \cap \mathit{dom}(\theta_2)$; and \ii{ii} $\theta_1(\lambda) \otimes \theta_2(\lambda) \neq \bot$ for any
  annotation variable $\lambda \in \mathit{dom}(\theta_1) \cap \mathit{dom}(\theta_2)$.
\end{definition}

\begin{definition}[$\otimes$-union of substitutions]
  Given two $\otimes$-compatible substitutions $\theta_1$ and $\theta_2$, the \emph{$\otimes$-union} of $\theta_1$ and
  $\theta_2$, denoted $\theta_1 \otimes \theta_2$, is as $\theta_{1} \cup \theta_{2}$, with the exception that any
  annotation variable $\lambda \in \mathit{dom}(\theta_1) \cap \mathit{dom}(\theta_2)$ is mapped to $\theta_1(\lambda) \otimes
  \theta_2(\lambda)$.
\end{definition}

We now present the notion of evaluation for generic \aSPARQL graph patterns.  This consists of an extension of
Definition~\ref{def:semantics-sparql}:

\def\killedline{$\eval{P_1\ \mathsf{FILTER}\ P_2}\quad\quad\quad\quad$\=$=$\ \ \ \ \=$\eval{P_1}$\ \ \ \=$\louter$\ \=$\eval{P_2}$\kill}

\begin{definition}[Evaluation, extends {\cite[Definition 2]{DBLP:journals/tods/PerezAG09}}]
\label{def:semantics-anql}
Let $P$ be a BAP, $P_{1}, P_{2}$ annotated graph patterns, $G$ an annotated graph and $R$ a filter expression, then the
evaluation $\eval{\cdot}_{G}$, \ie~set of \emph{answers},\footnote{\label{fn:answersets} Strictly speaking, we consider
  \emph{sequences} of answers -- note that SPARQL allows duplicates and imposes an order on solutions,
  cf. Section~\ref{sec:asparql-aggregates} below for more discussion -- but we stick with set \emph{notation}
  representation here for illustration. Whenever we mean ``real'' sets where duplicates are removed we write
  $\{\ldots\}_{\mathsf{DISTINCT}}$.} is recursively defined as:
\begin{center}
{\small
\begin{itemize}[]
\item $\eval{P}_{G}$  =
	$\left\{\theta \mid \mathit{dom}(\theta)=\mathit{var}(P)\textrm{ and }G\models \theta(P)\right\}$
\item $\eval{P_1\ \mathsf{AND}\ P_2}_{G}$ = 
	$\left\{\theta_1\otimes\theta_2 \mid \theta_1 \in \eval{P_1}_{G}, \theta_2 \in \eval{P_2}_{G}, \theta_1\right.$ and $\left.\theta_2\textrm{ $\otimes$-compatible}\right\}$
\item $\eval{P_1\ \mathsf{UNION}\ P_2}_{G}$ = 
	$\eval{P_1}_{G} \cup \eval{P_2}_{G}$
\item $\eval{P_1\ \mathsf{FILTER}\ R}_{G}$ =  
	$\left\{\theta \mid \theta \in \eval{P_1}_{G} \mbox{ and } R\theta \textrm{ is true} \right\}$
\item $\eval{P_1\ \mathsf{OPTIONAL}\ P_2[R]}_{G}$ =\\
	\hspace{4mm}$\{ \theta \mid$ and $\theta$ meets one of the following conditions:
{\begin{enumerate}
	\setlength{\itemsep}{-1.5mm}
	\item $\theta = \theta_1 \otimes \theta_2$ if $\theta_1 \in \eval{P_1}_{G}, \theta_2 \in \eval{P_2}_{G}, \theta_1$ and $\theta_2 \otimes$-compatible, and $R\theta$ is true;
	\item $\theta = \theta_1 \in \eval{P_1}_{G}$ and $\forall\theta_2 \in \eval{P_2}_{G}$ such that $\theta_1$ and $\theta_2$ $\otimes$-compatible, $R(\theta_1\otimes\theta_2)$ is true, and for all annotation variables $\lambda\in \mathit{dom}(\theta_1) \cap \mathit{dom}(\theta_2)$, $\theta_2(\lambda) \prec \theta_1(\lambda)$;
	\item $\theta = \theta_1 \in \eval{P_1}_{G}$ and $\forall\theta_2 \in \eval{P_2}_{G}$ such that $\theta_1$ and $\theta_2$ $\otimes$-compatible, $R(\theta_1\otimes\theta_2)$ is false $\}$
	\end{enumerate}}
\end{itemize}}
\end{center}
Let $R$ be a $\mathsf{FILTER}$ expression and $x,y \in \AA \cup L$, in addition to the $\mathsf{FILTER}$ expressions presented in
Definition~\ref{def:semantics-sparql} we further allow the expressions presented next.  The valuation of $R$ on a
substitution $\theta$, denoted $R\theta$ is true if:\footnote{We consider
  a simple evaluation of filter expressions where the ``error'' result is ignored, see~\cite[Section 11.3]{sparql} for details.}

\nd(9) $R = (x \preceq y)$ with $x,y \in \mathit{dom}(\theta)\cup L \wedge\ \theta(x) \preceq \theta(y)$;\\
(10) $R = p(\vec{z})$ with $p(\vec{z})\theta=\mbox{ true \iff }\ p(\theta(\vec{z})) =\textrm{ true}$, where $p$ is a built-in predicate.

\noindent Otherwise $R\theta$ is false.
\end{definition}

\nd In the $\mathsf{FILTER}$ expressions above, a built-in predicate $p$ is any $n$-ary predicate $p$, where $p$'s
arguments may be variables (annotation and non-annotation ones), domain values of $D$, values from $\AUL$, $p$ has a
fixed interpretation and we assume that the evaluation of the predicate can be decided in finite time.
Annotation domains may define their own built-in predicates that range over annotation values as in the
following query:
\begin{example}
  Consider our example dataset from Figure~\ref{fig:dataset-example} and that we want to know where \uri{chadHurley} was
  working before 2005. This query can be expressed in the following way:
\begin{Verbatim}
SELECT ?city WHERE {
    (chadHurley worksFor ?comp):?l
    FILTER(before(?l, [2005]))
}
\end{Verbatim}
\end{example}

\begin{remark} \label{remsubs}
\nd For practical convenience, we retain in $\eval{\cdot}_{G}$ only ``domain maximal answers''. That is, let us define $\theta' \preceq \theta$ \iff 
\ii{i} $\theta' \neq \theta$; \ii{ii} $\mathit{dom}(\theta) = \mathit{dom}(\theta')$;  \ii{iii} $\theta(x) = \theta'(x)$ for any non-annotation variable $x$; and \ii{iv} $\theta'(\lambda) \preceq \theta(\lambda)$ for any annotation variable $\lambda$. Then, for any $\theta \in  \eval{P}_{G}$ we remove any $\theta'  \in  \eval{P}_{G}$ such that $\theta' \preceq \theta$.
 \end{remark}
 
\begin{remark}
  Please note that the cases for the evaluation of the $\mathsf{OPTIONAL}$ are compliant with the SPAR\-QL
  specification~\cite{sparql}, covering the notion of unsafe $\mathsf{FILTER}{s}$ as presented
  in~\cite{DBLP:conf/semweb/AnglesG08}.  However, there are some peculiarities inherent to the annotated case. More
  specifically case 2.) introduces the side effect that annotation variables that are compatible between the mappings
  may have different values in the answer depending if the $\mathsf{OPTIONAL}$ is matched or not.  This is the behaviour
  demonstrated in Example~\ref{exx1}.
\end{remark}

\nd The following proposition shows that we have a conservative extension of SPARQL:
\begin{proposition} \label{prop:saparql2sparql}
  Let $Q = (P,G,V)$ be a SPARQL query over an RDF graph $G$. Let $G'$ be obtained from $G$ by annotating triples  with $\top$. Then $\eval{P}_G$ under SPARQL semantics is in one-to-one correspondence to $\eval{P}_{G'}$ under \aSPARQL semantics such that for any $\theta \in \eval{P}_G$ there is a   $\theta' \in \eval{P}_{G'}$ with $\theta$ and $\theta'$ coinciding on $var(P)$.
\end{proposition}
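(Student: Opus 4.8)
The plan is to reduce everything to one fact: once every triple of $G$ is lifted to the annotation $\top$, both the annotated closure and annotated entailment collapse onto their classical counterparts, with \emph{every} derivable triple carrying annotation $\top$. Throughout I read the SPARQL pattern $P$ as an \aSPARQL pattern in which each occurring triple pattern $\tau$ has been decorated with a fresh annotation variable $\lambda_{\tau}\in\AA$ (disjoint from $\mathit{var}(P)$, so that ``coinciding on $\mathit{var}(P)$'' is precisely the non-annotation part of a substitution); the same argument goes through, slightly more simply, if one decorates with the constant $\top$ instead. I also assume the domain is non-degenerate ($\bot\neq\top$), so that $\top$-annotated triples are never discarded under Remark~\ref{remBot}.

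\textbf{Closure collapse and basic patterns.} First I would show $cl(G') = \{\fuzzyg{\tau}{\top}\mid \tau\in cl(G)\}$. The direction ``$\supseteq$'' is an induction on a classical proof of $\tau$ from $G$: each classical rule instance $\{\tau_1,\dots,\tau_n\}\vdash_{\mathsf{RDFS}}\tau$ is mirrored via rule $(AG)$ by an annotated instance whose conclusion carries $\bigotimes_i\top=\top$. The direction ``$\subseteq$'' is an induction on an annotated proof from $G'$: all premises have the form $\fuzzyg{\sigma}{\top}$ with $\sigma\in cl(G)$; rules $(2a)$--$(5b)$ produce annotation $\top\otimes\cdots\otimes\top=\top$ over a classical consequence, and rule $(6)$ yields $\top\oplus\top=\top$ without changing the underlying triple; in particular no $\fuzzyg{\tau}{\bot}$ ever arises, so no rule application is blocked. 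Combining this with the closure characterisation of answers (Proposition~\ref{propU2}) and its classical analogue (Proposition~\ref{prop:rdf_closure}), for any ground triple instance $\tau'$ one gets $G'\models\fuzzyg{\tau'}{\lambda}$ for some $\lambda\neq\bot$ iff $\fuzzyg{\tau'}{\top}\in cl(G')$ iff $\tau'\in cl(G)$ iff $G\models\tau'$, and the $\preceq$-maximal such $\lambda$ is $\top$ since $\top$ is the greatest element of $L$. Hence, using Propositions~\ref{pp} and~\ref{ppp}, for a basic annotated pattern the domain-maximal solutions retained by Remark~\ref{remsubs} are exactly the substitutions $\theta'$ extending a SPARQL solution $\theta\in\eval{P}_G$ by $\theta'(\lambda_\tau)=\top$ for every triple pattern $\tau$; restricting to $\mathit{var}(P)$ recovers $\eval{P}_G$. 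This is the desired bijection at the level of BAPs.

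\textbf{Structural induction.} I would then lift this to arbitrary graph patterns by structural induction, comparing Definition~\ref{def:semantics-anql} with Definitions~\ref{def:semantics-sparql} and~\ref{def:optsemantics} under the invariant ``all annotation variables are bound to $\top$''. Since $\top\otimes\top=\top\neq\bot$, $\otimes$-compatibility reduces to ordinary compatibility and $\otimes$-union agrees with $\cup$ on the non-annotation variables, so the AND and UNION clauses match verbatim on $\mathit{var}(P)$; FILTER matches because the filter clauses~(9)--(10) that are specific to \aSPARQL mention annotation variables, which a plain SPARQL filter $R$ does not. For OPTIONAL one checks that case~2 of Definition~\ref{def:semantics-anql} is vacuous here — it demands $\theta_2(\lambda)\prec\theta_1(\lambda)$ for an annotation variable in $\mathit{dom}(\theta_1)\cap\mathit{dom}(\theta_2)$, impossible when both sides equal $\top$ — leaving cases~1 and~3, which reassemble exactly into the three bullets of Definition~\ref{def:optsemantics}. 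The induction then closes and yields the claimed one-to-one correspondence.

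The main obstacle is the OPTIONAL step: because the \aSPARQL semantics of OPTIONAL adds case~2 and presents unsafe FILTERs in a form syntactically unlike Definition~\ref{def:optsemantics}, the delicate part is to argue rigorously that under the ``everything is $\top$'' invariant case~2 never fires and cases~1 and~3 recombine precisely into the classical left-outer-join-with-filter behaviour. A secondary point requiring care is the bookkeeping around Remark~\ref{remsubs}: one must check that pruning non-maximal annotated answers leaves exactly one lift ($\lambda_\tau\mapsto\top$) of each classical solution, so the correspondence is genuinely bijective and not merely surjective onto $\eval{P}_G$.
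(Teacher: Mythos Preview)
The paper states this proposition without proof, so there is no argument to compare against; your outline is essentially the natural route, and the closure-collapse plus BAP step is fine. The problem lies in the $\mathsf{OPTIONAL}$ case.

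Your claim that case~2 of Definition~\ref{def:semantics-anql} is vacuous rests on reading the annotation condition as ``there exists a shared annotation variable with $\theta_2(\lambda)\prec\theta_1(\lambda)$''. But the clause is a \emph{universal}: ``for all annotation variables $\lambda\in\mathit{dom}(\theta_1)\cap\mathit{dom}(\theta_2)$, $\theta_2(\lambda)\prec\theta_1(\lambda)$''. Under your translation, each triple pattern receives a \emph{fresh} $\lambda_\tau$, so $\mathit{dom}(\theta_1)\cap\mathit{dom}(\theta_2)$ contains no annotation variables and the condition is vacuously \emph{true}, not vacuously false. Hence case~2 can fire. Concretely, take $G=\{(a,p,b),(a,q,c)\}$, $P_1=(?x,p,b)$, $P_2=(?x,q,?y)$, no explicit filter. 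Classically, $\eval{P_1\ \mathsf{OPTIONAL}\ P_2}_G=\{\{?x/a,?y/c\}\}$. With fresh $\lambda_1,\lambda_2$ and $G'$, you obtain $\theta_1'=\{?x/a,\lambda_1/\top\}$ and $\theta_2'=\{?x/a,?y/c,\lambda_2/\top\}$; case~1 yields $\theta_1'\otimes\theta_2'$, and case~2 \emph{also} yields $\theta_1'$ alone (the universal over an empty set of shared annotation variables is satisfied, and $R$ is trivially true). Restricting to $\mathit{var}(P)$ you get both $\{?x/a,?y/c\}$ and $\{?x/a\}$, so the correspondence is not one-to-one. The same happens with your alternative of decorating by the constant $\top$, since again no annotation variables are shared. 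Remark~\ref{remsubs} does not help here because the two answers have different domains.

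A repair consistent with your overall strategy is to decorate \emph{all} triple patterns with a \emph{single} common annotation variable $\lambda$. Then $\lambda\in\mathit{dom}(\theta_1)\cap\mathit{dom}(\theta_2)$ whenever both sides are nonempty, and under the ``everything is $\top$'' invariant the requirement $\theta_2(\lambda)\prec\theta_1(\lambda)$ becomes $\top\prec\top$, which fails; case~2 then reduces to the classical ``no compatible $\theta_2$'' situation, and cases~1 and~3 recombine into Definition~\ref{def:optsemantics} as you intended. With this change the rest of your structural induction goes through.
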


\subsubsection{Further Extensions of \aSPARQL}
\label{sec:asparql-aggregates}

In this section we will present extensions of Definition~\ref{def:semantics-anql} to include variable assignments,
aggregates and solution modifiers.  These are extensions similar to the ones presented in Section~\ref{sec:aggr}.

\begin{definition}
  \label{def:anql-assign}
  Let $P$ be an annotated graph pattern and $G$ an annotated graph, the evaluation of an $\mathsf{ASSIGN}$ statement is
  defined as:
  \begin{align*}
    \eval{P\ \mathsf{ASSIGN}\ f(\vec{z})\ AS\ z}_{G} = \{\theta \mid \theta_1 \in \eval{P}_{G},&\\
    \theta =\theta_1&[z/f(\theta_{1}(\vec{z}))]\}
  \end{align*}
\nd where 
\[
\theta[z/t] = \left\{
  \begin{array}{ll}
    \theta \cup \{z/t\} &\textrm{ if } z \not\in \mathit{dom}(\theta)\\
    \left(\theta \setminus \{z/t'\}\right) \cup \{z/t\} &\textrm{ otherwise} \ .
  \end{array}
  \right.
\]

\end{definition}

\nd Essentially, we assign to the variable $z$ the value $f(\theta_{1}(\vec{z}))$, which is the evaluation of the function $f(\vec{z})$ with respect to a substitution $\theta_{1} \in \eval{P}_{G}$.

\begin{example} \label{exx3} Using a built-in function we can retrieve for each employee the length of employment for
  any company:
\begin{Verbatim}
SELECT ?x ?y ?z WHERE {
    (?x worksFor ?y):?l
    ASSIGN length(?l) AS ?z
}
\end{Verbatim}
\nd Here, the $\mathit{length}$ built-in predicate returns, given a set of temporal intervals, the overall total length
of the intervals.  %
\end{example}

\begin{remark}
  Note that this definition is more general than ``\texttt{SELECT $expr$ AS $?var$}'' project expressions in current
  SPARQL~1.1~\cite{sparql11} due to not requiring that the assigned variable be unbound.
\end{remark}

\nd We introduce the $\mathsf{ORDERBY}$ clause where the evaluation of a $\eval{P\ \mathsf{ORDERBY}\ ?x}_{G}$ statement
is defined as %
the ordering of the solutions -- for any $\theta \in \eval{P}_G$ -- according to the values of $\theta(?x)$. Ordering
for non-annotation variables follows the rules in~\cite[Section 9.1]{sparql}. %

Similarly to ordering in the query answering setting, we require that the set of values over which $x$ ranges can be
ordered and some linearisation method for posets may be applied if necessary, such as~\cite{Labrador10}.
We can further extend the evaluation of \aSPARQL queries with aggregate functions 
\[
\aggr \in \{\mathsf{SUM},\mathsf{AVG}, \mathsf{MAX}, \mathsf{MIN}, \mathsf{COUNT}, \oplus, \otimes\}
\]
\nd as follows:

\begin{definition}
  \label{def:groupby}
  The evaluation of a $\mathsf{GROUPBY}$ statement is defined as:\footnote{In the expression,
    $\vec{\aggr}\vec{f}(\vec{z})\ AS\ \vec{\alpha}$ is a concise representation of $n$ aggregations of the form
    $\aggr_{i} f_{i}(\vec{z}_{i})\ AS\ \alpha_{i}$.}  {\small
    \begin{align*}
      \eval{P\ \mathsf{GROUPBY}(\vec{w})\ \vec{\aggr}\vec{f}(\vec{z})\ AS\ \vec{\alpha}}_{G} = \{ \theta \mid
      \theta_1\mbox{ in }\eval{P}_{G},&\\ 
      \theta = \theta_{1}|_{\vec{w}}[\alpha_i/\aggr_{i}{}f_i(\theta_i(\vec{z}_{i}))]&
      \}_{\mathsf{DISTINCT}}
    \end{align*}
} \nd where the variables $\alpha_i \not\in var(P)$, $\vec{z}_i \in var(P)$ and none of the $\mathsf{GROUPBY}$ variables
$\vec{w}$ are included in the aggregation function variables $\vec{z}_i$.  Here, we denote by $\theta|_{\vec{w}}$ the
restriction of variables in $\theta$ to variables in $\vec{w}$. Using this notation, we can also straightforwardly
introduce projection, \ie sub-SELECTs as an algebraic operator in the language covering another new feature of
SPARQL~1.1: {\small
\begin{eqnarray*}
\eval{\mathsf{SELECT}\ \vec{V}\ \{ P \} }_{G} & = & \{ \theta \mid \theta_1\mbox{ in }\eval{P}_{G}, \theta = \theta_{1}|_{\vec{v}} \} \ .
\end{eqnarray*}
}

\end{definition}
\begin{remark}
  Please note that the aggregator functions have a domain of definition and thus can only be applied to values of their
  respective domain.  For example, $\mathsf{SUM}$ and $\mathsf{AVG}$ can only be used on numeric values, while
  $\mathsf{MAX}, \mathsf{MIN}$ are applicable to any total
  order. Resolution of type mismatches for aggregates is currently
  being defined in SPARQL~1.1~\cite{sparql11} and we aim to follow those, as soon
  as the language is stable. The
  $\mathsf{COUNT}$ aggregator can be used for any finite set of
  values. The last two aggregation functions, 
  namely $\oplus$ and $\otimes$, are defined by the annotation domain and thus can be used on any annotation variable. 
\end{remark}

\begin{remark}
  Please note that, unlike the current SPARQL 1.1 syntax, assignment, solution
  modifiers (ORDER BY, LIMIT) and aggregation are stand-alone
  operators in our language and do not need to be tied to a
  sub-SELECT but can occur nested within any pattern. This may be viewed
  as syntactic sugar allowing for more concise writing than the
  current SPARQL~1.1~\cite{sparql11} draft.
\end{remark}

\begin{example}\label{ex:AA}
  Suppose we want to know, for each employee, the average length of
  their employments with different employers. Then such a query will be
  expressed as:
\begin{Verbatim}
SELECT ?x ?avgL WHERE {
    (?x worksFor ?y):?l
    GROUPBY(?x)
    AVG(length(?l)) AS ?avgL
}
\end{Verbatim} 

\nd Essentially, we group by the employee, compute for each employee the time he worked for a company by means of the
built-in function $length$, and compute the average value for each group. That is, if $g = \{\tuple{t, t_{1}},\ldots,
\tuple{t, t_{n}}\}$ is a group of tuples with the same value $t$ for employee $x$, and value $t_{i}$ for $y$, where each
length of employment for $t_{i}$ is $l_{i}$ (computed as $length(\cdot)$), then the value of $avgL$ for the group $g$ is
$(\sum_{i} l_{i})/n$.  %
\end{example}

\begin{proposition}
  Assuming the built-in predicates are computable in finite time, the answer set of any \aSPARQL is finite and can also
  be computed in finite time.
\end{proposition}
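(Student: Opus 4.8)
The plan is to prove the statement by structural induction on the annotated graph pattern $P$, showing simultaneously that $\eval{P}_{G}$ is a finite sequence of substitutions and that it is produced by a terminating procedure. The two standing ingredients I would use are: (i) the fact, recorded just before the examples of this section, that the closure $cl(G)$ is finite (polynomial in $|G|$) and effectively computable, provided $\otimes$ and $\oplus$ are computable — which is part of the definition of an annotation domain we work with; and (ii) the hypothesis of the proposition that every built-in predicate terminates. Note also that $\preceq$ is decidable, since $\lambda_{1}\preceq\lambda_{2}$ iff $\lambda_{1}\oplus\lambda_{2}=\lambda_{2}$, so comparisons occurring in $\mathsf{FILTER}{s}$ are unproblematic.

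For the base case, let $P$ be a BAP. By Proposition~\ref{ppp} its solutions coincide with $ans(G,q)$ for $q(var(P))\leftarrow P$, and by Proposition~\ref{propU2} each answer is an instantiation true in $cl(G)$ whose annotation component is $\preceq$-maximal. Since $\universe(cl(G))$ is finite, there are only finitely many candidate maps of the finitely many $\rhodf$-variables of $P$ into it, and each candidate is checked against $cl(G)$ in finite time. Moreover, because $cl(G)$ is finite and, after the destructive generalisation rule $(6)$, carries a single annotation value per triple, fixing a map of the $\rhodf$-variables pins every annotation variable of $P$ down to at most one $\preceq$-maximal value — the $\otimes$-meet of the annotations that $cl(G)$ attaches to the triple patterns in which that variable occurs — and the remaining built-in predicates only prune this already finite candidate set in finite time. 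The restriction that annotation variables occur only in annotations or built-ins, together with the safety requirement on built-ins, guarantees that no variable ranges over an a priori unbounded set. Hence $\eval{P}_{G}$ is finite and computable.

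For the inductive step I would go through Definitions~\ref{def:semantics-anql}, \ref{def:anql-assign} and \ref{def:groupby} case by case, assuming the claim for the immediate subpatterns. If $\eval{P_1}_{G}$ and $\eval{P_2}_{G}$ are finite and computable, then $\eval{P_1\ \mathsf{AND}\ P_2}_{G}$ forms the finitely many $\otimes$-compatible pairs and their $\otimes$-unions (computable, as $\otimes$ is); $\eval{P_1\ \mathsf{UNION}\ P_2}_{G}$ is a finite union; $\eval{P_1\ \mathsf{FILTER}\ R}_{G}$ and $\eval{P_1\ \mathsf{OPTIONAL}\ P_2[R]}_{G}$ are obtained from these finite sets by finitely many terminating evaluations of $R$; $\mathsf{ASSIGN}$ appends to each of finitely many substitutions the single value $f(\theta_{1}(\vec{z}))$, computable since $f$ is safe and terminates; $\mathsf{ORDERBY}$ merely re-sequences a finite list, applying a linearisation of the order on annotation values if needed; the sub-$\mathsf{SELECT}$ projects a finite list; and $\mathsf{GROUPBY}$ partitions a finite list into finitely many groups and applies to each an aggregate in $\{\mathsf{SUM},\mathsf{AVG},\mathsf{MAX},\mathsf{MIN},\mathsf{COUNT},\oplus,\otimes\}$, each a finite computation over a finite group, followed by duplicate removal. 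The pruning to domain-maximal answers of Remark~\ref{remsubs} only shrinks a finite set. This exhausts the grammar, so by induction $\eval{P}_{G}$ is finite and effectively computable, which is the claim.

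The main obstacle is the base case, and specifically excluding the possibility that an annotation variable yields infinitely many distinct $\preceq$-maximal answers even though $L$ itself may be infinite (as in the temporal domain). The point that dissolves this is the combination of the finiteness of $cl(G)$, the single-annotation-per-triple property enforced by the destructive rule $(6)$, and the $\preceq$-maximality built into the notion of answer: together these leave only finitely many admissible annotation assignments per $\rhodf$-variable map. A secondary care point is to check that $\otimes$-compatibility and the $\mathsf{OPTIONAL}$ clauses do not reintroduce unbounded branching, but each of those operations only combines or filters already-finite solution sequences, so the induction goes through.
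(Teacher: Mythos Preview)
Your proposal is correct and takes exactly the approach the paper indicates: the paper's own ``proof'' is the single sentence ``This proposition can be demonstrated by induction over all the constructs we allow in \aSPARQL,'' and you have simply carried that induction out in full, case by case, with the right supporting ingredients (finiteness and computability of $cl(G)$, Propositions~\ref{propU2} and~\ref{ppp}, and termination of built-ins).

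One small remark: in the base case you assert that the shared annotation variable is pinned to ``at most one $\preceq$-maximal value --- the $\otimes$-meet.'' In a bare idempotent commutative semiring $\otimes$ need not be the infimum for $\preceq$, so uniqueness of the maximal lower bound is not automatic. This does not break your finiteness argument (the candidate annotation values still come from a finite set generated by the finitely many annotations in $cl(G)$ under $\otimes$), and the paper glosses over the same point, but the phrasing slightly overstates what the algebra guarantees.
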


\nd This proposition can be demonstrated by induction over all the constructs we allow in \aSPARQL.

\subsection{Constraints vs Filters}\label{sec:constr-vs-filt}

Please note that $\mathsf{FILTER}{s}$ do not act as \emph{constraints} over the query.  Given the data from our dataset
example and for the following query:
\begin{Verbatim}
SELECT ?l1 ?l2 WHERE {
    (?p type youtubeEmp):?l1 . 
    (steveChen type youtubeEmp):?l2
}
\end{Verbatim}
with an additional \emph{constraint} that requires $?l1$ to be ``before'' $?l2$,  we could expect the answer 
$$\{?l1/[2005, 2010], ?l2/[2011, 2011]\}.$$ This answer matches the following triples of our dataset:
\begin{small}
\[
\begin{array}{l}
\fuzzyg{\triple{\uri{steveChen}, \type, \uri{youtubeEmp}}}{[2005,2011]} \\
\fuzzyg{\triple{\uri{chadHurley}, \type, \uri{youtubeEmp}}}{[2005,2010]} \\
\end{array}
\]
\end{small}
\nd and satisfies the proposed \emph{constraint}.  However, we require maximality of the annotation values in the answers,
which in general, do not exist in presence of \emph{constraints}. For this reason, we do not allow general
\emph{constraints}.

\subsection{Union of annotations}\label{sec:union-annotations}

The SPARQL $\mathsf{UNION}$ operator may also introduce some discussion when considering shared annotations between
graph patterns. Take for example the following query:
\begin{Verbatim}
SELECT ?l WHERE {
    {(chadHurley type youtubeEmp):?l} 
    UNION
    {(chadHurley type paypalEmp):?l}
}
\end{Verbatim}
\nd and assume our dataset from Figure~\ref{fig:dataset-example} as input.  Considering the temporal domain, the intuitive meaning of the query is ``retrieve all time periods when \uri{chadHurley}
was an employee of Youtube or PayPal''.  In the case of $\mathsf{UNION}$ patterns the two instances of the variable $?l$
are treated as two different variables.  If the intended query would rather require treating both instances of the
variable $?l$ as the same, for instance to retrieve the time periods when \uri{chadHurley} was an employee of either
Youtube or PayPal but assuming we may not have information for one of the patterns, the query should rather look like:
\begin{Verbatim}[commandchars=+\[\]]
SELECT ?l WHERE {
    {(chadHurley type youtubeEmp):?l1} 
    UNION
    {(chadHurley type paypalEmp):?l2}
    ASSIGN ?l1 +(+lor+) ?l2 as ?l
}
\end{Verbatim}
where $\lor$ represents the domain specific built-in predicate for union of annotations.

\section{On primitive domains and their combinations} \label{sec:twists-asparql}

In this section we discuss some practical issues related to
(i) the representation of the temporal domain (Section~\ref{sec:doma-spec-issu});
(ii) the combination of several domains into one compound domain (Section~\ref{sec:extens-mult-doma});
(iii) the integration of differently annotated triples or non-annotated triples in the data or query (Section~\ref{sec:classical_triples}).

\subsection{Temporal issues}\label{sec:doma-spec-issu}

Let us highlight some specific issues inherent to the temporal domain.  Considering queries using Allen's temporal
relations~\cite{DBLP:journals/cacm/Allen83} (before, after, overlaps, etc.) as allowed in~\cite{tapp-bern-09}, we can
pose queries like ``find persons who were employees of PayPal before \uri{toivo}''.  This query raises some
ambiguity when considering that persons may have been employed by the same company at different disjoint intervals.
We can model such situations -- relying on sets of temporal intervals modelling the temporal domain. Consider our
dataset triples from Figure~\ref{fig:dataset-example} extended with the following triple:
\[
\begin{array}{l}
  \fuzzyg{\triple{\uri{toivo}, \type, \uri{paypalEmp}}}{\{[1999,2004], [2006,2008]\}}
\end{array}
\]
Tappolet and Bernstein~\cite{tapp-bern-09} consider this triple as two triples with disjoint intervals as
annotations. For the following query in their language $\tau$SPARQL:
\begin{Verbatim}
SELECT ?p WHERE {
    [?s1,?e1] ?p type youtubeEmp .
    [?s2,?e2] chadHurley type youtubeEmp .
    [?s1,?e1] time:intervalBefore [?s2,?e2]
}
\end{Verbatim}
we would get $\uri{chadHurley}$ as an answer although \uri{toivo} was already working for PayPal when \uri{chadHurley}
started.  This is one possible interpretation of ``before'' over a set of intervals. In \aSPARQL we could add
different domain specific built-in predicates, representing different interpretations of ``before''. For instance, we
could define binary built-ins (i) $\mathsf{beforeAny}(?A1, ?A2)$ which is true if there exists \emph{any} interval in
annotation $?A1$ before an interval in $?A2$, or, respectively, a different built-in $\mathsf{beforeAll}(?A1, ?A2)$
which is only true if \emph{all} intervals in annotation $?A1$ are before any interval in $?A2$. Using the latter, an
\aSPARQL query would look as follows:
\begin{Verbatim}
SELECT ?p WHERE {
    (?p type youtubeEmp):?l1 . 
    (toivo type youtubeEmp):?l2 .
    FILTER(beforeAll(?l1,?l2))
}
\end{Verbatim}
This latter query gives no result, which might comply with people's understanding of ``before'' in some cases, while we
also have the choice to adopt the behaviour of~\cite{tapp-bern-09} by use of $\mathsf{beforeAny}$ instead. 

More formally, if we consider an Allen relation $\mathbf{r}$ that holds between individual intervals, we can define a relation $\overline{\mathbf{r}}$ over sets of intervals in five different ways:
\begin{definition}\label{def:temporal_relations}
Let $T_1$ and $T_2$ be two non-empty sets of disjoint intervals. We define the following relations:
{\small\setlength{\leftmargini}{5mm}\begin{itemize}
 \item $\overline{\mathbf{r}}_{\exists\exists}=\{\tuple{T_1,T_2} \mid \exists t_1\in T_1,\exists t_2\in T_2\text{ such that }\tuple{t_1,t_2}\in\mathbf{r}\}$;
 \item $\overline{\mathbf{r}}_{\exists\forall}=\{\tuple{T_1,T_2} \mid \exists t_1\in T_1,\forall t_2\in T_2\text{ such that }\tuple{t_1,t_2}\in\mathbf{r}\}$;
 \item $\overline{\mathbf{r}}_{\forall\exists}=\{\tuple{T_1,T_2} \mid \forall t_1\in T_1,\exists t_2\in T_2\text{ such that }\tuple{t_1,t_2}\in\mathbf{r}\}$;
 \item $\overline{\mathbf{r}}_{\exists\forall\land\forall\exists}=\overline{\mathbf{r}}_{\exists\forall}\cap\overline{\mathbf{r}}_{\forall\exists}$;
 \item $\overline{\mathbf{r}}_{\forall\forall}=\{\tuple{T_1,T_2} \mid \forall t_1\in T_1,\forall t_2\in T_2\text{ such that }\tuple{t_1,t_2}\in\mathbf{r}\}$.
\end{itemize}}
\end{definition}

\nd These relations are illustrated by the following examples, taking the Allen relation $\mathsf{before}$:

\begin{example}
  Figure~\ref{fig:temporal_relations} is an example of time intervals that make each of the relations
  introduced in Definition~\ref{def:temporal_relations} true for the \emph{before} Allen relation.
\begin{figure*}
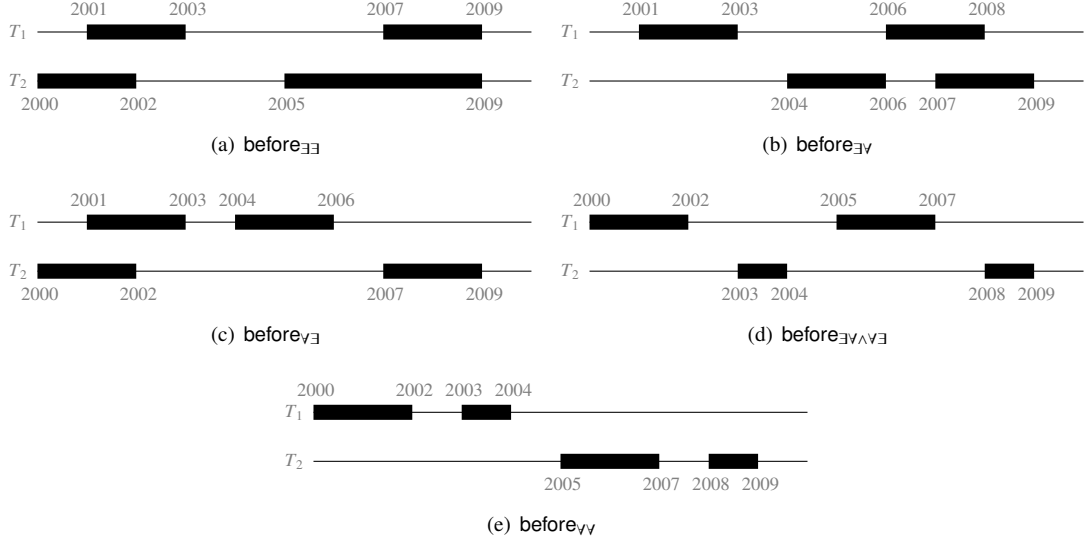

\centering
\subfigure[$\mathsf{before}_{\exists\exists}$]{\makegraph{{1/3,7/9}}{{0/2, 5/9}}\label{fig:tr1}} 
\subfigure[$\mathsf{before}_{\exists\forall}$]{\makegraph{{1/3,6/8}}{{4/6, 7/9}}\label{fig:tr2}}
\subfigure[$\mathsf{before}_{\forall\exists}$]{\makegraph{{1/3,4/6}}{{0/2, 7/9}}\label{fig:tr3}}
\subfigure[$\mathsf{before}_{\exists\forall\land\forall\exists}$]{\makegraph{{0/2,5/7}}{{3/4, 8/9}}\label{fig:tr4}}
\subfigure[$\mathsf{before}_{\forall\forall}$]{\makegraph{{0/2,3/4}}{{5/7, 8/9}}\label{fig:tr5}}
\caption{Temporal relations}\label{fig:temporal_relations}
\end{figure*}

\end{example}

It should be noticed that if one stick to one choice of quantifier, the resulting set of relations does not form a proper relation algebra. Indeed, it is easy to see that, in the first 3 cases, the relations are not disjoint. For instance, two sets of intervals can be involved in both a $\overline{\mathsf{before}}_{\exists\exists}$ and an $\overline{\mathsf{after}}_{\exists\exists}$ relation.
On the other hand, the last 4 cases are incomplete, that is, there are pairs of sets of intervals that cannot be related with any of the $\overline{\mathbf{r}}_{\exists\forall}$, $\overline{\mathbf{r}}_{\forall\exists}$, $\overline{\mathbf{r}}_{\forall\forall}$ or $\overline{\mathbf{r}}_{\exists\forall\land\forall\exists}$.

\begin{figure}[ht]
\centering
\begin{tikzpicture}[scale=0.7]
 \node (EE) at (0,0) {$\overline{\mathbf{r}}_{\exists\exists}$};
 \node (EA) at (-2,-2) {$\overline{\mathbf{r}}_{\exists\forall}$};
 \node (AE) at (2,-2) {$\overline{\mathbf{r}}_{\forall\exists}$};
 \node (EAAE) at (0,-4) {$\overline{\mathbf{r}}_{\exists\forall\land\forall\exists}$};
 \node (AA) at (0,-6) {$\overline{\mathbf{r}}_{\forall\forall}$};
 \draw[->] (AA) -- node[sloped,above,minimum width=0.1]{{\scriptsize $\subseteq$}} (EAAE);
 \draw[->] (EAAE) -- node[sloped,above,minimum width=0.1]{{\scriptsize $\subseteq$}} (AE);
 \draw[->] (EAAE) -- node[sloped,above,minimum width=0.1]{{\scriptsize $\supseteq$}} (EA);
 \draw[->] (AE) -- node[sloped,above,minimum width=0.1]{{\scriptsize $\supseteq$}} (EE);
 \draw[->] (EA) -- node[sloped,above,minimum width=0.1]{{\scriptsize $\subseteq$}} (EE);
\end{tikzpicture}
\caption{Hierarchy of relations.}\label{fig:hierarchy}
\end{figure}
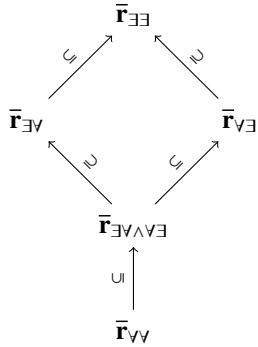

\subsection{Extensions to multiple domains}
\label{sec:extens-mult-doma}
Since annotations in our framework can range over different domains in different applications, one may be interested in combining several annotation domains such as annotating triples with a temporal term and a truth degree or degree of trust, etc.
In~\cite{Straccia10a}, we proposed an approach for easily combining multiple domains, based on the pointwise extension of domain operators to a product of domains. Here, we criticise this approach and propose a revised approach that better fits the intuition.

\subsubsection{Former approach and criticism}

The approach described in~\cite{Straccia10a} is the following. In general, assuming having domains $D_{1},
\ldots, D_{n}$, where $D_{i}= \tuple{L_{i}, \oplus_{i}, \otimes_{i}, \bot_{i}, \top_{i}}$, we may build the domain $D =
D_{1} \times \ldots \times D_{n} = \tuple{L, \oplus, \otimes, \bot, \top}$, where $L = L_{1} \times \ldots \times
L_{n}$, $\bot = \tuple{\bot_{1}, \ldots, \bot_{n}}$, $\top = \tuple{\top_{1}, \ldots, \top_{n}}$ and the
meet and join operations $\otimes$ and $\oplus$ are extended pointwise to $L$, \eg
$\tuple{\lambda_{1}, \ldots, \lambda_{n}} \otimes \tuple{\lambda_{1}', \ldots, \lambda_{n}}' = \tuple{\lambda_{1}
  \otimes \lambda'_{1}, \ldots, \lambda_{n} \otimes \lambda'_{n}}$.  For instance, 
\[
\fuzzyg{\triple{\uri{SkypeCollab}, \subclass, \uri{EbayCollab}}}{\tuple{[2009,2011], 0.3}}
\]
\noindent may indicate that during 2009-2011, the collaborators of Skype were also considered collaborators of Ebay to degree
$0.3$ (here we combine a temporal domain and a fuzzy domain).  The interesting point of our approach is that the rules
of the deductive systems need not be changed, nor the query answering mechanism (except to provide the support to
compute $\otimes$ and $\oplus$ accordingly).

The problem with this approach is that the annotations are dealt with independently from each others. As a result, \eg~the truth value $0.3$ does not apply to the time range $[2009,2011]$. This problem is made very apparent when one observes the unexpected consequences of our $\oplus$ operator on such a combination:

\[
    \begin{array}{l}
      \fuzzyg{\triple{\uri{SkypeCollab}, \subclass, \uri{EbayCollab}}}{\tuple{[2005, 2009], 1}}\\
      \fuzzyg{\triple{\uri{SkypeCollab}, \subclass, \uri{EbayCollab}}}{\tuple{[2009,2011], 0.3}}
		\end{array}
\]
\nd Applying the point-wise operation $\oplus$, this leads to the conclusion:
\[
\fuzzyg{\triple{\uri{SkypeCollab}, \subclass, \uri{EbayCollab}}}{\tuple{[2005, 2011], 1}}
\]

\nd This defies the intuition that, between 2005 and 2009, Skype collaborators where also Ebay employees (collaborate to degree $1$), but from 2009 to 2011 Skype collaborators were Ebay collaborators to the degree $0.3$. The pointwise aggregation does not follow this intuition and levels up everything. In the example above, we would like to say that the fuzzy value itself has a duration, so that the temporal interval corresponds more to an annotation of a quadruple.
Note that this problem is not specific to the combination of time and fuzziness. We observe a similar issue when combining provenance, for instance, with other domains:
\[
    \begin{array}{l}
    \fuzzyg{\triple{\uri{skypeEmp}, \subclass, \uri{ebayEmp}}}{\tuple{[2005, 2009], {\texttt{\footnotesize wikipedia}}}}\\
    \fuzzyg{\triple{\uri{skypeEmp}, \subclass, \uri{ebayEmp}}}{\tuple{[1958, 2012], {\texttt{\footnotesize wrong}}}}
    \end{array}
\]

\nd Using a point-wise aggregation method, the result would be:
\[
\begin{array}{rl}
\triple{\uri{skypeEmp}, \subclass, \uri{ebayEmp}}:& \langle[1958, 2012],\\
                                                  & \texttt{\footnotesize wikipedia}\lor\texttt{\footnotesize wrong}\rangle
\end{array}
\]
\nd which entails:
\[
\fuzzyg{\triple{\uri{skypeEmp}, \subclass, \uri{ebayEmp}}}{\tuple{[1958, 2012], {\texttt{\footnotesize wikipedia}}}}
\]

\nd Again, the problem is that provenance here does not define the provenance of the temporal annotation and the temporal annotation is not local to a certain provenance.
  
In order to match the intuition, we devise a systematic construction that defines a new compound domain out of two existing domains.

\subsubsection{Improved Formalisation} \label{sec:formalisation}

In this section, we propose a generic construction that builds an annotation domain by combining two predefined domains in a systematic way. To achieve this, we will assume the existence of two annotation domains $D_1=\tuple{L_1,\oplus_1,\otimes_1,\bot_1,\top_1}$ and $D_2=\tuple{L_2,\oplus_2,\otimes_2,\bot_2,\top_2}$ which will be instantiated in examples with the temporal domain for $D_1$ (abbreviated $D_\mathsf{t}$) and either the fuzzy domain ($D_\mathsf{f}$) or the provenance domain ($D_\mathsf{p}$) for $D_2$. We denote the temporal and fuzzy combination \textit{time+fuzzy}, and the temporal and provenance combination \textit{time+provenance}.

\paragraph{Intuition and desired properties}

In our former approach, we remarked that some information is lost in the join operation. Considering time+fuzzy, we see that the join should represent temporary \emph{changes} in the degree of truth of the triple. Yet, it is clear that representing such changes cannot be done with a simple pair (intervals,value). So, as a first extension of our previous na\"ive solution, we suggest using sets of pairs of primitive annotations, as exemplified below.
\[
\begin{array}{ll}
\triple{\uri{SkypeCollab}, \subclass, \uri{EbayCollab}}: & \{\tuple{[2005, 2009], 1},\\
                                                   & \ \tuple{[2009,2011], 0.3}\}
\end{array}
\]

\nd Starting from this, we devise an annotation domain that correctly matches the intuitive meaning of the compound annotations. The annotated triple above can be interpreted as follows: for each pair in the annotation, for each time point in the temporal value of the pair, the triple holds to at least the degree given by the fuzzy value of the pair. The time+provenance combination is interpreted analogously, except that the triple holds (at least) in the context given by the provenance value of the pair.

This interpretation of the compound annotations implies that multiple sets of pairs can convey the exact same information. For example, the following time+fuzzy annotated triples are equivalent:
\[
    \begin{array}{ll}
    \triple{\uri{SkypeCollab}, \subclass, \uri{EbayCollab}}:&\hspace{-3mm}\{\tuple{[2005,2009], 1}\}\\
    \triple{\uri{SkypeCollab}, \subclass, \uri{EbayCollab}}:&\hspace{-3mm}\{\tuple{[2005,2009], 0.3},\\
    &\hspace{-3mm}\tuple{[2005,2009], 1}\}
  \end{array}
\]

\nd From this observation, we postulate the following desired property:

\begin{property}\label{prop:functionality}
For all $x\in L_1, y,y'\in L_2$ and for all \rhodf triples $\tau$, $\tau:\{\tuple{x,y},\tuple{x,y'}\}$ is semantically equivalent to $\tau:\{\tuple{x,y\oplus_2 y'}\}$.
\end{property}

\nd Consequently, it is always possible to assign a unique element $y\in L_2$ to a given element of $L_1$. Thus, an arbitrary set of pairs in $L_1\times L_2$ is equivalently representable as a partial mapping from $L_1$ to $L_2$. Additionally, given a certain time interval, we can easily compute the maximum known degree to which a time+fuzzy annotated triple holds. For instance, with the annotation $\{\tuple{[2005,2009], 0.3},\tuple{[2008,2011], 1}\}$, we can assign the degree $1$ to any subset of $[2008,2011]$; the degree $0.3$ to any subset of $[2005,2009]$ which is not contained in $[2008,2009]$; the degree $0$ to any other temporal value.

This remark justifies that we can consider a compound annotation $A$ as a \emph{total} function from $L_1$ and $L_2$. From now on, whenever $A$ is a finite set of pairs, we will denote by $\overline{A}$ the function that maps elements of $L_1$ to an element of $L_2$ that, informally, minimally satisfies the constraints imposed by the pairs in $A$. This is formalised below. For instance, if $A=\{\tuple{[2005,2009], 0.3},\tuple{[2008,2011], 1}\}$, then:
\begin{displaymath}
  \overline{A}(x) = \left\{ \begin{array}{ll}
   1   & \text{if } x\subseteq[2008,2011] \\
   0.3 & \text{if } x\subseteq[2005,2009] \text{ and } x\not\subseteq[2008,2009] \\
   0   & \text{otherwise}
  \end{array} \right.
\end{displaymath}

Whereas in this example, for the time+fuzzy domain the value of $\overline{A}$ for a particular interval seems to follow quite intuitively, let us next turn to the less obvious combination of time+provenance. 
Here, we postulate that the following triples
\[
\begin{array}{rl}
  \tau: \{& \tuple{[2005,2009], \texttt{\footnotesize wikipedia}},\\
          & \tuple{[2008,2011], \texttt{\footnotesize wrong}}\}\\
  \tau: \{& \tuple{[2005,2007], \texttt{\footnotesize wikipedia}},\\
          & \tuple{[2007,2009], \texttt{\footnotesize wikipedia}},\\
          & \tuple{[2008,2011], \texttt{\footnotesize wrong}}\}\\
  \tau: \{& \tuple{[2005,2008], \texttt{\footnotesize wikipedia}},\\
          & \tuple{[2008,2009], \texttt{\footnotesize wikipedia}\lor\texttt{\footnotesize wrong}},\\
          & \tuple{[2009,2011], \texttt{\footnotesize wrong}}\}
\end{array}
\]

\noindent represent in fact equivalent annotations. 
Let us check the intuition behind this on a particular interval, $[2005,2009]$, which for the first triple has unambigously associated the
provenance value \texttt{\footnotesize wikipedia}.
Considering the second annotated triple, we observe that the provenance \texttt{\footnotesize wikipedia} can likewise be associated with the interval $[2005,2009]$ because this provenance is associated with two intervals that -- when joined -- cover the time span $[2005,2009]$. In the case of the last annotated triple, the provenance \texttt{\footnotesize wikipedia}$\lor$\texttt{\footnotesize wrong} means that the triple holds in \texttt{\footnotesize wikipedia} as well as in \texttt{\footnotesize wrong} (notice that $x\lor y$ means that the assertion holds in $x$ and in $y$ likewise, see Section~\ref{sec:provenance-domain} for details). Intuitively, we expect for the last triple that the provenance associated with the joined interval $[2005,2009]$ is obtained from applying the meet operator over the respective provenance annotations \texttt{\footnotesize wikipedia} (for the partial interval $[2005,2008]$) and $\texttt{\footnotesize wikipedia}\lor\texttt{\footnotesize wrong}$ (for the partial interval $[2008,2009]$), \ie $(\texttt{\footnotesize wikipedia}\lor\texttt{\footnotesize wrong}) \land \texttt{\footnotesize wikipedia}$ 
 which -- again -- is equivalent to \texttt{\footnotesize wikipedia} in the provenance domain. Besides, considering now the interval $[2005,2011]$, the triple is true in either \texttt{\footnotesize wikipedia.org} or \texttt{\footnotesize wrong}, which is modelled as $(\texttt{\footnotesize wikipedia}\land\texttt{\footnotesize wrong})$ in the provenance domain.
Let us cast this intuition into another property we want to ensure on the function $\overline{A}$:

\begin{property}\label{prop:overlap}
 Given a set of annotation pairs $A$, for all $x_0\in L_1$ whenever $\exists J\subseteq A$ with
               $x_0\preceq_1 \bigoplusone{\tuple{x,y}\in J}x$, we have
$\overline{A}(x_0)\succeq_2 \bigotimestwo{\tuple{x,y}\in J}y$.
\end{property}

Our goal in what follows is to characterise the set of functions associated with a finite set of pairs, that is $\{\overline{A}\mid A\subseteq L_1\times L_2\}$, in a manner such that Property~\ref{prop:functionality} and Property~\ref{prop:overlap} are satisfied.

\paragraph{Formalisation}

As mentioned before, a compound annotation can be seen as a function that maps values of the first domain to values of the second domain. In order to get the desired properties above established, we restrict this function to a particular type of functions that we call \emph{quasihomomorphism} because it closely resembles a semiring homomorphism.

\begin{definition}[Quasihomomorphism]
Let $f$ be a function from $D_1 = \tuple{L_1,\oplus_1,\otimes_1,\bot_1,\top_1}$ to $D_2 = \tuple{L_2,\oplus_2,\otimes_2,\bot_2,\top_2}$. $f$ is a \emph{quasihomomorphism} of domains \iff for all $x,y\in L_1$: (i) $f(x\oplus_1 y)\succeq_2 f(x)\otimes_2 f(y)$ and (ii) $f(x\otimes_1 y) \succeq_2 f(x)\oplus_2 f(y)$. %
\end{definition}

\nd We now use quasihomomorphisms to define -- on an abstract level -- a compound domain of annotations.

\begin{definition}[Compound annotation domain]
Given two primitive annotation domains $D_1$ and $D_2$, the \emph{compound annotation domain} of $D_1$ and $D_2$ is the tuple $\tuple{L_{12},\oplus_{12},\otimes_{12},\bot_{12},\top_{12}}$ defined as follows:

\begin{itemize}
 \item $L_{12}$ is the set of quasihomomorphisms from $D_1$ to $D_2$;
 \item $\bot_{12}$ is the function defined such that for all $x\in L_1$, $\bot_{12}(x) = \bot_2$;
 \item $\top_{12}$ is the function defined such that for all $x\in L_1$, $\top_{12}(x) = \top_2$;
 \item for all $\lambda,\mu\in L_{12}$, for all $x\in L_1$, $(\lambda\oplus_{12}\mu)(x)=\lambda(x)\oplus_2\mu(x)$;
 \item for all $\lambda,\mu\in L_{12}$, for all $x\in L_1$, $(\lambda\otimes_{12}\mu)(x)=\lambda(x)\otimes_2\mu(x)$;
\end{itemize}
\end{definition}

\nd This definition yields again a valid RDF annotation domain, as stated in the following proposition:

\begin{proposition}
 $\tuple{L_{12},\oplus_{12},\otimes_{12},\bot_{12},\top_{12}}$ is an idempotent, commutative semiring and $\oplus_{12}$ is $\top_{12}$-annihilating.
\end{proposition}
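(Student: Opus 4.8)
The plan is to verify, one at a time, the requirements that make a tuple an RDFS annotation domain, together with the $\top$-annihilation of $\oplus$: that $\oplus_{12}$ is idempotent, commutative and associative; that $\otimes_{12}$ is commutative and associative; that $\bot_{12}\oplus_{12}\lambda=\lambda$, $\top_{12}\otimes_{12}\lambda=\lambda$, $\bot_{12}\otimes_{12}\lambda=\bot_{12}$ and $\top_{12}\oplus_{12}\lambda=\top_{12}$; and that $\otimes_{12}$ distributes over $\oplus_{12}$. Because $\oplus_{12}$ and $\otimes_{12}$ are defined \emph{pointwise} over $L_1$ and $\bot_{12},\top_{12}$ are the constant functions $x\mapsto\bot_2$ and $x\mapsto\top_2$, each requirement is an equality of functions on $L_1$ and therefore reduces to checking the corresponding identity argumentwise in $D_2$; e.g. $(\lambda\otimes_{12}(\mu\oplus_{12}\nu))(x)=\lambda(x)\otimes_2(\mu(x)\oplus_2\nu(x))=(\lambda(x)\otimes_2\mu(x))\oplus_2(\lambda(x)\otimes_2\nu(x))=((\lambda\otimes_{12}\mu)\oplus_{12}(\lambda\otimes_{12}\nu))(x)$ for all $x$, since $D_2$ is itself an annotation domain; idempotency, commutativity, associativity and the four identity/annihilation laws are equally immediate pointwise. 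As a by-product, the order $\preceq_{12}$ induced by $\oplus_{12}$ is exactly the pointwise order: $\lambda\preceq_{12}\mu$ iff $\lambda(x)\preceq_2\mu(x)$ for every $x$. So this layer of the argument is routine bookkeeping.

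The substance is \emph{well-definedness}: one must check that $\bot_{12}$ and $\top_{12}$ are quasihomomorphisms and --- the key point --- that $L_{12}$ is closed under $\oplus_{12}$ and $\otimes_{12}$, i.e. that the pointwise sum and product of two quasihomomorphisms is again a quasihomomorphism. For the constants this is immediate from $\bot_2\otimes_2\bot_2=\bot_2\oplus_2\bot_2=\bot_2$ and $\top_2\otimes_2\top_2=\top_2\oplus_2\top_2=\top_2$. For the binary operations I would first record two consequences of the semiring axioms of $D_2$ (cf. Remark~\ref{rem5}): $\otimes_2$ is $\preceq_2$-bounded, $a\otimes_2 b\preceq_2 a$, and $\preceq_2$-monotone. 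With monotonicity of $\oplus_2$ and $\otimes_2$ and the defining inequalities of $\lambda$ and $\mu$, three of the four needed conditions are one-liners: for $\lambda\oplus_{12}\mu$, condition (ii) reads $(\lambda\oplus_{12}\mu)(x\otimes_1 y)=\lambda(x\otimes_1 y)\oplus_2\mu(x\otimes_1 y)\succeq_2(\lambda(x)\oplus_2\lambda(y))\oplus_2(\mu(x)\oplus_2\mu(y))=(\lambda\oplus_{12}\mu)(x)\oplus_2(\lambda\oplus_{12}\mu)(y)$; condition (i) for $\lambda\otimes_{12}\mu$ follows by monotonicity and rearrangement of $\otimes_2$, and condition (ii) for $\lambda\otimes_{12}\mu$ by additionally distributing in $D_2$ and using that a join dominates each of its summands.

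The single genuinely delicate case --- where I expect to spend most of the effort --- is condition (i) for $\lambda\oplus_{12}\mu$: $(\lambda\oplus_{12}\mu)(x\oplus_1 y)\succeq_2(\lambda\oplus_{12}\mu)(x)\otimes_2(\lambda\oplus_{12}\mu)(y)$. Distributing the right-hand side in $D_2$ rewrites it as the $\oplus_2$-join of the four cross terms $\lambda(x)\otimes_2\lambda(y)$, $\lambda(x)\otimes_2\mu(y)$, $\mu(x)\otimes_2\lambda(y)$, $\mu(x)\otimes_2\mu(y)$, so it suffices to dominate each of them by $\lambda(x\oplus_1 y)\oplus_2\mu(x\oplus_1 y)$. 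The two diagonal terms are immediate: $\lambda(x)\otimes_2\lambda(y)\preceq_2\lambda(x\oplus_1 y)$ by condition (i) for $\lambda$, and symmetrically for $\mu$, while the join dominates each summand. The two mixed terms $\lambda(x)\otimes_2\mu(y)$ and $\mu(x)\otimes_2\lambda(y)$ are the crux: here I would first apply boundedness to pass, e.g., from $\lambda(x)\otimes_2\mu(y)$ to $\lambda(x)$ (or to $\mu(y)$), and then use condition (ii) of $\lambda$ together with the absorption-type interaction of $\otimes_1$ and $\oplus_1$ in $D_1$ --- facts of the shape $\lambda(x\otimes_1(x\oplus_1 y))\succeq_2\lambda(x)$ --- to relate that value back to one evaluated at $x\oplus_1 y$. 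This is exactly the step that exploits the ``quasi'' in quasihomomorphism and the order structure of $D_1$; once it is in place, closure under $\oplus_{12}$ and $\otimes_{12}$ is complete, and combining it with the pointwise verification of the algebraic laws establishes that $\langle L_{12},\oplus_{12},\otimes_{12},\bot_{12},\top_{12}\rangle$ is an idempotent commutative semiring with $\top_{12}$-annihilating $\oplus_{12}$.
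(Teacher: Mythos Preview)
Your decomposition is exactly right: the pointwise verification of the semiring identities is routine, and the only substance is closure of $L_{12}$ under $\oplus_{12}$ and $\otimes_{12}$. You also correctly isolated the one genuinely hard case, condition~(i) for $\lambda\oplus_{12}\mu$, and specifically the mixed cross terms $\lambda(x)\otimes_2\mu(y)$. The paper itself gives no proof of this proposition, so there is nothing to compare your argument to on that level.

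However, your sketch for the mixed terms does not go through, and the gap is not repairable. Your idea is to bound $\lambda(x)\otimes_2\mu(y)\preceq_2\lambda(x)$ and then use absorption $x\otimes_1(x\oplus_1 y)=x$ together with condition~(ii) to relate $\lambda(x)$ to $\lambda(x\oplus_1 y)$. But condition~(ii) gives $\lambda(x)=\lambda(x\otimes_1(x\oplus_1 y))\succeq_2\lambda(x)\oplus_2\lambda(x\oplus_1 y)\succeq_2\lambda(x\oplus_1 y)$, i.e.\ the \emph{wrong} direction: quasihomomorphisms are antitone (this is Lemma~\ref{prop:antitone} in the appendix), so you cannot get $\lambda(x\oplus_1 y)\oplus_2\mu(x\oplus_1 y)\succeq_2\lambda(x)$ this way.

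In fact closure under $\oplus_{12}$ fails outright. Take $D_1$ to be the four-element Boolean algebra $\{\bot_1,a,b,\top_1\}$ with $a,b$ incomparable, $\oplus_1=\vee$, $\otimes_1=\wedge$, and take $D_2=\{0,1\}$ with $\oplus_2=\max$, $\otimes_2=\min$. Let $\lambda$ be the indicator of the principal downset $\{\bot_1,a\}$ and $\mu$ the indicator of $\{\bot_1,b\}$. Both are quasihomomorphisms (principal ideals are closed under joins, giving~(i); and $x\wedge y\leq x$ gives~(ii)). But $(\lambda\oplus_{12}\mu)$ is the indicator of $\{\bot_1,a,b\}$, and condition~(i) fails at $x=a$, $y=b$: the left side is $(\lambda\oplus_{12}\mu)(a\vee b)=(\lambda\oplus_{12}\mu)(\top_1)=0$, while the right side is $(\lambda\oplus_{12}\mu)(a)\otimes_2(\lambda\oplus_{12}\mu)(b)=1\wedge 1=1$. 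So $\lambda\oplus_{12}\mu\notin L_{12}$, and the proposition as stated cannot hold without further restricting the class of functions in $L_{12}$ or the domains $D_1,D_2$.
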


\nd Quasihomomorphisms are abstract values that may not be representable syntactially. By analogy with XML datatypes~\cite{xmlschema11-2}, we can say that they represent the \emph{value space} of the compound domain. In the following, we want to propose a finite representation of some of these functions.
Indeed, as we have seen in the examples above, we intend to represent compound annotations just as finite sets of pairs of primitive annotations. Thus, continuing the analogy, the \emph{lexical space} is merely containing finite sets of pairs of primitive annotation values. To complete the definition, we just have to define a mapping from such finite representation to a corresponding quasihomomorphism. That is, we have to define the \emph{lexical-to-value mapping}.

Consider again the (primitive) domains $D_1$ and $D_2$ and let $A\subseteq L_1\times L_2$ be a finite set of pairs of primitive annotations. We define the function $\overline{A}: D_1\to D_2$ as follows:\footnote{Note that as $D_{2}$ is an annotation domain, the $\mathbf{lub}$ operation is well defined.}
\[
\forall z\in L_1, \overline{A}(z) = \mathbf{lub}\{\bigotimestwo{\tuple{x,y}\in J}y \mid J\subseteq A \text{ and }z\preceq_1\bigoplusone{\tuple{x,y}\in J}x\} \ .
\]

\begin{theorem}\label{thm:strongquasi}
If $A\subseteq L_1\times L_2$ is a finite set of pairs of primitive annotations, then $\overline{A}$ is a quasihomomorphism.
\end{theorem}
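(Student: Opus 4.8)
The plan is to work directly from the definition of $\overline{A}$. For $z\in L_1$ put $\mathcal{J}(z)=\{J\subseteq A\mid z\preceq_1\bigoplusone{\tuple{x,y}\in J}x\}$ and, for $J\subseteq A$, $\pi(J)=\bigotimestwo{\tuple{x,y}\in J}y$, so that $\overline{A}(z)=\mathbf{lub}\{\pi(J)\mid J\in\mathcal{J}(z)\}$; since $A$ is finite this is a finite $\oplus_2$-join and is well defined by Remark~\ref{remsemi}. To avoid a clash with the bound pair variables $x,y$, I write $u,v$ for the arguments of the two quasihomomorphism conditions. Throughout I use the following facts, which hold in every annotation domain: $\oplus$ is $\preceq$-monotone (immediate from idempotency, commutativity and associativity of $\oplus$); $\otimes$ is bounded, $\lambda_1\otimes\lambda_2\preceq\lambda_1$, and $\preceq$-monotone (Remark~\ref{rem5}); $\otimes$ distributes over finite $\oplus$-joins, including the empty one since $\bot\otimes\lambda=\bot$; and in the induced order, $a\preceq c$ and $b\preceq c$ together give $a\oplus b\preceq c$, while $S\subseteq S'$ gives $\mathbf{lub}\,S\preceq\mathbf{lub}\,S'$.

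First I would show that $\overline{A}$ is \emph{antitone}: if $a\preceq_1 b$ then every $J\in\mathcal{J}(b)$ satisfies $a\preceq_1 b\preceq_1\bigoplusone{\tuple{x,y}\in J}x$, hence $J\in\mathcal{J}(a)$; thus $\mathcal{J}(b)\subseteq\mathcal{J}(a)$ and therefore $\overline{A}(b)\preceq_2\overline{A}(a)$. This already yields condition (ii): since $\otimes_1$ is bounded in $D_1$ we have $u\otimes_1 v\preceq_1 u$ and $u\otimes_1 v\preceq_1 v$, so antitonicity gives $\overline{A}(u\otimes_1 v)\succeq_2\overline{A}(u)$ and $\overline{A}(u\otimes_1 v)\succeq_2\overline{A}(v)$; as $\oplus_2$ is the least upper bound for $\preceq_2$, this means $\overline{A}(u\otimes_1 v)\succeq_2\overline{A}(u)\oplus_2\overline{A}(v)$.

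For condition (i), fix arbitrary $J_1\in\mathcal{J}(u)$ and $J_2\in\mathcal{J}(v)$. By definition of $\mathcal{J}$, $u\preceq_1\bigoplusone{\tuple{x,y}\in J_1}x$ and $v\preceq_1\bigoplusone{\tuple{x,y}\in J_2}x$, so monotonicity of $\oplus_1$ gives $u\oplus_1 v\preceq_1(\bigoplusone{\tuple{x,y}\in J_1}x)\oplus_1(\bigoplusone{\tuple{x,y}\in J_2}x)$, which by idempotency, commutativity and associativity of $\oplus_1$ equals $\bigoplusone{\tuple{x,y}\in J_1\cup J_2}x$; hence $J_1\cup J_2\in\mathcal{J}(u\oplus_1 v)$. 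Next, $\pi(J_1)\otimes_2\pi(J_2)$ is the $\otimes_2$-product over the second components of the pairs in $J_1$ and in $J_2$, with those coming from pairs in $J_1\cap J_2$ counted twice; removing these duplicated factors can only move the value up in $\preceq_2$, so $\pi(J_1)\otimes_2\pi(J_2)\preceq_2\pi(J_1\cup J_2)\preceq_2\overline{A}(u\oplus_1 v)$, the last step because $J_1\cup J_2\in\mathcal{J}(u\oplus_1 v)$. Finally, distributivity of $\otimes_2$ over finite $\oplus_2$-joins expands $\overline{A}(u)\otimes_2\overline{A}(v)$ as $\bigoplustwo{J_1\in\mathcal{J}(u),\,J_2\in\mathcal{J}(v)}(\pi(J_1)\otimes_2\pi(J_2))$; since every summand is $\preceq_2\overline{A}(u\oplus_1 v)$, so is the join, which is condition (i). The degenerate cases $\mathcal{J}(u)=\emptyset$ or $\mathcal{J}(v)=\emptyset$ need no separate treatment, as both sides then reduce to comparisons involving $\bot_2$.

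The step I expect to need the most care is $\pi(J_1)\otimes_2\pi(J_2)\preceq_2\pi(J_1\cup J_2)$: because $\otimes_2$ is not assumed idempotent, $\pi(J_1)\otimes_2\pi(J_2)$ genuinely differs from $\pi(J_1\cup J_2)$ when $J_1\cap J_2\neq\emptyset$, and the passage from the multiset of $\otimes_2$-factors to the underlying set must be carried out one repeated factor at a time, each removal justified by $\lambda_1\otimes_2\lambda_2\preceq_2\lambda_1$ followed by $\preceq_2$-monotonicity of $\otimes_2$. Everything else is bookkeeping with the definition of $\overline{A}$ and the semiring axioms.
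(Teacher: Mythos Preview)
Your proof is correct and follows essentially the same route as the paper. Both arguments introduce the index sets you call $\mathcal{J}(z)$ (the paper writes $K^A_z$), establish $J_1\cup J_2\in\mathcal{J}(u\oplus_1 v)$ together with $\pi(J_1)\otimes_2\pi(J_2)\preceq_2\pi(J_1\cup J_2)$ for condition~(i), and use the inclusion $\mathcal{J}(b)\subseteq\mathcal{J}(a)$ whenever $a\preceq_1 b$ for condition~(ii). The only notable difference is packaging: for~(ii) you first isolate antitonicity of $\overline{A}$ and then combine it with boundedness of $\otimes_1$, whereas the paper argues the set inclusion $K^A_z\cup K^A_{z'}\subseteq K^A_{z\otimes_1 z'}$ directly; and for~(i) you prove only the inclusion $\{J_1\cup J_2\mid J_1\in\mathcal{J}(u),\,J_2\in\mathcal{J}(v)\}\subseteq\mathcal{J}(u\oplus_1 v)$, while the paper proves the full equality (together with an auxiliary lemma $K^A_{z\oplus_1 z'}=K^A_z\cap K^A_{z'}$ that is not actually needed for the inequality). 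Your version is slightly leaner in this respect, but the underlying mechanism is identical.
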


\nd The proof is mostly a sequence of manipulation of notations with little subtlety, so we refer the reader to Appendix~\ref{sec:proofs} for details.

Now, we know that we can translate an arbitrary finite set of pairs of primitive annotations into a compound annotation. However, using arbitrary sets of pairs is problematic in practice for two reasons: (1) several sets of pairs have equivalent meaning,\footnote{Particularly, we note that there can still be an infinite set of finite representations of the same compound annotation.} that is, the function induced by the two sets are identical; (2) the approach does neither gives a programmatic way of computing the operations ($\otimes_{12}$,$\oplus_{12}$) on compound annotations, nor gives us a tool to finitely represent the results of these operations.

Thus, we next turn towards how to choose a canonical finite representative for a finite set annotation pairs. To this end, we need a normalising function $N:\pow{L_1\times L_2}\to\pow{L_1\times L_2}$ such that for all $A,A'\subseteq L_1\times L_2, \overline{A}=\overline{A'}$ \iff $N(A)=N(A')$.  This will in turn also allow us to define the operations $\oplus_{12}$ and $\otimes_{12}$ over the set of normalised annotations.

\paragraph{Normalisation} We propose a normalisation algorithm based on two main operations:
\begin{description}
 \item[$\mathsf{Saturate}$:] informally, the saturate function increases the size of a set of pairs of annotations by adding any redundant pairs that ``result from the application of $\otimes$ and $\oplus$ to values existing in the initial pairs'';
 
 \item[$\mathsf{Reduce}$:] takes the output of the saturation step and removes ``subsumed'' pairs.
\end{description}

\nd In particular, the $\mathsf{Saturate}$ algorithm is adding pairs of annotations to the input such that in the end, all primitive annotations that can be produced by the use of existing values and operators $\otimes$ and $\oplus$ appear in the output.
The algorithm for $\mathsf{Saturate}$, $\mathsf{Reduce}$ and $\mathsf{Normalise}$ are given in Algorithm~\ref{alg:saturate}, Algorithm~\ref{alg:reduce} and Algorithm~\ref{alg:normalise} respectively.

\begin{algorithm}[ht!]
\caption{$\mathsf{Saturate}(A)$}\label{alg:saturate}
\begin{algorithmic}
 \REQUIRE $A\subseteq L_1\times L_2$ finite
 \ENSURE $\mathsf{Saturate}(A)$
 \STATE $R := \emptyset$;
 \FORALL {$X\subseteq \pow{A}$}
	  \STATE $R := R \cup \{\tuple{\bigoplusone{J\in X}\bigotimesone{\tuple{x,y}\in J}x,\bigotimestwo{J\in X}\bigoplustwo{\tuple{x,y}\in J}y}\}$;
	  \STATE $R := R \cup \{\tuple{\bigotimesone{J\in X}\bigoplusone{\tuple{x,y}\in J}x,\bigoplustwo{J\in X}\bigotimestwo{\tuple{x,y}\in J}y}\}$;
 \ENDFOR
 \RETURN $R$;
\end{algorithmic}
\end{algorithm}

\nd If the operations $\otimes_1$ and $\otimes_2$ are idempotent, Algorithm~\ref{alg:saturate} ensures that given a value $x\in L_1$ that is the result of using operators $\otimes_1$ and $\oplus_1$ on any number of primitive annotations of $L_1$ appearing in $A$, then there exists $y\in L_2$ such that $\tuple{x,y}$ exists in the output of $\mathsf{Saturate}$. Similarly, given $y\in L_2$ that can be obtained from combinations of values of $L_2$ appearing in $A$ and operators $\otimes_2$ and $\oplus_2$, then there exists $x\in L_2$ such that $\tuple{x,y}$ exists in the output of $\mathsf{Saturate}$.

\begin{example}\label{exEE}
Consider the following time+fuzzy annotation:
\[
    \{\tuple{[2000,2005], 0.7},\tuple{[2002,2008], 0.5}\}
\]
Application of the function saturate gives the following result:
\begin{align*}
    &\{\tuple{[2000,2005], 0.7},\tuple{[2002,2008], 0.5},\\
    &  \tuple{[2000,2008], 0.35},\tuple{[2002,2005], 0.7},\\
    &  \tuple{[2000,2005], 0.49},\tuple{[2002,2005], 0.49}\}
\end{align*}
\end{example}

\nd Now we notice that this can introduce redundant information, which should be eliminated. This is the goal of the function $\mathsf{Reduce}$ which is defined by Algorithm~\ref{alg:reduce}.

\begin{algorithm}[ht!]
\caption{$\mathsf{Reduce}(A)$}\label{alg:reduce}
\begin{algorithmic}
 \REQUIRE $A\subseteq L_1\times L_2$ finite and saturated
 \ENSURE $\mathsf{Reduce}(A)$
 \WHILE {$\exists\tuple{x,y}\in A,\exists\tuple{x',y'}\in A\setminus\{\tuple{x,y}\}$ such that $x\preceq_1 x'$ \AND $y \preceq_2 y'$}
   \STATE $R := R\setminus\{\tuple{x,y}\}$;
 \ENDWHILE
 \WHILE {$\exists\tuple{x,y}\in A$ such that $x = \bot_1$ \OR $y = \bot_2$}
   \STATE $R := R\setminus\{\tuple{x,y}\}$;
 \ENDWHILE
 \RETURN $R$;
\end{algorithmic}
\end{algorithm}

\begin{example}
Considering Example~\ref{exEE} the output of the $\mathsf{Saturate}$ algorithm above, the $\mathsf{Reduce}$ function gives the following result::
\begin{align*}
    &\{\tuple{[2000,2005], 0.7},\tuple{[2002,2008], 0.5},\tuple{[2000,2008], 0.35}\}
\end{align*}
\end{example}

\begin{algorithm}[ht!]
\caption{$\mathsf{Normalise}(A)$}\label{alg:normalise}
\begin{algorithmic}
 \REQUIRE $A\subseteq L_1\times L_2$ finite
 \ENSURE $\mathsf{Normalise}(A)$
 \RETURN $\mathsf{Reduce}(\mathsf{Saturate}(A))$;
\end{algorithmic}
\end{algorithm}

\begin{example}
Consider the following time+provenance annotation:
\[
    \{\tuple{[1998,2006],\texttt{\footnotesize wikipedia}},\tuple{[2001,2011],\texttt{\footnotesize wrong}}\}
\]
which normalises to:
\begin{align*}
    &\{\tuple{[1998,2011],\texttt{\footnotesize wikipedia}\land\texttt{\footnotesize wrong}},\tuple{[1998,2006],\texttt{\footnotesize wikipedia}},\\&\tuple{[2001,2011],\texttt{\footnotesize wrong}},\tuple{[2001,2006],\texttt{\footnotesize wikipedia}\lor\texttt{\footnotesize wrong}}\}
\end{align*}
Note that the pair $\tuple{[2001,2006],\texttt{\footnotesize wikipedia}\lor\texttt{\footnotesize wrong}}$ is introduced by Line~6 of Algorithm~\ref{alg:saturate} and is not discarded during the reduction phase.
\end{example}

\nd The following property can be shown:

\begin{proposition}\label{prop:normal}
 If $D_1=\tuple{L_1,\oplus_1,\otimes_1,\bot_1,\top_1}$ is a lattice then, for all $A\subseteq L_1\times L_2$ finite, $\overline{A}=\overline{\mathsf{Normalise}(A)}$.
\end{proposition}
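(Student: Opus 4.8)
The plan is to split $\mathsf{Normalise}=\mathsf{Reduce}\circ\mathsf{Saturate}$ and prove separately that $\overline{A}=\overline{\mathsf{Saturate}(A)}$ for every finite $A$ and that $\overline{S}=\overline{\mathsf{Reduce}(S)}$ for every finite set of pairs $S$, so that $\overline{\mathsf{Normalise}(A)}=\overline{\mathsf{Reduce}(\mathsf{Saturate}(A))}=\overline{\mathsf{Saturate}(A)}=\overline{A}$. I will use repeatedly that in an annotation domain $\otimes$ is bounded and $\preceq$-monotone (Remark~\ref{rem5}) and that $\otimes$ distributes over $\oplus$; and that, since $D_1$ is a lattice, it is (given the distributivity axiom) a \emph{distributive} lattice, so $\oplus_1$ distributes over $\otimes_1$ as well and every element of the sublattice of $D_1$ generated by the finitely many first components occurring in $A$ admits a finite $\oplus_1$-of-$\otimes_1$ normal form, which is what keeps the lexical space finite and makes $\mathsf{Saturate}$ terminate. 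Note also that for a finite set of pairs the value $\overline{(\cdot)}(z)$ is a finite $\oplus_2$-join, so no completeness of $D_2$ is needed.

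For the $\mathsf{Saturate}$ step I would introduce, as a proof device, the smallest set $\widehat{A}\supseteq A$ closed under the two operations $\star\colon\tuple{a,b},\tuple{a',b'}\mapsto\tuple{a\oplus_1 a',b\otimes_2 b'}$ and $\diamond\colon\tuple{a,b},\tuple{a',b'}\mapsto\tuple{a\otimes_1 a',b\oplus_2 b'}$ (possibly infinite). Unfolding the two lines of Algorithm~\ref{alg:saturate} with associativity gives $A\subseteq\mathsf{Saturate}(A)\subseteq\widehat{A}$ up to pairs whose first component is $\bot_1$ or whose second component is $\bot_2$; such pairs are harmless, i.e. adding or deleting them from any finite set of pairs leaves the induced function unchanged (a pair $\tuple{\cdot,\bot_2}$ only contributes $\bot_2$ to a $\mathbf{lub}$, and a pair $\tuple{\bot_1,\cdot}$ only contributes what the same subset without it already contributes). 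The crux is the claim: for every finite $J\subseteq\widehat{A}$ and every $z\preceq_1\bigoplus_{\tuple{x,y}\in J}x$ one has $\bigotimes_{\tuple{x,y}\in J}y\preceq_2\overline{A}(z)$. I would prove it by well-founded induction on the multiset of the ranks at which the pairs of $J$ enter the construction of $\widehat{A}$. The base case, all pairs in $A$, is immediate from the definition of $\overline{A}$. For the step pick $p\in J$ of positive rank, built from $p_1=\tuple{a,b}$, $p_2=\tuple{a',b'}$. If $p=\tuple{a\oplus_1 a',b\otimes_2 b'}$, replacing $p$ by $p_1,p_2$ leaves $\bigoplus_{J}x$ unchanged and can only increase $\bigotimes_{J}y$ (boundedness of $\otimes_2$) while lowering the rank, so the induction hypothesis applies. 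If $p=\tuple{a\otimes_1 a',b\oplus_2 b'}$, then from $a\otimes_1 a'\preceq_1 a$ and monotonicity of $\oplus_1$ we get $z\preceq_1(\bigoplus_{J\setminus\{p\}}x)\oplus_1 a$, and likewise with $a'$; the induction hypothesis on $(J\setminus\{p\})\cup\{p_1\}$ and on $(J\setminus\{p\})\cup\{p_2\}$ yields $\overline{A}(z)\succeq_2(\bigotimes_{J\setminus\{p\}}y)\otimes_2 b$ and $\overline{A}(z)\succeq_2(\bigotimes_{J\setminus\{p\}}y)\otimes_2 b'$, and taking $\oplus_2$ and using distributivity in $D_2$ recovers $(\bigotimes_{J\setminus\{p\}}y)\otimes_2(b\oplus_2 b')=\bigotimes_{J}y$. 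Given the claim, $\overline{A}(z)$ is an upper bound of the defining set of $\overline{\widehat{A}}(z)$, which contains the finite defining set of $\overline{A}(z)$; being the least upper bound of the latter it is the least upper bound of the former, so $\overline{\widehat{A}}=\overline{A}$, and sandwiching $\mathsf{Saturate}(A)$ between $A$ and $\widehat{A}$ (modulo harmless pairs) gives $\overline{\mathsf{Saturate}(A)}=\overline{A}$.

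For the $\mathsf{Reduce}$ step I would first record — by inspecting the two while-loops of Algorithm~\ref{alg:reduce} and using transitivity of $\preceq_1,\preceq_2$ — that $R=\mathsf{Reduce}(S)$ satisfies $R\subseteq S$ and that every pair of $S\setminus R$ is either $\bot$-like ($x=\bot_1$ or $y=\bot_2$) or dominated by some $\tuple{x',y'}\in R$ with $x\preceq_1 x'$ and $y\preceq_2 y'$. The inclusion $\overline{R}\preceq_2\overline{S}$ is immediate from $R\subseteq S$. For the converse, given $J\subseteq S$ admissible for $z$: if some pair of $J$ has second component $\bot_2$ then $\bigotimes_{J}y=\bot_2$ and we are done; otherwise form $J^{\ast}\subseteq R$ from $J$ by deleting every pair with first component $\bot_1$ (which leaves $\bigoplus_{J}x$ unchanged and only increases $\bigotimes_{J}y$) and replacing every remaining dominated pair by a dominator in $R$ (which only increases $\bigoplus_{J}x$, by monotonicity of $\oplus_1$, and only increases $\bigotimes_{J}y$, by monotonicity of $\otimes_2$); then $J^{\ast}$ is admissible for $z$ with $\bigotimes_{J^{\ast}}y\succeq_2\bigotimes_{J}y$, so $\overline{R}(z)\succeq_2\bigotimes_{J}y$, and joining over all admissible $J$ gives $\overline{R}(z)\succeq_2\overline{S}(z)$.

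I expect the main obstacle to be the inductive claim in the $\mathsf{Saturate}$ step: one must choose the right well-founded measure on finite subsets of $\widehat{A}$, be careful about the exact reading of the iterated $\bigoplus$ and $\bigotimes$ over a \emph{set} of pairs when several pairs share a component and $\otimes_2$ need not be idempotent, and handle the edge cases from empty index sets and $\bot$-like pairs. It is also there that the distributive-lattice structure of $D_1$ (together with distributivity in $D_2$) is genuinely exploited and that the hypothesis ``$D_1$ is a lattice'' pays off, both to keep the saturation finite and to make combinations of first components behave under normal forms; the bookkeeping in the $\mathsf{Reduce}$ step, by contrast, is routine.
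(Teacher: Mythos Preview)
Your proof is correct and follows essentially the same route as the paper: the two cases of your rank induction on $\widehat{A}$ are exactly the paper's step-by-step Lemmas establishing $\overline{R}=\overline{R\cup\{\tuple{a\oplus_1 c,\,b\otimes_2 d}\}}$ and $\overline{R}=\overline{R\cup\{\tuple{a\otimes_1 c,\,b\oplus_2 d}\}}$, and your domination argument for $\mathsf{Reduce}$ matches the paper's reduction lemma. Two small remarks: (i) you need not detour through the possibly infinite $\widehat{A}$, since your inductive claim applied directly to finite $J\subseteq\mathsf{Saturate}(A)$ already gives $\overline{\mathsf{Saturate}(A)}\preceq_2\overline{A}$; (ii) $\mathsf{Saturate}$ loops over the finite set $2^{2^A}$ and so terminates regardless of the lattice hypothesis---in the paper that hypothesis is invoked (via the identity $(u\oplus_1(a\otimes_1 c))=((u\oplus_1 a)\otimes_1(u\oplus_1 c))$) inside the analogue of your $\diamond$-case, though your argument shows that boundedness and monotonicity of $\otimes_1$ already suffice there.
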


\nd Notice that we must impose that the first primitive domain of annotation is a lattice for the normalisation to work, that is, we need that $z\preceq_1 x$ and $z\preceq_1 y$ iff $z\preceq_1 x\otimes_1 y$. Details of the proof can be found in Appendix~\ref{sec:proofs}.

The following theorem shows that the normalisation is actually unique up to equivalence of the corresponding functions.

\begin{theorem}\label{thm:normalisation}
 If $D_1=\tuple{L_1,\oplus_1,\otimes_1,\bot_1,\top_1}$ is a lattice then, for all $A,B\subseteq L_1\times L_2$ finite then $\overline{A}=\overline{B}\Leftrightarrow\mathsf{Normalise}(A)=\mathsf{Normalise}(B)$.
\end{theorem}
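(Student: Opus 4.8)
The plan is to derive the theorem from Proposition~\ref{prop:normal} together with an intrinsic description of $\mathsf{Normalise}(A)$ purely in terms of the induced function $\overline{A}$. The direction ``$\Leftarrow$'' is immediate: if $\mathsf{Normalise}(A)=\mathsf{Normalise}(B)$, then Proposition~\ref{prop:normal} gives $\overline{A}=\overline{\mathsf{Normalise}(A)}=\overline{\mathsf{Normalise}(B)}=\overline{B}$. For ``$\Rightarrow$'' it then suffices to show that $\mathsf{Normalise}(A)$ depends only on $\overline{A}$, since in that case $\overline{A}=\overline{B}$ forces $\mathsf{Normalise}(A)=\mathsf{Normalise}(B)$.

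The first key step is that, because $D_1$ is a lattice, $\overline{A}$ is \emph{antitone}: if $x\preceq_1 y$ then $x\otimes_1 y=x$ (in a lattice $\otimes_1$ is the meet), so condition (ii) of a quasihomomorphism --- which holds for $\overline{A}$ by Theorem~\ref{thm:strongquasi} --- yields $\overline{A}(x)=\overline{A}(x\otimes_1 y)\succeq_2\overline{A}(x)\oplus_2\overline{A}(y)\succeq_2\overline{A}(y)$. This is exactly the place where the lattice hypothesis on $D_1$ is used; without it the statement fails. The second key step consists of two facts about $\mathsf{Saturate}$, obtained by unwinding Algorithm~\ref{alg:saturate}: (a) every $\tuple{x,y}\in\mathsf{Saturate}(A)$ satisfies $y\preceq_2\overline{A}(x)$, which follows by iterating the two quasihomomorphism inequalities and the monotonicity of $\oplus_2$ and $\otimes_2$ along the nested $\bigoplus/\bigotimes$ defining the pair; and (b) for every $x\in L_1$ with $\overline{A}(x)\neq\bot_2$, running the second update rule of Algorithm~\ref{alg:saturate} on the finite non-empty family $\mathcal{J}=\{J\subseteq A\mid x\preceq_1\bigoplusone{\tuple{a,b}\in J}a\}$ produces a pair $\tuple{\hat{x},\overline{A}(x)}$ with $\hat{x}\succeq_1 x$ --- the first coordinate is a meet of joins each of which is $\succeq_1 x$, and the second coordinate is the finite join over $J\in\mathcal{J}$ of $\bigotimestwo{\tuple{a,b}\in J}b$, which by the definition of $\overline{A}$ and the finiteness of $A$ equals $\overline{A}(x)$. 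By (a) and antitonicity one then gets $\overline{A}(\hat{x})=\overline{A}(x)$, so $\hat{x}$ lies in the same fibre of $\overline{A}$ as $x$.

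Putting these together, I would prove that for every finite $A$,
\[
\mathsf{Normalise}(A)=\{\,\tuple{x,\overline{A}(x)}\mid x\neq\bot_1,\ \overline{A}(x)\neq\bot_2,\ x\text{ is }\preceq_1\text{-maximal in }\overline{A}^{-1}(\overline{A}(x))\,\}.
\]
For ``$\subseteq$'': a pair $\tuple{x,y}\in\mathsf{Normalise}(A)$ lies in $\mathsf{Saturate}(A)$ with $x\neq\bot_1$ and $y\neq\bot_2$, hence $y\preceq_2\overline{A}(x)$ by (a); if $y\prec_2\overline{A}(x)$ then the pair $\tuple{\hat{x},\overline{A}(x)}$ from (b) would dominate $\tuple{x,y}$ and differ from it, so $\mathsf{Reduce}$ would have deleted $\tuple{x,y}$, a contradiction; thus $y=\overline{A}(x)$, and a further domination argument, using (b) applied to any strictly larger element of the fibre, shows that $x$ must be $\preceq_1$-maximal in $\overline{A}^{-1}(y)$. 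For ``$\supseteq$'': if $x$ is $\preceq_1$-maximal in its fibre then $\hat{x}=x$ in (b), so $\tuple{x,\overline{A}(x)}\in\mathsf{Saturate}(A)$; and it survives $\mathsf{Reduce}$ since any potential dominator $\tuple{x',y'}\in\mathsf{Saturate}(A)$ satisfies $y'\preceq_2\overline{A}(x')\preceq_2\overline{A}(x)$ by (a) and antitonicity, forcing $y'=\overline{A}(x')=\overline{A}(x)$ and then $x'=x$ by maximality. Since the right-hand side of the displayed identity is a function of $\overline{A}$ only, ``$\Rightarrow$'' follows.

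The step I expect to be the main obstacle is (b), and more generally the bookkeeping required to show that the two-level combination rules of $\mathsf{Saturate}$ already generate the pairs one needs: one has to track the interaction of $\oplus$ and $\otimes$ on both coordinates (the second-coordinate rule being dual to the first), invoke idempotency to keep the rule applications finite, and crucially use that $A$ is finite so that the least upper bound in the definition of $\overline{A}$ is attained by a single iterated $\oplus_2$. A secondary but necessary point is pinning down the informal $\mathsf{Reduce}$ pseudocode as ``retain exactly the $\preceq$-maximal pairs with non-$\bot$ coordinates'' and checking that its while-loop converges to that set on a finite poset.
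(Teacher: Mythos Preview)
Your proof is correct and rests on the same core ingredients as the paper's: antitonicity of $\overline{A}$ when $D_1$ is a lattice (the paper isolates this as a separate lemma), and the fact that for each $x$ the pair $\langle\hat{x},\overline{A}(x)\rangle$ lies in $\mathsf{Saturate}(A)$ with $\hat{x}\succeq_1 x$; your $\hat{x}$ is exactly what the paper writes as $\widetilde{A}(x)=\bigotimesone{J\in K^A_x}\bigoplusone{\tuple{\alpha,\beta}\in J}\alpha$. Your fact~(a) is the paper's Lemma~\ref{lem:yoverline} transported from $A$ to $\mathsf{Saturate}(A)$ via Proposition~\ref{prop:normal}, and the conclusion that any $\langle x,y\rangle\in\mathsf{Normalise}(A)$ satisfies $y=\overline{A}(x)$ is the paper's Lemma~\ref{lem:match}.

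The difference is in how the pieces are assembled. You package them into an \emph{intrinsic characterisation}
\[
\mathsf{Normalise}(A)=\{\langle x,\overline{A}(x)\rangle\mid x\neq\bot_1,\ \overline{A}(x)\neq\bot_2,\ x\ \preceq_1\text{-maximal in }\overline{A}^{-1}(\overline{A}(x))\},
\]
from which the theorem is immediate. The paper instead runs a direct back-and-forth: given $\langle x,y\rangle\in\mathsf{Normalise}(A)$ it uses its Lemma~\ref{lem:exist} (applied to $B$) to find $u\succeq_1 x$ with $\langle u,y\rangle\in\mathsf{Normalise}(B)$, then reapplies the same lemma (to $A$) to get $v\succeq_1 u$ with $\langle v,y\rangle\in\mathsf{Normalise}(A)$, and concludes $x=u=v$ because a reduced set cannot contain two distinct $\preceq$-comparable pairs. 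Your route buys an explicit closed-form description of $\mathsf{Normalise}(A)$ as a by-product; the paper's route avoids verifying the ``$\supseteq$'' inclusion of your displayed identity and does not need to argue that fibres of $\overline{A}$ have maximal elements. Both are complete once Proposition~\ref{prop:normal} and Theorem~\ref{thm:strongquasi} are in hand.
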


\nd Again, to improve readability, we put the proof in Appendix~\ref{sec:proofs}.

\paragraph{Operations on normalised annotations}
We can now present the operations $\oplus_{12}$ and $\otimes_{12}$ on normalised finite sets of pairs.
\begin{itemize}
 \item $A\oplus_{12} B=\mathsf{Normalise}(A\cup B)$;
 \item $A\otimes_{12} B=\mathsf{Normalise}(\{\tuple{x\otimes_1 x',y\otimes_2 y'}\mid\tuple{x,y},\tuple{x',y'}\in A\times B\})$.
\end{itemize}

\nd Finally, with the proposed representation and operations, we devised a systematic approach to compute combination of domains using existing primitive domains. This implies that an implementation would not need to include operators that are specific to a given combination, as long as programmatic modules exist for the primitive annotation domains.

\subsubsection{Discussion} \label{sec:combinediscuss}

Our definition of a compound annotation domain is, to the best of our knowledge, a novelty in settings involving annotations: previous work on annotated RDF~\cite{Udrea10,Straccia10a}, annotated logic programmes~\cite{Kifer92} or annotated database relations~\cite{Green_Karvounarakis_Tannen:07} have not addressed this issue.  We present in this section some considerations with respect to the chosen approach.

\begin{enumerate}
 \item The normalisation algorithm is not optimised and would prove inefficient if directly implemented ``as is''. In this part, we have provided a working solution for normalising compound annotation as a mere proof of existence of such a solution.  By observing the examples that we provide for the time+fuzzy domain, it seems that the cost of normalising can be reduced significantly with appropriate strategies.

 \item As indicated by Theorem~\ref{thm:normalisation}, we only ensure that the normalisation is feasible for a combination of annotation domains where at least one is a lattice. Whether a normalisation function exists in the more general case of two commutative, idempotent, $\top$-annihilating semirings is an open question. %

 \item The method we provide defines a new domain of annotation in function of existing domains, such that it is possible to reason and to query triples annotated with pairs of values. This does not mean that it is possible to reason with a combination of triples annotated with the values of the first domain, and triples annotated with values of the second domain. For instance, reasoning with a combination of temporally annoted triples and fuzzy annotated triples does not boil down to reasonning over time+fuzzy-annotated triples. The next section discusses this issues and how non-annotated triples can be combined with annotated triples.
\end{enumerate}

\subsection{Integrating differently annotated triples in data and queries}\label{sec:classical_triples}
While our approach conservatively extends RDFS, we would like to be able to seamlessly reason with and query together
annotated triples and non-annotated triples. Since non-annotated triples can be seen as triples annotated with boolean
values, we can generalise this issue to reasoning and querying graphs annotated with distinct domains. For instinct, let
us assume that a dataset provides temporally annotated triples, another one contains fuzzy-annotated triples and yet
another is a standard RDF dataset. We want to provide a uniform treatment of all these datasets and even handle the merge
of differently annotated triples. Moreover, we expect to allow multiple annotation domains in AnQL queries.

\subsubsection{Multiple annotation domains in the data}

Consider the following example:
\[
\begin{array}{l}
  \fuzzyg{\triple{\uri{chadHurley}, \type, \uri{googleEmp}}}{[2006,2010]}\\
  \fuzzyg{\triple{\uri{chadHurley}, \type, \uri{googleEmp}}}{0.7}\\
  \fuzzyg{\triple{\uri{googleEmp}, \subclass, \uri{Person}}}{0.97}
\end{array}
\]
We can assume that the subclass relation has been determined by ontology matching algorithms, which typically return
confidence measures in the form of a number between $0$ and $1$. Consider as well the following example queries:

\begin{example}\label{sec:example-query01}
\  \begin{Verbatim} 
SELECT ?a WHERE {
    (chadHurley type googleEmp):?a
}
  \end{Verbatim}
\end{example}

\begin{example}\label{sec:example-query02}
\  \begin{Verbatim} 
SELECT ?a WHERE {
    (chadHurley type Person):?a
}
  \end{Verbatim}
\end{example}
We propose two alternative approaches to deal with multiple annotation domains. The first one simply seggregates the domains
of annotations, such that no inferences are made across differently annotated triples. The second one takes advantage of
the compound domain approach defined in Section~\ref{sec:extens-mult-doma}.

\paragraph{Seggregation of domains}
With this approach, distinct domains are not combined during reasoning, such that the first annotated triple together
with the third one would not produce new results. The query from Example~\ref{sec:example-query01} would have the
following answers: $\{?a/[2006,2010]\},\{?a/0.7\}$.  The query from Example~\ref{sec:example-query02} would have the answer $\{?a/0.679\}$ (under product t-norm $\otimes$).

The main advantage is that query answering is kept very straightforward. Moreover, it is possible to combine different annotation domains within the query by simply joining results from the seggregated datasets. The drawback is that reasoning would not complement non-annotated knowledge with annotated one and vice versa.

\paragraph{Using compound domains}
The principle of this approach is to assume that two primitive annotation values from distinct domains actually represent a pair with an implicit default value for the second element. The default value can be domain dependent or generic, such as using $\top$ or $\bot$ systematically. An example of domain specific default is found in~\cite{gutierrez-etal-2007} where the value $[-\infty,\mathsf{Now}]$ is used to fill the missing annotations in standard, non-annotated RDF. It can be noticed that using $\bot$ as a default would boil down to having seggregated datasets, as in the previous approach. The use of $\top$ has the advantage of being generic and allows one to combine knowledge from differently annotated sources in inferences. So, the query in
Example~\ref{sec:example-query01} has answer $\{?a/\{\tuple{[2006,2010],1},\tuple{[-\infty,+\infty],0.7}\}\}$, while Example~\ref{sec:example-query02} has the answer $\{?a/\{\tuple{[2006,2010],0.97},\tuple{[-\infty,+\infty], 0.679}\}\}$.

The main advantage is the possibility to infer new statements by combining various annotated or non-annotated triples. The drawbacks are that (i) our combination approach is, so far, limited to the case where one domain is a lattice; (ii) if triples with a new annotation domain are added, then it adds one dimension to the answers, which obliges to recompute existing answers; (iii) the combination of more than two domains may be particularly complex and possibly non-commutative.

\subsubsection{Multiple annotation domains in the query}
\label{sec:uniformquery}
When dealing with multiple domains in the query, we face a similar choice as in the data, but we are also offered the option to replace the default value with a variable. If seggregation of domains has been chosen, then distinct domains in the query are only used to match the corresponding data, but it is still possible to combine the results from differently annotated sources.
For instance:

\begin{Verbatim}[commandchars=\\|~,codes={\catcode`\$=3\catcode`_=8}]
SELECT ?e ?c ?t ?f WHERE {
  {(?c sc ebayEmp):?f}
 UNION
  {(?e type ?c):?t.}
 FILTER{?t $\preceq_t$ [2005,2011] AND ?f $\preceq_f$ 0.5}
}\end{Verbatim}

This query can be executed even on a dataset that does not include fuzzy value. The fuzzy-annotated triple pattern can be simply ignored and the temporally annotated pattern evaluated.

In the case of the second approach using compound domains, the choices are as follow:
\begin{enumerate}
\item add a single fresh annotation variable for all triples in the query that are missing a value for an annotation domain; or
\item add a different fresh annotation variable for each triple in the query; or
\item add a constant annotation such as $\top$ to all missing annotation values.
\end{enumerate}
In later discussions, we will use the meta-variable $\mathrm{\Theta}_{D}$ to represent the default value of domain $D$ assigned to annotations in the query triples.
\begin{example} \label{non-annotated} For instance, if we again consider the query (excluding the annotation variables)
  and input data from Example~\ref{exx1}, the query would look like:
\begin{Verbatim}
SELECT ?p ?c WHERE {
    (?p type :ebayEmp)
    OPTIONAL{(?p :hasCar ?c)}
}
\end{Verbatim}
\end{example}
Now, given the above three approaches for transforming this query we would get the following answers:
{\small
\renewcommand{\tabcolsep}{0.5cm}
\begin{center}
  \begin{tabular}{|c|cc|}
    \hline
    \multirow{3}{*}{Approach 1} & $?p/\uri{toivo}$ &  - \\
    & $?p/\uri{toivo}$ & $?c/\uri{peugeot}$\\
    & $?p/\uri{toivo}$ & $?c/\uri{renault}$\\
    \hline
    \multirow{2}{*}{Approach 2} & $?p/\uri{toivo}$ & $?c/\uri{peugeot}$\\
    & $?p/\uri{toivo}$ & $?c/\uri{renault}$\\
    \hline
    Approach 3 & \multicolumn{2}{c|}{$\emptyset$}\\
    \hline
  \end{tabular}
\end{center}
}

\subsubsection{Querying multi-dimensional domains}\label{sec:query-multi-dimens}

Similarly to the discussion in the previous subsection, we can encounter mismatches between the Annotated RDF dataset
and the \aSPARQL query.  In case the \aSPARQL query contains only variables for the annotations, the query can be
answered on any Annotated RDF dataset.  From a user perspective, the expected answers may differ from the actual
annotation domain in the dataset, \eg the user may be expecting temporal intervals in the answers when the answers
actually contain a fuzzy value.  For this reason some built-in predicates to determine the type of annotation should be
introduced, like $\mathsf{isTEMPORAL}$, $\mathsf{isFUZZY}$, etc.

If the \aSPARQL query contains annotation values and the Annotated RDF dataset contains annotations from a different
domain, one option is to not provide any answers.  Alternatively, we can consider combining the domain of the query with
the domain of the annotation into a multi-dimensional domain, as illustrated in the next example.

\begin{example} \label{exx1C}
 Assuming the following input data:
\vspace{-2mm}
   \[
     \fuzzyg{\triple{\uri{chadHurley}, \type, \uri{youtubeEmp}}}{{\texttt{\footnotesize chad}}}
     \vspace{-2mm}\]
 When performing the following query:
\begin{Verbatim}
SELECT ?p ?c WHERE {
    (?p type ?c):[2009, 2010]
}
\end{Verbatim}
we would interpret the data to the form: %
\vspace{-2mm}  \[
    \fuzzyg{\triple{\uri{chadhurley}, \type, \uri{youtubeEmp}}}{\tuple{{\texttt{\footnotesize chad}}, \mathrm{\Omega}_{\mathrm{temporal}}}}
\vspace{-2mm}  \]
while the query would be interpreted as:
\begin{Verbatim}[commandchars=\\|~,codes={\catcode`\$=3\catcode`_=8}]
SELECT ?p ?c WHERE {
    (?p type ?c):$\langle\Theta_\mathrm|provenance~$, [2009, 2010]$\rangle$
}
\end{Verbatim}
\nd where $\mathrm{\Omega}_{\mathrm{temporal}}$ and $\mathrm{\Theta}_{\mathrm{provenance}}$ are annotations corresponding to the default values of their respective domains, as discussed
in Section~\ref{sec:uniformquery}. The semantics of combining different domains into one multi-dimensional domain has
been discussed in Section~\ref{sec:extens-mult-doma}.
\end{example}

\section{Implementation Notes}
\label{sec:implementation-notes}
Our prototype implementation is split into two distinct modules: one for that implements the Annotated RDFS inferencing
and the second module is an implementation of the \aSPARQL query language that relies on the first module to retrieve
the data.  Our prototype implementation is based on SWI-Prolog's Semantic Web
library~\cite{WielemakerHuangvan-der-Meij:2008aa} and we present the architecture of the implementation in
Figure~\ref{fig:aRDF-schema}.

Our Annotated RDFS module consists of a bottom-up reasoner used to calculate the closure of a given RDF dataset
\textbf{1)}. The variable components comprise \textbf{2)} the specification of the given annotation domain; and
\textbf{3)} the ruleset describing the inference rules and the way the annotation values should be propagated.  For
\textbf{1)} we do not suggest a special RDF serialisation for temporal triples but rely on existing proposals using
reification~\cite{gutierrez-etal-2007}.  Annotation domains in \textbf{2)} are to be specified by appropriate lattice
operations and describing default annotations for non-annotated triples.

The rules in \textbf{3)} are specified using a high-level language to specify domain independent rules that abstracts
from peculiarities of the reification syntax. For example the following rule provides \textit{subclass inference} in the
RDFS ruleset:
{\small
\begin{verbatim}
rdf(O, rdf:type, C2, V) <== 
         rdf(O, rdf:type, C1, V1), 
         rdf(C1, rdfs:subClassOf, C2, V2), 
         infimum(V1, V2, V). 
\end{verbatim}}
\noindent \textbf{2)} and \textbf{3)} are independent of each other: it is possible to combine arbitrary rulesets and
domains (see above). %

The \aSPARQL module also implemented in Prolog relies on the SPARQL implementation provided by the ClioPatria Semantic
Web Server.\footnote{\url{http://www.swi-prolog.org/web/ClioPatria/}} For the \aSPARQL implementation, the domain
specification needs to be extended with the grammar rules to parse an annotation value and any built-in functions
specific to the domain.

More information and downloads of the prototype implementation can be found at {\small\url{http://anql.deri.org/}}.

\begin{figure}[t]
  \centering
  \includegraphics[scale=0.45]{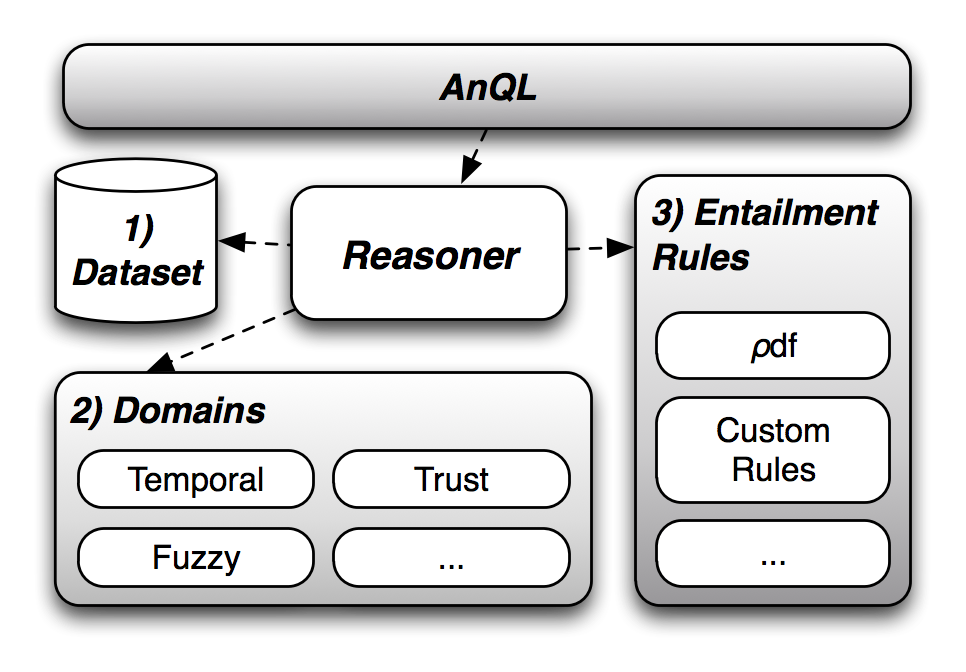}

  \caption{Annotated RDF implementation schema}

  \label{fig:aRDF-schema}
\end{figure}

\subsection{Implementation of specific domains}
\label{sec:impl-spec-doma}

For example, for the fuzzy domain the default value is considered to be $1$ and the $\otimes$ and $\oplus$ operations are,
respectively, the \textit{min} and \textit{max} operations.  The \aSPARQL grammar rules consist simply of calling the
parser predicate that parses a decimal value.

As for the temporal domain, we are representing triple annotations as \emph{ordered} list of disjoint time intervals.
This implies some additional care in the construction of the $\otimes$ and $\oplus$ operations.  For the representation
of $-\infty$ and $+\infty$ we are using the \stt{inf} and \stt{sup} Prolog atoms, respectively. Concrete time points are
represented as integers and we use a standard constraint solver over finite domains (CLPFD) in the $\otimes$ and
$\oplus$ operations. The default value for non-annotated triples is %
\stt{[inf,sup]}. The $\otimes$ operation is implemented as the recursive intersection of all the elements of the
annotation values, \ie temporal intervals.  The $\oplus$ operation is handled by constructing CLPFD expressions that
evaluate the union of all the temporal intervals.%
Again, the \aSPARQL grammar rules take care of adapting the parser to the specific domain and we have defined the domain
built-in operations described in Section~\ref{sec:doma-spec-issu}.

\subsection{Use-case example: Sensor Data}
\label{sec:sensor-data}
As a use-case for Annotated RDF and \aSPARQL, we present the scenario of exposing sensor readings as RDF data.
Representing sensor data as RDF, more specifically as Annotated RDF, enables not only a precise and correct
representation for sensor data but also the possibility of interlinking the data with other existing sources on the Web.

Consider the scenario in which each person is assigned a sensor tag (mode) to use in a building that is equipped with
several sensor base-stations (that will be responsible for recording the presence of tags).  Whenever sensor modes are
detected in the proximity of a base-station, sensor readings are created.  Normally this sensor reading will contain
the time of the reading, the identifier of the base-station and the tag.  
For our example we used datasets publicly available, that represent movements of persons in a conference.  For our test
purposes we used a subset of the dataset available at \url{http://people.openpcd.org/meri/openbeacon/sputnik/data/24c3/}
with a one hour time frame.

For the specific Annotated RDF domain, we can take as starting point the temporal domain, where each triple is annotated
with a temporal validity.  Conceptually, a temporally annotated triple would look like the following:
\begin{small}
\begin{Verbatim}
(tag4302 locatedIn room103):
  [2010-07-28T16:52:00Z,2010-07-28T14:59:00Z]
\end{Verbatim}
\end{small}
stating that the tag represented by the URI \uri{tag4302} was in the room identified by \uri{room103} during the specified time
period.  For the URIs we can define a domain vocabulary or rely on an already existing vocabulary.  

Since a sensor mode can, at any time, be discovered by several base-stations the issue arrises of how to detect which
base-station it is closer to.  This can be viewed as a data cleanup process that can be achieved as a post-processing
step over the stored data.  In our specific experiment, the sensor readings were of the following format:
  \begin{Verbatim}
2010-10-11 14:57:51  10.254.2.15  4302  83
2010-10-11 14:57:51  10.254.3.1   4302  83
2010-10-11 14:57:51  10.254.2.6   4302  83
\end{Verbatim}    
where the columns represent respectively:
\begin{inparaenum}
\item[1)] \texttt{timestamp} when the record was created;
\item[2)] \texttt{ip} address of the base station;
\item[3)] \texttt{tag} identifier and;
\item[4)] \texttt{ssi}. 
\end{inparaenum}
The \texttt{ssi} represents the signal strength of the response from the tag.  Each base station registers each tag at the same
timestamp with different signal strengths, which can be interpreted as the lower the signal strength value is, the
closer the tag is to the base station.  This value can then be used in the data cleanup process to discard the base
station records in which the tag is furthest from. 

In the data cleanup process we start by grouping all the \texttt{ips} (with the lowest \texttt{ssi}) for a given
\texttt{timestamp} and \texttt{tag}.  After this step we can merge all records that share the \texttt{tag} and
\texttt{ip} and have consecutive \texttt{timestamp} into a single interval.

\section{Conclusion}
\label{sec:conclusion}

In this paper we have presented a generalised RDF annotation framework that conservatively extends the RDFS semantics,
along with an extension of the SPARQL query language to query annotated data. 
The framework presented here is generic enough to cover other proposals for RDF annotations and their query languages.
Our approach extends the classical case of RDFS reasoning with features of different annotation domains, such as
temporality, fuzzyness, trust, etc.~and presents a uniform and programatic way to combine any annotation domains.

Furthermore, we presented a semantics for an extension of the SPARQL query language, \aSPARQL, that enables querying RDF
with annotations.  %
Queries exemplified in related literature for specific extensions of SPARQL can be expressed in
\aSPARQL.
Noticeably, our semantics goes beyond the expressivity of the current SPARQL specification and includes some features
from SPARQL~1.1 such as aggregates, variable assignments and sub-queries.
We also described our implementation of AnQL based on constraint logic programming techniques along with a practical
experiment for representing sensor data as Annotated RDF.

\section*{Acknowledgement}
We would like to thank Gergely Luk\'acsy for his participation in the development of this report. The work presented in
this report has been funded in part by Science Foundation Ireland under Grant No.\ SFI/08/CE/I1380
(L\'ion-2) and supported by COST Action IC0801 on Agreement Technologies.

\bibliography{references}
\bibliographystyle{elsarticle-num}

\appendix
\section{Proofs of theorems and propositions}
\label{sec:proofs}

\begin{small}

\subsection{Proof of Theorem~\ref{thm:strongquasi}}

We start by proving that for all $z,z'\in L_1, \overline{A}(z\oplus_1 z')\succeq_2\overline{A}(z)\otimes_2\overline{A}(z')$.

\begin{proof}
Let $z,z'\in L_1$. In order to prove the proposition, we introduce the notation $K^A_z=_\mathrm{def}\{J\subseteq A\mid z\preceq_1\bigoplusone{\tuple{x,y}\in J}x\}$. The property that we want to prove can be rewritten:
\[
 \begin{array}{l}
 \mathbf{lub}\{\bigotimestwo{\tuple{x,y}\in J}y \mid J\in K^A_{z\oplus_1 z'}\} \succeq_2\\
  \hspace{1.8cm}\mathbf{lub} \{\bigotimestwo{\tuple{x,y}\in J}y \mid J\in K^A_z\} \otimes_2\mathbf{lub} \{\bigotimestwo{\tuple{x,y}\in J'}y \mid J'\in K^A_{z'}\} \ .
 \end{array}
\]
\nd Let us introduce two intermediary lemmas:

\begin{lemma}
 For all $z,z'\in L_1, K^A_{z\oplus_1 z'}=K^A_z\cap K^A_{z'}$.
\end{lemma}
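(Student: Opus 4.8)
The plan is to unfold the definition of $K^A_\bullet$ and thereby reduce the claimed set equality to a purely order-theoretic fact about the join $\oplus_1$. For a finite $J\subseteq A$ abbreviate $s_J := \bigoplusone{\tuple{x,y}\in J}x \in L_1$. By definition $J\in K^A_z$ iff $z\preceq_1 s_J$, and likewise $J\in K^A_{z'}$ iff $z'\preceq_1 s_J$, and $J\in K^A_{z\oplus_1 z'}$ iff $z\oplus_1 z'\preceq_1 s_J$. Hence it is enough to prove, for an arbitrary $w\in L_1$,
\[
z\oplus_1 z'\preceq_1 w \quad\Longleftrightarrow\quad z\preceq_1 w \ \text{ and }\ z'\preceq_1 w \ ;
\]
applying this with $w=s_J$ and intersecting over all $J\subseteq A$ then gives $K^A_{z\oplus_1 z'}=K^A_z\cap K^A_{z'}$.

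First I would establish the displayed equivalence from the algebraic characterisation $\lambda\preceq_1\mu\Leftrightarrow\lambda\oplus_1\mu=\mu$, using only that $\oplus_1$ is idempotent, commutative and associative. For the direction $(\Leftarrow)$, assume $z\oplus_1 w=w$ and $z'\oplus_1 w=w$; then $(z\oplus_1 z')\oplus_1 w = z\oplus_1(z'\oplus_1 w) = z\oplus_1 w = w$, so $z\oplus_1 z'\preceq_1 w$. For the direction $(\Rightarrow)$, assume $(z\oplus_1 z')\oplus_1 w=w$; then $z\oplus_1 w = z\oplus_1\big((z\oplus_1 z')\oplus_1 w\big) = (z\oplus_1 z\oplus_1 z')\oplus_1 w = (z\oplus_1 z')\oplus_1 w = w$ by idempotence of $\oplus_1$, so $z\preceq_1 w$, and symmetrically $z'\preceq_1 w$.

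I do not anticipate any real difficulty: the statement is exactly the familiar fact that in the join semi-lattice $\tuple{L_1,\preceq_1,\oplus_1,\bot_1,\top_1}$ (Remark~\ref{remsemi}) the operation $\oplus_1$ computes the least upper bound of two elements. The only care needed is to keep the rewriting by associativity, commutativity and idempotence straight, and to observe that the argument uses nothing about $A$, about the second domain $D_2$, or about the $\mathbf{lub}$ appearing in the definition of $\overline{A}$ --- only that each $s_J$ is some element of $L_1$. This lemma will then serve as the engine for comparing the two $\mathbf{lub}$-expressions defining $\overline{A}(z\oplus_1 z')$ and $\overline{A}(z)\otimes_2\overline{A}(z')$ in the remainder of the proof.
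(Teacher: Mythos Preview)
Your proposal is correct and follows essentially the same approach as the paper: both reduce the set equality to the join-semilattice fact that $z\oplus_1 z'\preceq_1 w$ iff $z\preceq_1 w$ and $z'\preceq_1 w$, applied with $w=\bigoplusone{\tuple{x,y}\in J}x$. The paper proves the two inclusions directly and tersely (invoking ``by definition of $\oplus_1$'' for one direction and leaving the other implicit), whereas you spell out the underlying equivalence via the algebraic characterisation $\lambda\preceq_1\mu\Leftrightarrow\lambda\oplus_1\mu=\mu$ using idempotence, commutativity and associativity --- a welcome level of detail, but not a different argument.
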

\begin{proof}
\nd We simply prove each inclusion separately:
\begin{description}
  \item[$\subseteq$:] let $J\in K^A_{z\oplus_1 z'}$, which implies that $z\oplus_1 z'\preceq_1\bigoplusone{\tuple{x,y}\in J}x$. So, $z\preceq_1\bigoplusone{\tuple{x,y}\in J}x$ and $z'\bigoplusone{\tuple{x,y}\in J}x$, that is $J\in K^A_z$ and $J\in K^A_{z'}$.
  \item[$\supseteq$:] let $J\in K^A_z\cap K^A_{z'}$, which implies that $z\preceq_1\bigoplusone{\tuple{x,y}\in J}x$ and $z'\preceq_1\bigoplusone{\tuple{x,y}\in J}x$, so $z\oplus_1 z'\preceq_1\bigoplusone{\tuple{x,y}\in J}x$ by definition of $\oplus_1$. Consequently, $J\in K^A_{z\oplus_1 z'}$.
 \end{description}
 \end{proof}
 
\begin{lemma}
 For all $z,z'\in L_1, K^A_{z\oplus_1 z'}=\{J\cup J'\mid (J,J')\in K^A_z\times K^A_{z'}\}$.
\end{lemma}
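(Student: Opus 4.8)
The plan is to build directly on the lemma just proved, which gives $K^A_{z\oplus_1 z'}=K^A_z\cap K^A_{z'}$; it therefore suffices to establish the set identity $K^A_z\cap K^A_{z'}=\{J\cup J'\mid (J,J')\in K^A_z\times K^A_{z'}\}$, which I would prove by two inclusions.

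The inclusion from left to right is immediate: if $J\in K^A_z\cap K^A_{z'}$, then writing $J=J\cup J$ exhibits $J$ as the union of a member of $K^A_z$ and a member of $K^A_{z'}$.

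For the converse, take $J\in K^A_z$ and $J'\in K^A_{z'}$ (so in particular $J\cup J'\subseteq A$); I must show $J\cup J'\in K^A_{z\oplus_1 z'}$, i.e.\ (by the previous lemma) $J\cup J'\in K^A_z\cap K^A_{z'}$. The one fact I would isolate is monotonicity of the $\oplus_1$-aggregation under set inclusion: if $J\subseteq J''\subseteq A$ then $\bigoplusone{\tuple{x,y}\in J}x\preceq_1\bigoplusone{\tuple{x,y}\in J''}x$. This is routine: by commutativity and associativity of $\oplus_1$ one has $\bigoplusone{\tuple{x,y}\in J''}x=\bigl(\bigoplusone{\tuple{x,y}\in J}x\bigr)\oplus_1\bigl(\bigoplusone{\tuple{x,y}\in J''\setminus J}x\bigr)$, and $a\preceq_1 a\oplus_1 b$ holds for all $a,b\in L_1$ straight from the definition $\lambda_1\preceq_1\lambda_2\Leftrightarrow\lambda_1\oplus_1\lambda_2=\lambda_2$ (equivalently, this is the join-semilattice structure of Remark~\ref{remsemi}). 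Applying it to $J\subseteq J\cup J'$ gives $z\preceq_1\bigoplusone{\tuple{x,y}\in J}x\preceq_1\bigoplusone{\tuple{x,y}\in J\cup J'}x$, so $J\cup J'\in K^A_z$, and symmetrically $J\cup J'\in K^A_{z'}$; hence $J\cup J'\in K^A_{z\oplus_1 z'}$. (Alternatively, conclude directly: since $z$ and $z'$ are both $\preceq_1\bigoplusone{\tuple{x,y}\in J\cup J'}x$ and $\oplus_1$ is the least upper bound for $\preceq_1$, also $z\oplus_1 z'\preceq_1\bigoplusone{\tuple{x,y}\in J\cup J'}x$.)

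I do not expect a genuine obstacle: the whole argument is order-theoretic bookkeeping on the semilattice $\tuple{L_1,\preceq_1}$. The only points deserving a line of care are the monotonicity statement above and the degenerate case $J=\emptyset$, where $\bigoplusone{\tuple{x,y}\in\emptyset}x=\bot_1$ and $z\preceq_1\bot_1$ forces $z=\bot_1$; both are absorbed uniformly by the semilattice axioms and need no case split.
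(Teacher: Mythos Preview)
Your proof is correct and follows essentially the same route as the paper: both directions use the previous lemma to reduce to $K^A_z\cap K^A_{z'}$, invoke $J=J\cup J$ for the forward inclusion, and appeal to monotonicity of the $\oplus_1$-aggregation under set inclusion for the reverse. The paper simply asserts this monotonicity as ``clearly'' holding, whereas you spell it out; otherwise the arguments coincide.
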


\begin{proof}
\nd Again, we prove each inclusion separately:
\begin{description}
  \item[$\subseteq$:] trivial since $J\in K^A_{z\oplus_1 z'}$ implies that $J\in K^A_z\cap K^A_{z'}$ and $J=J\cup J$.
  \item[$\supseteq$:] let $J\in K^A_z$ and $J'\in K^A_{z'}$. Clearly, $\bigoplusone{\tuple{x,y}\in J}x\preceq_1\bigoplusone{\tuple{x,y}\in J\cup J'}x$. Consequently, $J\cup J'\in K^A_z$. Symmetrically, we prove that $J\cup J'\in K^A_{z'}$.
 \end{description}
\end{proof}

\nd  This allows us to rewrite the problem into:
\[
 \begin{array}{l}
 \mathbf{lub}\{\bigotimestwo{\tuple{x,y}\in J\cup J'}y \mid (J,J')\in K^A_z\times K^A_{z'}\}\succeq_2\\
 \hspace{1.2cm}\mathbf{lub}\{\bigotimestwo{\tuple{x,y}\in J}y \mid J\in K^A_z\}\otimes_2\mathbf{lub}\{\bigotimestwo{\tuple{x,y}\in J'}y \mid J'\in K^A_{z'}\}
 \end{array}
\]
which is more concisely written:
\[
 \bigoplustwo{(J,J')\in K^A_z\times K^A_{z'}}\left(\bigotimestwo{\tuple{x,y}\in J\cup J'}y\right)\succeq_2\bigoplustwo{J\in K^A_z}\left(\bigotimestwo{\tuple{x,y}\in J}y\right)\otimes_2\bigoplustwo{J'\in K^A_{z'}}\left(\bigotimestwo{\tuple{x,y}\in J'}y\right) \ .
\]
This is easily established by distributivity of $\otimes_2$ over $\oplus_2$ in the right hand side and by remarking that~\footnote{Note that if 
$\otimes_2$ is idempotent, then the inequality becomes an equality.}
\[
\bigoplustwo{J'\in K^A_{z'}}\left(\bigoplustwo{J\in K^A_z}(\bigotimestwo{\langle x,y\rangle\in J}y\otimes_2\bigotimestwo{\langle x,y\rangle\in J'}y)\right)\preceq_2\bigoplustwo{(J,J')\in K^A_z\times K^A_{z'}}\left(\bigotimestwo{\langle x,y\rangle\in J\cup J'}y\right) \ .
\]
\end{proof}

\nd  The second part of the proof demonstrates that for all $z,z'\in L_1, \overline{A}(z\otimes_1 z') \succeq_2 \overline{A}(z)\oplus_2\overline{A}(z')$

\begin{proof}
Before giving the main arguments for the proof, we rewrite the goal as follows:~\footnote{We here reuse the notation $K^A_z$ introduced above.}
 \[
 \begin{array}{l}
 \mathbf{lub}\{\bigotimestwo{\tuple{x,y}\in J}y\mid J\in K^A_z\}\oplus_2 \mathbf{lub}\{\bigotimestwo{\tuple{x,y}\in J'}y\mid J'\in K^A_{z'}\} \\\hspace*{1cm}\preceq_2  \mathbf{lub}\{\bigotimestwo{\tuple{x,y}\in J}y\mid J\in K^A_{z\otimes_1 z'}\} \ , 
 \end{array}
 \]
\noindent which again can be made more concise with the following notation:
 \[
   \bigoplustwo{J\in K^A_z}\left(\bigotimestwo{\tuple{x,y}\in J}y\right)\oplus_2\bigoplustwo{J\in K^A_{z'}}\left(\bigotimestwo{\tuple{x,y}\in J'}y\right) \preceq_2 \bigoplustwo{J\in K^A_{z\otimes_1 z'}}\left(\bigotimestwo{\tuple{x,y}\in J}y\right) \ .
 \]
\noindent Associativity of $\oplus_2$ simplifies the equation further:
 \[
   \bigoplustwo{J\in K^A_z\cup K^A_{z'}}\left(\bigotimestwo{\tuple{x,y}\in J}y\right) \preceq_2 \bigoplustwo{J\in K^A_{z\otimes_1 z'}}(\bigotimestwo{\tuple{x,y}\in J}y) \ .
 \]
 We established the result by first proving the following lemma:

 \begin{lemma}\label{lem:union}
  For all $z,z'\in L_1, K^A_z\cup K^A_{z'}\subseteq K^A_{z\otimes_1 z'}$.
 \end{lemma}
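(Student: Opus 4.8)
The plan is to split the desired inclusion $K^A_z\cup K^A_{z'}\subseteq K^A_{z\otimes_1 z'}$ into the two inclusions $K^A_z\subseteq K^A_{z\otimes_1 z'}$ and $K^A_{z'}\subseteq K^A_{z\otimes_1 z'}$, which together are equivalent to it. Since $\otimes_1$ is commutative, $z'\otimes_1 z = z\otimes_1 z'$, so the second inclusion follows from the first by exchanging the roles of $z$ and $z'$; hence it is enough to establish $K^A_z\subseteq K^A_{z\otimes_1 z'}$.

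To do this I would take an arbitrary $J\in K^A_z$, which by the definition of $K^A_z$ means $z\preceq_1\bigoplusone{\tuple{x,y}\in J}x$, and aim to show $z\otimes_1 z'\preceq_1\bigoplusone{\tuple{x,y}\in J}x$, i.e.\ $J\in K^A_{z\otimes_1 z'}$. The one ingredient needed is the boundedness of the ``conjunction'' operator recorded in Remark~\ref{rem5}, namely $z\otimes_1 z'\preceq_1 z$. Chaining this with the hypothesis $z\preceq_1\bigoplusone{\tuple{x,y}\in J}x$ through transitivity of the partial order $\preceq_1$ gives exactly $z\otimes_1 z'\preceq_1\bigoplusone{\tuple{x,y}\in J}x$, which closes the argument.

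I do not anticipate a genuine obstacle here: the whole proof amounts to a single use of boundedness of $\otimes_1$ together with transitivity of $\preceq_1$, applied once for each of the two inclusions. The only subtlety worth flagging is that boundedness $\lambda_1\otimes_1\lambda_2\preceq_1\lambda_1$ already holds in any idempotent, commutative semiring whose $\oplus$ is $\top$-annihilating — it follows from $(\lambda_1\otimes_1\lambda_2)\oplus_1\lambda_1 = \lambda_1\otimes_1(\lambda_2\oplus_1\top_1) = \lambda_1\otimes_1\top_1 = \lambda_1$ by distributivity and the $\top$-annihilating law — so, in contrast to Proposition~\ref{prop:normal}, this lemma requires no lattice assumption on $D_1$.
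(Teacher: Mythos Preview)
Your proof is correct and follows essentially the same approach as the paper: take $J\in K^A_z$, use boundedness $z\otimes_1 z'\preceq_1 z$ together with transitivity of $\preceq_1$ to conclude $J\in K^A_{z\otimes_1 z'}$, and then argue symmetrically for $K^A_{z'}$. Your additional remark that boundedness follows from the semiring axioms alone (so no lattice hypothesis on $D_1$ is needed here) is a correct and worthwhile observation that the paper leaves implicit.
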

\begin{proof}
 \nd Let $J\in K^A_z$. It holds that $z\preceq_1\bigoplusone{\tuple{x,y}\in J}x$ and $z\otimes_1 z'\preceq_1 z$. So $J\in K^A_{z\otimes_1 z'}$. Idem for any $J\in K^A_{z'}$.  
 \end{proof}

\nd This allows us now to easily see that 
\[
\bigoplustwo{J\in K^A_z\cup K^A_{z'}}\left(\bigotimestwo{\tuple{x,y}\in J}y\right) \preceq_2 \bigoplustwo{J\in K^A_{z\otimes_1 z'}}(\bigotimestwo{\tuple{x,y}\in J}y) \ .
\]
\nd  Notice that the opposite inequality does not hold in general.
\end{proof}

\subsection{Proof of Theorem~\ref{thm:normalisation}}

In order to prove the theorem, we first demonstrate the following proposition:

\begin{proposition}\label{prop:normalP}
 If $D_1=\tuple{L_1,\oplus_1,\otimes_1,\bot_1,\top_1}$ is a lattice then, for all $A\subseteq L_1\times L_2$ finite, $\overline{A}=\overline{\mathsf{Normalise}(A)}$.
\end{proposition}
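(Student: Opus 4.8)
The plan is to prove Proposition~\ref{prop:normalP} (which is the content of the final statement, under the name Proposition~\ref{prop:normal} / Theorem~\ref{thm:normalisation}'s precursor) by showing that neither of the two sub-operations $\mathsf{Saturate}$ and $\mathsf{Reduce}$ changes the induced quasihomomorphism $\overline{(\cdot)}$. That is, I would establish the two lemmas
\[
\overline{A} = \overline{\mathsf{Saturate}(A)} \quad\text{and}\quad \overline{A'} = \overline{\mathsf{Reduce}(A')}
\]
for all finite $A\subseteq L_1\times L_2$ and all finite saturated $A'$, and then chain them: $\overline{\mathsf{Normalise}(A)} = \overline{\mathsf{Reduce}(\mathsf{Saturate}(A))} = \overline{\mathsf{Saturate}(A)} = \overline{A}$. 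Both lemmas split into two inclusions. For $\mathsf{Saturate}$, the inclusion $A\subseteq\mathsf{Saturate}(A)$ (take singletons $X=\{J\}$ with $J=\{\tuple{x,y}\}$) gives $\overline{\mathsf{Saturate}(A)}(z)\succeq_2\overline{A}(z)$ immediately, since $\overline{(\cdot)}$ is monotone in its set argument (a superset of pairs can only enlarge the family $\{J\mid z\preceq_1\bigoplus_{1}x\}$ and hence the $\mathbf{lub}$). The reverse inequality is the real work: one must show that every new pair $\tuple{x,y}$ added by $\mathsf{Saturate}$ is already ``covered'' by the old ones in the sense that adding it does not increase $\overline{A}(z)$ for any $z$. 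Concretely, if $\tuple{\bigoplus_{1,J\in X}\bigotimes_{1}x,\ \bigotimes_{2,J\in X}\bigoplus_{2}y}$ is a newly added pair and $z\preceq_1$ that first component, then one decomposes the witnessing family and uses the distributivity of $\otimes_2$ over $\oplus_2$ (exactly as in the proof of Theorem~\ref{thm:strongquasi}) to express $\bigotimes_{2}\bigoplus_{2}y$ as a join/meet combination of values already reachable from $A$, so that the $\mathbf{lub}$ defining $\overline{A}(z)$ already dominates it. The lattice hypothesis on $D_1$ enters here: we need $z\preceq_1 x$ and $z\preceq_1 y$ iff $z\preceq_1 x\otimes_1 y$, so that the $\otimes_1$-combinations introduced in Line~6 of Algorithm~\ref{alg:saturate} interact correctly with the condition $z\preceq_1\bigoplus_{1}x$ defining the families $K^A_z$.

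For $\mathsf{Reduce}$, observe first that the two ``while'' loops only remove pairs, so $\mathsf{Reduce}(A')\subseteq A'$ and hence $\overline{\mathsf{Reduce}(A')}(z)\preceq_2\overline{A'}(z)$ by the same monotonicity. For the other direction it suffices to check that removing a single subsumed pair $\tuple{x,y}$ (one with some $\tuple{x',y'}$ in the set, $x\preceq_1 x'$ and $y\preceq_2 y'$) or a pair with $x=\bot_1$ or $y=\bot_2$ does not decrease $\overline{(\cdot)}(z)$ for any $z\in L_1$. The $\bot$ cases are easy: a pair with $y=\bot_2$ contributes $\bot_2$ to every meet $\bigotimes_2 y$ it participates in, killing that term, so it is inert in the $\mathbf{lub}$ (modulo rechecking that any family $J$ using it can be shrunk to one that does not, which is where saturation guarantees the needed substitute exists); a pair with $x=\bot_1$ satisfies $z\preceq_1\bot_1$ only for $z=\bot_1$ (in the lattice, assuming $\bot_1$ is genuinely least), which is a degenerate case. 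For a subsumed pair $\tuple{x,y}$: in any witnessing family $J\ni\tuple{x,y}$ for a given $z$, replace $\tuple{x,y}$ by $\tuple{x',y'}$; since $x\preceq_1 x'$ we still have $z\preceq_1\bigoplus_{1}x$ over the modified family, and since $y\preceq_2 y'$ the value $\bigotimes_{2}y$ does not decrease. Hence every term in the $\mathbf{lub}$ for $A'$ is matched by a term in the $\mathbf{lub}$ for $\mathsf{Reduce}(A')$, giving $\succeq_2$.

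I expect the main obstacle to be the reverse inclusion for $\mathsf{Saturate}$, namely proving $\overline{\mathsf{Saturate}(A)}(z)\preceq_2\overline{A}(z)$. The difficulty is bookkeeping: a subfamily $J$ of $\mathsf{Saturate}(A)$ witnessing $z\preceq_1\bigoplus_{1,\tuple{x,y}\in J}x$ may use several synthesised pairs, each itself arising from an arbitrary subset-of-powerset construction $X$, so one has to unfold a nested $\oplus_1/\otimes_1$ expression and, on the $L_2$ side, push the corresponding nested $\otimes_2/\oplus_2$ expression through distributivity to rewrite it as a $\mathbf{lub}$ over genuine subfamilies of the original $A$. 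The lattice assumption on $D_1$ is precisely what makes the $L_1$-side manipulation sound — without $z\preceq_1 x\otimes_1 y\Leftrightarrow(z\preceq_1 x\wedge z\preceq_1 y)$, the pairs introduced via the $\otimes_1$ branch of $\mathsf{Saturate}$ (Line~6) cannot be correctly related to the membership condition of $K^A_z$, which is why the proposition is stated only for $D_1$ a lattice. Everything else is, as the paper says for Theorem~\ref{thm:strongquasi}, ``manipulation of notations with little subtlety'', reusing the $K^A_z$ notation and the two combinatorial lemmas ($K^A_{z\oplus_1 z'}=K^A_z\cap K^A_{z'}$ and $K^A_z\cup K^A_{z'}\subseteq K^A_{z\otimes_1 z'}$) already proved there.
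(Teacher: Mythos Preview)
Your proposal is correct and follows essentially the same approach as the paper: decompose $\mathsf{Normalise}$ into $\mathsf{Saturate}$ and $\mathsf{Reduce}$, show each preserves $\overline{(\cdot)}$, and use the lattice hypothesis on $D_1$ precisely for the pairs introduced via the $\otimes_1$-branch of $\mathsf{Saturate}$. The only organisational difference is that the paper reduces the $\mathsf{Saturate}$ invariance to the two binary cases $\overline{R}=\overline{R\cup\{\tuple{a\otimes_1 c,\,b\oplus_2 d}\}}$ and $\overline{R}=\overline{R\cup\{\tuple{a\oplus_1 c,\,b\otimes_2 d}\}}$ (Lemmas~\ref{lem:timesplus} and~\ref{lem:plustimes}) and then iterates, whereas you propose to unfold the arbitrary nested $X\subseteq\pow{A}$ combinations directly; these are equivalent presentations of the same bookkeeping argument, and your substitution argument for $\mathsf{Reduce}$ is exactly Lemma~\ref{lem:reduce}.
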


\nd Let us assume that $D_1$ is a lattice.
We show the proposition by proving that at each step of the saturation and reduction, the set $R$ is such that $\overline{R}=\overline{A}$. This is trivially true at the initialisation of $\mathsf{Saturate}$. Now let us assume that $R$ satisfies this property at a certain step of the execution. We start by ensuring that 
\[
\overline{R}=\overline{R\cup\{\tuple{\bigoplusone{J\in X}\bigotimesone{\tuple{x,y}\in J}x,\bigotimestwo{J\in X}\bigoplustwo{\tuple{x,y}\in J}y}}\}
\]
\nd and 
\[
\overline{R}=\overline{R\cup\{\tuple{\bigotimesone{J\in X}\bigoplusone{\tuple{x,y}\in J}x,\bigoplustwo{J\in X}\bigotimestwo{\tuple{x,y}\in J}y}}\} \ .
\]

\nd We can decompose further the proof by simply showing that, given $\tuple{a,b},\tuple{c,d}\in R$, 
\[
\overline{R}=\overline{R\cup\{\tuple{a\otimes_1 c,b\oplus_2 d}\}}
\]
\nd and 
\[
\overline{R}=\overline{R\cup\{\tuple{a\oplus_1 c,b\otimes_2 d}\}} \ .
\]

\nd To structure the proof better, we split the proof into several lemmas corresponding to each of the aforementioned steps.

\begin{lemma}\label{lem:timesplus}
Let $z\in L_1$. The equality $\overline{R\cup\{\tuple{a\otimes_1 c,b\oplus_2 d}\}}(z)=\overline{R}(z)$ holds.
\end{lemma}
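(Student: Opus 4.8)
The plan is to unfold the definition of $\overline{\cdot}$ on both sides and show that adding the redundant pair $\tuple{a\otimes_1 c, b\oplus_2 d}$ to $R$ cannot change the value of the induced function at any $z\in L_1$. Write $S = R\cup\{\tuple{a\otimes_1 c,b\oplus_2 d}\}$. Since $R\subseteq S$, it is immediate from the definition $\overline{R}(z) = \mathbf{lub}\{\bigotimestwo{\tuple{x,y}\in J}y \mid J\subseteq R,\ z\preceq_1\bigoplusone{\tuple{x,y}\in J}x\}$ that $\overline{R}(z)\preceq_2\overline{S}(z)$, because every witnessing $J$ for $\overline{R}(z)$ is also a witnessing $J$ for $\overline{S}(z)$. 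So the only work is the reverse inequality $\overline{S}(z)\preceq_2\overline{R}(z)$.

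For that direction, I would take an arbitrary witness $J\subseteq S$ with $z\preceq_1\bigoplusone{\tuple{x,y}\in J}x$ and show its contribution $\bigotimestwo{\tuple{x,y}\in J}y$ is $\preceq_2\overline{R}(z)$. If $J\subseteq R$ there is nothing to do. Otherwise $J = J'\cup\{\tuple{a\otimes_1 c, b\oplus_2 d}\}$ with $J'\subseteq R$; since $\tuple{a,b},\tuple{c,d}\in R$ we have $J'\cup\{\tuple{a,b}\}\subseteq R$ and $J'\cup\{\tuple{c,d}\}\subseteq R$. The key observation is that $\bigoplusone{}$-summing the first components: $\bigl(\bigoplusone{\tuple{x,y}\in J'}x\bigr)\oplus_1 a\oplus_1 c$ dominates (in $\succeq_1$) the sum over $J$ because $a\oplus_1 c\succeq_1 a\otimes_1 c$ — wait, that is the wrong direction, so instead I would argue directly: $z\preceq_1 \bigl(\bigoplusone{\tuple{x,y}\in J'}x\bigr)\oplus_1(a\otimes_1 c)$, and since $D_1$ is a lattice and $a\otimes_1 c\preceq_1 a$, $a\otimes_1 c\preceq_1 c$, the element $w := \bigl(\bigoplusone{\tuple{x,y}\in J'}x\bigr)\oplus_1(a\otimes_1 c)$ satisfies $w\preceq_1 \bigl(\bigoplusone{\tuple{x,y}\in J'}x\bigr)\oplus_1 a$ and $w\preceq_1 \bigl(\bigoplusone{\tuple{x,y}\in J'}x\bigr)\oplus_1 c$; hence $z\preceq_1$ both of those. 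Therefore $J'\cup\{\tuple{a,b}\}$ and $J'\cup\{\tuple{c,d}\}$ are both valid witnesses for $\overline{R}(z)$, giving $\bigl(\bigotimestwo{\tuple{x,y}\in J'}y\bigr)\otimes_2 b\preceq_2\overline{R}(z)$ and $\bigl(\bigotimestwo{\tuple{x,y}\in J'}y\bigr)\otimes_2 d\preceq_2\overline{R}(z)$. Since $\oplus_2$ is the join, $\bigl(\bigotimestwo{\tuple{x,y}\in J'}y\bigr)\otimes_2 b\oplus_2\bigl(\bigotimestwo{\tuple{x,y}\in J'}y\bigr)\otimes_2 d\preceq_2\overline{R}(z)$, and by distributivity of $\otimes_2$ over $\oplus_2$ this equals $\bigl(\bigotimestwo{\tuple{x,y}\in J'}y\bigr)\otimes_2(b\oplus_2 d) = \bigotimestwo{\tuple{x,y}\in J}y$. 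Taking $\mathbf{lub}$ over all such $J$ yields $\overline{S}(z)\preceq_2\overline{R}(z)$, completing the argument.

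The step I expect to be the main obstacle is getting the order of the $\otimes_1$/$\oplus_1$ manipulations right: one has to use precisely that $D_1$ is a \emph{lattice} (so that $z\preceq_1 u$ and $u\preceq_1 u'$ gives $z\preceq_1 u'$, together with $a\otimes_1 c\preceq_1 a,c$) in order to promote a single witness involving the new pair $\tuple{a\otimes_1 c, b\oplus_2 d}$ into two witnesses over $R$ alone. The rest is routine distributivity and monotonicity bookkeeping, and I would relegate it accordingly, as the surrounding text already announces. The companion lemma for the pair $\tuple{a\oplus_1 c, b\otimes_2 d}$ (needed for Line 6 of $\mathsf{Saturate}$) will be handled by an entirely analogous argument — there one uses $a\preceq_1 a\oplus_1 c$ and $c\preceq_1 a\oplus_1 c$ to merge two witnesses into one — but that is a separate lemma from the one stated here.
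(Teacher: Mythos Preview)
Your argument is correct and follows essentially the same route as the paper: split an arbitrary witness $J\subseteq S$ containing the new pair into two $R$-witnesses $J'\cup\{\tuple{a,b}\}$ and $J'\cup\{\tuple{c,d}\}$, bound each contribution by $\overline{R}(z)$, and recombine via distributivity of $\otimes_2$ over $\oplus_2$. The only cosmetic difference is in the justification of the key step $z\preceq_1\bigl(\bigoplus_{J'}x\bigr)\oplus_1 a$: the paper uses the lattice identity $\bigl(\bigoplus_{J'}x\bigr)\oplus_1(a\otimes_1 c)=\bigl((\bigoplus_{J'}x)\oplus_1 a\bigr)\otimes_1\bigl((\bigoplus_{J'}x)\oplus_1 c\bigr)$ and then the meet property, whereas you use boundedness $a\otimes_1 c\preceq_1 a$ together with monotonicity of $\oplus_1$ --- both derivations are valid and yield the same conclusion.
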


\begin{proof}
If the pair $\tuple{a\otimes_1 c,b\oplus_2 d}$ already belongs to $R$, the equality is trivial. Let us assume $\tuple{a\otimes_1 c,b\oplus_2 d}\notin R$ so that we can easily distinguish between sets that include $\tuple{a\otimes_1 c,b\oplus_2 d}$ and sets that do not. Using the definition of $\overline{R\cup\{\tuple{a\otimes_1 c,b\oplus_2 d}\}}(z)$, we can write:
\[
 \begin{array}{l}
 \overline{R\cup\{\tuple{a\otimes_1 c,b\oplus_2 d}\}}(z) =\overline{R}(z)\ \oplus_2 \\
 \hspace{0.2cm}\mathbf{lub}\{(\bigotimestwo{\tuple{x,y}\in J}y)\otimes_2(b\oplus_2 d) \mid J\subseteq R \text{ and } \\
 \hspace*{4.5cm}z\preceq_1(\bigoplusone{\tuple{x,y}\in J}x)\oplus_1(a\otimes_1 c)\} \ .
 \end{array}
\]
\nd Further, due to distributivity, 
\[
(\bigotimestwo{\tuple{x,y}\in J}y)\otimes_2(b\oplus_2 d)=((\bigotimestwo{\tuple{x,y}\in J}y)\otimes_2 b)\oplus_2((\bigotimestwo{\tuple{x,y}\in J}y)\otimes_2 d) \ .
\]
\nd We can therefore rewrite the previous equality to:
\[
\begin{array}{l}
 \overline{R\cup\{\tuple{a\otimes_1 c,b\oplus_2 d}\}}(z) = \overline{R}(z)\\
             \hspace{0.2cm}\oplus_2\ \mathbf{lub}\{(\bigotimestwo{\tuple{x,y}\in J}y)\otimes_2 b \mid J\subseteq R \text{ and }z\preceq_1(\bigoplusone{\tuple{x,y}\in J}x)\oplus_1(a\otimes_1 c)\}\\
             \hspace{0.2cm}\oplus_2\ \mathbf{lub}\{(\bigotimestwo{\tuple{x,y}\in J}y)\otimes_2 d \mid J\subseteq R \text{ and }z\preceq_1(\bigoplusone{\tuple{x,y}\in J}x)\oplus_1(a\otimes_1 c)\} \ .
\end{array}\]
\nd Additionally, since $D_1$ is a lattice, we have 
\[
(\bigoplusone{\tuple{x,y}\in J}x)\oplus_1(a\otimes_1 c) = (\bigoplusone{\tuple{x,y}\in J}x)\oplus_1 a)\otimes_1(\bigoplusone{\tuple{x,y}\in J}x)\oplus_1 c) \ .
\]
\nd So $z\preceq_1(\bigoplusone{\tuple{x,y}\in J}x)\oplus_1(a\otimes_1 c)$ implies $z\preceq_1(\bigoplusone{\tuple{x,y}\in J}x)\oplus_1 a$ and $z\preceq_1(\bigoplusone{\tuple{x,y}\in J}x)\oplus_1 c$. This means that $J\cup\{\tuple{a,b}\}\in\{K\subseteq R\mid z\preceq_1\bigotimesone{\tuple{x,y}\in J}x\}$ so necessarily, $(\bigotimestwo{\tuple{x,y}}y)\otimes_2 b\preceq_2\overline{R}(z)$. Analogically, we conclude that $(\bigotimestwo{\tuple{x,y}\in J}y)\otimes_2 d\preceq_2\overline{R}(z)$ and generalising to any suitable $J$, we conclude, using the equation above, that $\overline{R\cup\{\tuple{a\otimes_1 c,b\oplus_2 d}\}}(z) = \overline{R}(z)$.
\end{proof}

\nd Now let us prove this second equality:

\begin{lemma}\label{lem:plustimes}
Let $z\in L_1$. The equality $\overline{R\cup\{\tuple{a\oplus_1 c,b\otimes_2 d}\}}(z)=\overline{R}(z)$ holds.
\end{lemma}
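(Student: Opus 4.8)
The plan is to mirror the proof of Lemma~\ref{lem:timesplus}; the present equality is in fact slightly easier, since the first coordinate $a\oplus_1 c$ of the added pair is a join, so no use of distributivity (hence none of the lattice hypothesis on $D_1$ assumed for Proposition~\ref{prop:normalP}) is needed here. First I would dispose of the trivial case $\tuple{a\oplus_1 c,b\otimes_2 d}\in R$, for which $R\cup\{\tuple{a\oplus_1 c,b\otimes_2 d}\}=R$ and there is nothing to prove. Otherwise, write $R'=R\cup\{\tuple{a\oplus_1 c,b\otimes_2 d}\}$ and note that every subset of $R'$ either is a subset of $R$, or has the form $J_0\cup\{\tuple{a\oplus_1 c,b\otimes_2 d}\}$ with $J_0\subseteq R$. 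Separating the $\mathbf{lub}$ in the definition of $\overline{R'}(z)$ along this dichotomy gives
\[
\overline{R'}(z)=\overline{R}(z)\ \oplus_2\ \mathbf{lub}\{(\bigotimestwo{\tuple{x,y}\in J_0}y)\otimes_2(b\otimes_2 d) \mid J_0\subseteq R \text{ and } z\preceq_1(\bigoplusone{\tuple{x,y}\in J_0}x)\oplus_1(a\oplus_1 c)\} \ .
\]

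\nd Since $R\subseteq R'$, the inequality $\overline{R'}(z)\succeq_2\overline{R}(z)$ is immediate, so it suffices to show that every element of the $\mathbf{lub}$ set above is $\preceq_2\overline{R}(z)$. Fixing a $J_0$ that meets the side condition, I would set $J_1:=J_0\cup\{\tuple{a,b},\tuple{c,d}\}$; by associativity, commutativity and idempotency of $\oplus_1$ one has $(\bigoplusone{\tuple{x,y}\in J_0}x)\oplus_1(a\oplus_1 c)=\bigoplusone{\tuple{x,y}\in J_1}x$, and since $\tuple{a,b},\tuple{c,d}\in R$ we get $J_1\subseteq R$ together with $z\preceq_1\bigoplusone{\tuple{x,y}\in J_1}x$, whence $\bigotimestwo{\tuple{x,y}\in J_1}y\preceq_2\overline{R}(z)$ by the definition of $\overline{R}$. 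It then only remains to verify $(\bigotimestwo{\tuple{x,y}\in J_0}y)\otimes_2 b\otimes_2 d\preceq_2\bigotimestwo{\tuple{x,y}\in J_1}y$; granting this, each element of the $\mathbf{lub}$ set is $\preceq_2\overline{R}(z)$, so the whole $\mathbf{lub}$ is, and $\oplus_2$-adding it to $\overline{R}(z)$ leaves $\overline{R}(z)$ unchanged (as $\oplus_2$ is the join of $\preceq_2$), giving $\overline{R'}(z)=\overline{R}(z)$.

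\nd The step I expect to be the main obstacle is exactly that last inequality, in the degenerate configurations where $\tuple{a,b}$ or $\tuple{c,d}$ already lies in $J_0$, or where $\tuple{a,b}=\tuple{c,d}$: in such cases $J_1$, viewed as a set of pairs, has strictly fewer than $|J_0|+2$ elements, while the product on the left still carries extra copies of $b$ and/or $d$. I would treat all cases uniformly by observing that, written as a $\otimes_2$-product of elements of $L_2$, the left-hand side involves every factor of $\bigotimestwo{\tuple{x,y}\in J_1}y$ at least once, together with possibly some further occurrences of $b$ and/or $d$; iterating the boundedness law $\lambda_1\otimes_2\lambda_2\preceq_2\lambda_1$ (which holds in every annotation domain, and is what makes the inequality point in the right direction, since $\otimes_2$ need not be idempotent) then yields the claim. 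Everything else is a routine manipulation of notation, symmetric to the proof of Lemma~\ref{lem:timesplus}, so I would keep it terse.
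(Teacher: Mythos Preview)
Your proposal is correct and follows essentially the same approach as the paper: decompose the $\mathbf{lub}$ over subsets of $R'$ into the part coming from subsets of $R$ and the part containing the new pair, then absorb the latter by passing from $J_0$ to $J_1=J_0\cup\{\tuple{a,b},\tuple{c,d}\}\subseteq R$. The only difference is that you are more explicit than the paper about the degenerate configurations (when $\tuple{a,b}$ or $\tuple{c,d}$ already lies in $J_0$, or $\tuple{a,b}=\tuple{c,d}$) and justify the needed inequality via boundedness of $\otimes_2$; the paper simply asserts the implication without isolating this step.
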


\begin{proof}
We apply a similar method as for Lemma~\ref{lem:timesplus} to get to the following equality:
\[
\begin{array}{l}
 \overline{R\cup\{\tuple{a\oplus_1 c,b\otimes_2 d}\}}(z) = \overline{R}(z)\ \oplus_2\\
 \hspace{0.2cm}\mathbf{lub}\{(\bigotimestwo{\tuple{x,y}\in J}y)\otimes_2(b\otimes_2 d) \mid J\subseteq R \text{ and } \\
 \hspace*{4.5cm} z\preceq_1(\bigoplusone{\tuple{x,y}\in J}x)\oplus_1(a\oplus_2 c)\} \ .
 \end{array}
\]
\nd This means that $J\cup\{\tuple{a,b},\tuple{c,d}\}\in\{K\subseteq R\mid z\preceq_1\bigoplusone{\tuple{x,y}\in K}x\}$, which implies that $(\bigotimestwo{\tuple{x,y}\in J}y)\otimes_2(b\otimes_2 d)\preceq_2\overline{R}(z)$. Generalising this to any suitable $J$, we obtain the equality.
\end{proof}

\nd Now, let us prove that the reduce algorithm preserves the quasi homomorphism.

\begin{lemma}\label{lem:reduce}
$\overline{A}=\overline{\mathsf{Reduce}(A)}$.
\end{lemma}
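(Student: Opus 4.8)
The plan is to prove $\overline{A}=\overline{\mathsf{Reduce}(A)}$ by reducing to the effect of a \emph{single} deletion. Observe that $\mathsf{Reduce}(A)$ is obtained from $A$ by a finite sequence of sets $A=B_0,B_1,\dots,B_k=\mathsf{Reduce}(A)$ in which each $B_{i+1}$ is obtained from $B_i$ by deleting one pair $\tuple{x_0,y_0}$ that is either (i) \emph{subsumed}, i.e.\ there is $\tuple{x_0',y_0'}\in B_i$ with $\tuple{x_0',y_0'}\ne\tuple{x_0,y_0}$, $x_0\preceq_1 x_0'$ and $y_0\preceq_2 y_0'$, or (ii) \emph{trivial}, i.e.\ $x_0=\bot_1$ or $y_0=\bot_2$. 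So it suffices to show $\overline{B}=\overline{B'}$ for one such step, where $B'=B\setminus\{\tuple{x_0,y_0}\}$, and then conclude by induction on $i$. Recall $\overline{B}(z)=\mathbf{lub}\,\{\bigotimestwo{\tuple{x,y}\in J}y\mid J\in K^B_z\}$ with $K^B_z=\{J\subseteq B\mid z\preceq_1\bigoplusone{\tuple{x,y}\in J}x\}$ (the empty iterated $\oplus_1$ being $\bot_1$, the empty iterated $\otimes_2$ being $\top_2$). Since $B'\subseteq B$ gives $K^{B'}_z\subseteq K^B_z$, the inequality $\overline{B'}(z)\preceq_2\overline{B}(z)$ holds for free; the work is the reverse inequality, and for that it is enough to dominate, for every $J\in K^B_z$ with $\tuple{x_0,y_0}\in J$, the term $\bigotimestwo{\tuple{x,y}\in J}y$ by the term of some $J'\in K^{B'}_z$.

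Throughout I would use two facts valid in any commutative, idempotent, $\top$-annihilating semiring (cf.\ Remark~\ref{rem5}): $\otimes_2$ is $\preceq_2$-monotone, and it is bounded, $\lambda_1\otimes_2\lambda_2\preceq_2\lambda_1$; both follow from distributivity of $\otimes_2$ over $\oplus_2$ together with $\lambda\preceq_2\top_2$ for all $\lambda$. In case (i), given $J\ni\tuple{x_0,y_0}$, put $J'=(J\setminus\{\tuple{x_0,y_0}\})\cup\{\tuple{x_0',y_0'}\}$; then $J'\subseteq B'$ (as $\tuple{x_0,y_0}\ne\tuple{x_0',y_0'}$), replacing the summand $x_0$ by the larger $x_0'$ gives $\bigoplusone{\tuple{x,y}\in J}x\preceq_1\bigoplusone{\tuple{x,y}\in J'}x$, hence $z\preceq_1\bigoplusone{\tuple{x,y}\in J'}x$ and $J'\in K^{B'}_z$; and passing from $J$ to $J'$ either replaces the factor $y_0$ by the $\preceq_2$-larger $y_0'$, or, if $\tuple{x_0',y_0'}$ was already in $J$, simply drops the factor $y_0$ — in both cases $\bigotimestwo{\tuple{x,y}\in J'}y\succeq_2\bigotimestwo{\tuple{x,y}\in J}y$, by monotonicity resp.\ boundedness. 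In case (ii) with $x_0=\bot_1$, take $J'=J\setminus\{\tuple{\bot_1,y_0}\}$: since $\bot_1$ is neutral for $\oplus_1$ we have $\bigoplusone{\tuple{x,y}\in J'}x=\bigoplusone{\tuple{x,y}\in J}x$, so $J'\in K^{B'}_z$, and dropping the factor $y_0$ gives domination by boundedness.

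The one case that needs care — and which I expect to be the only real obstacle — is $y_0=\bot_2$. Here every $J\in K^B_z$ containing $\tuple{x_0,\bot_2}$ has $\bigotimestwo{\tuple{x,y}\in J}y=\bot_2$, because $\bot_2\otimes_2\lambda=\bot_2$. If $K^{B'}_z\neq\emptyset$, these terms are trivially dominated by $\overline{B'}(z)\succeq_2\bot_2$. If $K^{B'}_z=\emptyset$, then \emph{every} $J\in K^B_z$ must contain $\tuple{x_0,\bot_2}$ (otherwise $J$ would already lie in $B'$ and witness $K^{B'}_z\neq\emptyset$), so all terms in the $\mathbf{lub}$ defining $\overline{B}(z)$ equal $\bot_2$ and $\overline{B}(z)=\bot_2=\mathbf{lub}\,\emptyset=\overline{B'}(z)$; in other words $\overline{B}(z)$ has already collapsed to $\bot_2$ on its own, so the deletion is harmless. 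Collecting the cases, $\overline{B}(z)\preceq_2\overline{B'}(z)$ for every $z$, hence $\overline{B}=\overline{B'}$, and induction along the chain $B_0,\dots,B_k$ yields $\overline{A}=\overline{\mathsf{Reduce}(A)}$ (note that saturation of $A$ is never used).
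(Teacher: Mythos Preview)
Your proof is correct and follows essentially the same single-deletion-plus-induction strategy as the paper. The paper's version decomposes $\overline{R}(z)$ as $\overline{R\setminus\{\tuple{a,b}\}}(z)\oplus_2\mathbf{lub}\{(\bigotimes_2\,y)\otimes_2 b\mid\ldots\}$ and then argues the second summand is absorbed by the first; your argument achieves the same thing more explicitly by exhibiting, for each $J$ containing the deleted pair, a concrete $J'\in K^{B'}_z$ whose term dominates it. Your treatment of the two $\bot$-cases is more careful than the paper's one-line remark, and your closing observation that saturation is never used is correct: the lemma indeed holds for arbitrary finite $A$.
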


\begin{proof}
Let $\tuple{a,b}\in R$ such that there exists $\tuple{a',b'}\in R$ such that $a\preceq_1 a'$ and $b\preceq_2 b'$. Using the same approach as in Lemma~\ref{lem:timesplus}, we obtain the following equality:
\[
 \begin{array}{l}
 \overline{R}(z) = \overline{R\setminus\{\tuple{a,b}\}}(z)\ \oplus_2 \\
  \hspace{0.5cm}\mathbf{lub}\{(\bigotimestwo{\tuple{x,y}\in J}y)\otimes_2 b \mid J\subseteq R\setminus\{\tuple{a,b}\} \text{ and } \\
  \hspace*{5cm}z\preceq_1(\bigoplusone{\tuple{x,y}\in J}x)\oplus_1 a\} \ .
  \end{array}
\]
\nd From the hypothesis, we have that $z\preceq_1(\bigoplusone{\tuple{x,y}\in J}x)\oplus_1 a'$ for any appropriate $J$. Moreover, for the same $J$, we have $(\bigotimestwo{\tuple{x,y}\in J}y)\otimes_2 b\preceq_2 b'$. Generalising to all adequate $J$, we entail that:
\[
\begin{array}{l}
 \overline{R\setminus\{\tuple{a,b}\}}(z)\succeq_2\mathbf{lub}\{(\bigotimestwo{\tuple{x,y}\in J}y)\otimes_2 b \mid J\subseteq R\setminus\{\tuple{a,b}\} \text{ and } \\ 
 \hspace*{5cm} z\preceq_1(\bigoplusone{\tuple{x,y}\in J}x)\oplus_1 a\}
 \end{array}
\]
\nd and, thus, $\overline{R}(z) = \overline{R\setminus\{\tuple{a,b}\}}(z)$.
Similarly, every pair $\tuple{\bot_1,y}$ or $\tuple{x,\bot_2}$ does not affect the function $\overline{R}$.
\end{proof}

\nd Now, the proof of Proposition~\ref{prop:normalP} follows from an inductive application of Lemmas~\ref{lem:timesplus},~\ref{lem:plustimes} and~\ref{lem:reduce}. Therefore, $\overline{A}=\overline{\mathsf{Normalise}(A)}$ holds.

\paragraph{Proof of the theorem}
The implication $\Leftarrow$ is a direct consequence of Proposition~\ref{prop:normalP}.

Let us prove the other direction. Let $A$ and $B$ two finite sets of pairs of primitive annotations in $L_1\times L_2$ such that $\overline{A}=\overline{B}$.
For $A\subseteq L_1\times L_2$ and $x\in L_1$, let $K_x^A = \{J\subseteq A\mid x\preceq_1\bigoplusone{\tuple{\alpha,\beta}\in J}\alpha\}$. We also remind that:
\begin{eqnarray*}
 \overline{A}: & L_1 & \to L_2 \\
               & x   & \mapsto \bigoplustwo{J\in K_x^A}\bigotimestwo{\tuple{\alpha,\beta}\in J}\beta
\end{eqnarray*}
Moreover, we introduce the following new notation:
\begin{eqnarray*}
 \widetilde{A}: & L_1 & \to L_1 \\
                & x   & \mapsto \bigotimesone{J\in K_x^A}\bigoplusone{\tuple{\alpha,\beta}\in J}\alpha
\end{eqnarray*}

\nd We establish the proof through the support of several intermediary lemmas.

\begin{lemma}\label{lem:yoverline}
If $\tuple{x,y}\in A$ then $y \preceq_2\overline{A}(x)$.
\end{lemma}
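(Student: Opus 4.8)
The plan is to prove the statement by exhibiting an explicit witness inside the $\mathbf{lub}$ that defines $\overline{A}$. Recall that for a finite $A\subseteq L_1\times L_2$,
\[
\overline{A}(x) = \mathbf{lub}\Bigl\{\textstyle\bigotimestwo{\tuple{x',y'}\in J}y' \;\Bigm|\; J\subseteq A \text{ and } x\preceq_1\textstyle\bigoplusone{\tuple{x',y'}\in J}x'\Bigr\}\ .
\]
First I would instantiate $J$ with the singleton $\{\tuple{x,y}\}$: since $\tuple{x,y}\in A$ by hypothesis, this $J$ is indeed a subset of $A$, so it is a legitimate candidate index set for the family above.

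Next I would verify that this $J$ satisfies the side condition $x\preceq_1\bigoplusone{\tuple{x',y'}\in J}x'$. For the singleton, the right-hand side is simply $x$ itself, so the condition collapses to $x\preceq_1 x$, which holds by reflexivity of the partial order $\preceq_1$ on $L_1$. Hence $J$ belongs to the set over which the $\mathbf{lub}$ is taken, and the element it contributes to that set is $\bigotimestwo{\tuple{x',y'}\in J}y' = y$. Since $D_2$ is an annotation domain, hence a bounded join-semilattice (Remark~\ref{remsemi}), the $\mathbf{lub}$ is well defined and is an upper bound of the whole set with respect to $\preceq_2$; in particular it dominates the element $y$ we just identified. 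Therefore $y\preceq_2\overline{A}(x)$, which is the claim.

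I do not expect any real obstacle: the lemma is essentially a direct unfolding of the definition of $\overline{A}$ applied to the singleton witness. The only point requiring a moment's care is making sure the $\mathbf{lub}$ is taken in $D_2$ and that it genuinely is an upper bound of the indexed family (so that any single member, such as the value coming from $\{\tuple{x,y}\}$, lies below it); this is guaranteed by the join-semilattice structure of $D_2$ noted earlier in the paper.
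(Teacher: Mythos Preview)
Your proposal is correct and follows essentially the same argument as the paper: both take the singleton $J=\{\tuple{x,y}\}$ as witness, observe that it satisfies the side condition $x\preceq_1 x$, and conclude that $y$ lies in the set whose $\mathbf{lub}$ defines $\overline{A}(x)$.
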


\begin{proof}
Let $\tuple{x,y}\in A$. We remark that $x\preceq_1\bigoplusone{\tuple{\alpha,\beta}\in\{\tuple{x,y}\}}\alpha$ and $\{\tuple{x,y}\}\subseteq A$, so:
\[
 y = \bigotimestwo{\tuple{\alpha,\beta}\in\{\tuple{x,y}\}}\beta \preceq_2 \bigoplustwo{J\in K_x^A}\bigotimestwo{\tuple{\alpha,\beta}\in J}\beta = \overline{A}(x) \ .
\]
\end{proof}

\begin{lemma}\label{lem:xtilde}
If $\tuple{x,y}\in A$ then $x\preceq_1\widetilde{A}(x)$.
\end{lemma}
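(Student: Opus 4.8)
The plan is to peel off the definitions of $K_x^A$ and $\widetilde{A}$ and then appeal exactly once to the lattice hypothesis on $D_1$, in the form recorded after Proposition~\ref{prop:normal}: when $D_1$ is a lattice, $z\preceq_1 u$ together with $z\preceq_1 v$ imply $z\preceq_1 u\otimes_1 v$; that is, $\otimes_1$ computes meets.

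First I would check that $K_x^A\neq\emptyset$, so that $\widetilde{A}(x)$ is a genuine finite $\otimes_1$-product: since $\tuple{x,y}\in A$ and $\preceq_1$ is reflexive, the singleton $J=\{\tuple{x,y}\}$ lies in $K_x^A$, because $\bigoplusone{\tuple{\alpha,\beta}\in J}\alpha = x$. As $A$ is finite, $K_x^A$ is a non-empty finite subset of $\pow{A}$, and associativity and commutativity of $\otimes_1$ make $\widetilde{A}(x)=\bigotimesone{J\in K_x^A}\bigoplusone{\tuple{\alpha,\beta}\in J}\alpha$ well defined.

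The core step is the remark that, directly by the definition of $K_x^A$, every $J\in K_x^A$ satisfies $x\preceq_1\bigoplusone{\tuple{\alpha,\beta}\in J}\alpha$. Thus $x$ is a common lower bound of the finite family $\{\bigoplusone{\tuple{\alpha,\beta}\in J}\alpha\mid J\in K_x^A\}$, and a short induction on $|K_x^A|$ using the lattice property of $D_1$ lifts ``$x$ below each member'' to ``$x$ below the iterated $\otimes_1$'', which is exactly $x\preceq_1\widetilde{A}(x)$. This is the dual of the proof of Lemma~\ref{lem:yoverline}, with meets replacing joins.

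The one point deserving care — and the reason the standing assumption that $D_1$ is a lattice is essential here — is that the implication ``lower bound of every factor $\Rightarrow$ lower bound of the $\otimes_1$-product'' fails in a general idempotent commutative semiring: there $\otimes_1$ is only known to be bounded and $\preceq_1$-monotone (Remark~\ref{rem5}), hence merely \emph{some} lower bound of its factors rather than their greatest lower bound. Under the lattice hypothesis this gap disappears, and everything else is routine rewriting of the $\bigoplusone$/$\bigotimesone$ notation.
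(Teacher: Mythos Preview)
Your proposal is correct and follows essentially the same route as the paper: observe that by the very definition of $K_x^A$ every $J\in K_x^A$ satisfies $x\preceq_1\bigoplusone{\tuple{\alpha,\beta}\in J}\alpha$, and then invoke the lattice hypothesis on $D_1$ to pass from ``$x$ below each factor'' to ``$x$ below the $\otimes_1$-meet'', which is $\widetilde{A}(x)$. You add extra care (non-emptiness of $K_x^A$, the induction on its size, and a remark on why the lattice assumption is indispensable), but the underlying argument is the same one-line step the paper uses.
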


\begin{proof}
Let $\tuple{x,y}\in A$. For all $J\in K_x^A$, $x\preceq_1\bigoplusone{\tuple{\alpha,\beta}\in J}\alpha$ so, since $D_1$ is a lattice, $x\preceq_1\bigotimesone{J\in K_x^A}\bigoplusone{\tuple{\alpha,\beta}\in J}\alpha$, that is, $x\preceq_1 \widetilde{A}(x)$.
\end{proof}

\begin{lemma}\label{prop:antitone}
If $D_1=\tuple{L_1,\oplus_1,\otimes_1,\bot_1,\top_1}$ is a lattice, then a quasihomomorphism is an antitone function, with respect to the orders induced by $\oplus_1$ and $\oplus_2$.
\end{lemma}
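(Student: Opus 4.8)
The plan is to reduce antitonicity to a single application of condition~(ii) in the definition of a quasihomomorphism, after first rephrasing the hypothesis $x \preceq_1 y$ in terms of the meet operation $\otimes_1$. This rephrasing is the only place the lattice assumption on $D_1$ is needed.

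First I would recall that in any lattice $D_1 = \tuple{L_1,\oplus_1,\otimes_1,\bot_1,\top_1}$ the absorption laws force the order $\preceq_1$ induced by the join $\oplus_1$ to coincide with the order induced by the meet $\otimes_1$; that is, for all $x,y\in L_1$ we have $x\preceq_1 y \iff x\oplus_1 y = y \iff x\otimes_1 y = x$. (In a general idempotent, commutative, $\top$-annihilating semiring this equivalence need not hold, which is precisely why the statement hypothesises that $D_1$ is a lattice.)

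Next, suppose $f$ is a quasihomomorphism and $x\preceq_1 y$. By the equivalence just recalled, $x\otimes_1 y = x$, so applying condition~(ii) of the definition of a quasihomomorphism to the pair $(x,y)$ gives $f(x) = f(x\otimes_1 y)\succeq_2 f(x)\oplus_2 f(y)$. On the other hand, by Remark~\ref{remsemi} the structure $\tuple{L_2,\preceq_2,\oplus_2,\bot_2,\top_2}$ is a bounded join semi-lattice, so $f(y)\preceq_2 f(x)\oplus_2 f(y)$ always. Chaining these two inequalities yields $f(y)\preceq_2 f(x)$, i.e.\ $x\preceq_1 y$ implies $f(x)\succeq_2 f(y)$, which is exactly the assertion that $f$ is antitone with respect to the orders induced by $\oplus_1$ and $\oplus_2$.

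I do not expect a genuine obstacle here: the argument is a two-line chain of inequalities. The only point that deserves to be stated carefully is the equivalence $x\preceq_1 y \iff x\otimes_1 y = x$, since this is the step that uses the lattice hypothesis and is what allows condition~(ii) — rather than condition~(i) — to be brought to bear on the hypothesis $x\preceq_1 y$.
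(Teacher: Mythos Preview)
Your proof is correct and follows essentially the same route as the paper: use the lattice hypothesis on $D_1$ to rewrite $x\preceq_1 y$ as $x = x\otimes_1 y$, apply condition~(ii) of the quasihomomorphism definition to obtain $f(x)\succeq_2 f(x)\oplus_2 f(y)$, and conclude $f(x)\succeq_2 f(y)$. The paper's version is terser (it jumps straight from $f(x)\succeq_2 f(x)\oplus_2 f(x')$ to $f(x)\succeq_2 f(x')$ without spelling out the intermediate inequality $f(y)\preceq_2 f(x)\oplus_2 f(y)$), but the argument is the same.
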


\begin{proof}
Assume that $D_1$ is a lattice. Let $f$ be a quasihomomorphism. Let $x,x'\in L_1$ be two annotation values such that $x\preceq_1 x'$. Then
$f(x)  =   f(x \otimes x') \succeq_2 f(x)\oplus_2 f(x')$ and, thus, $f(x)  \succeq_2 f(x')$.
\end{proof}

\nd Using the previous lemmas, we can prove two additional lemmas that will bring us to the final proof:

\begin{lemma}\label{lem:match}
If $\tuple{x,y}\in\mathsf{Normalise}(A)$ then $x=\widetilde{A}(x)$ and $y=\overline{A}(x)$.
\end{lemma}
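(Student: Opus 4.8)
The plan is to exploit the structure of $\mathsf{Normalise}(A) = \mathsf{Reduce}(\mathsf{Saturate}(A))$ and the two key invariants already available. First, since $\mathsf{Normalise}(A)$ is obtained from $A$ only by adding pairs produced by $\otimes$ and $\oplus$ and then removing pairs, it holds that every pair $\tuple{x,y}$ surviving in $\mathsf{Normalise}(A)$ is one that occurs in $\mathsf{Saturate}(A)$; hence $x$ is of the form $\bigoplus_1$ of $\bigotimes_1$'s of first components of $A$ (or the dual), and likewise $y$ is built from second components of $A$. The crucial observation is that, by construction of $\mathsf{Saturate}$, whenever $\tuple{x,y}\in\mathsf{Saturate}(A)$ with $x = \bigoplusone{J\in X}\bigotimesone{\tuple{a,b}\in J}a$, the paired value $y$ is exactly $\bigotimestwo{J\in X}\bigoplustwo{\tuple{a,b}\in J}b$, which (using idempotency of $\otimes_1$ and distributivity) is precisely $\overline{A}(x)$ when $x$ is of this shape. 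Thus the ``lexical'' pair $\tuple{x,y}$ in $\mathsf{Normalise}(A)$ carries exactly the ``value'' information $\overline{A}$ assigns to $x$.

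Concretely, I would argue in two directions. For $y = \overline{A}(x)$: the inequality $y\preceq_2\overline{A}(x)$ is immediate from Lemma~\ref{lem:yoverline} (applied with $\mathsf{Normalise}(A)$ in place of $A$, noting $\overline{\mathsf{Normalise}(A)}=\overline{A}$ by Proposition~\ref{prop:normalP}). For the reverse inequality $\overline{A}(x)\preceq_2 y$: since $x=\widetilde{A}(x)$ (which I prove first, see below), any $J\subseteq A$ with $z\preceq_1\bigoplusone{\tuple{\alpha,\beta}\in J}\alpha$ for $z=x$ lies in $K_x^A$, and by the saturation step applied to the family $X=K_x^A$ there is a pair $\tuple{\widetilde{A}(x), \bigotimestwo{J\in K_x^A}\bigoplustwo{\tuple{\alpha,\beta}\in J}\beta}$ in $\mathsf{Saturate}(A)$; comparing with the defining formula for $\overline{A}(x)=\bigoplustwo{J\in K_x^A}\bigotimestwo{\tuple{\alpha,\beta}\in J}\beta$ and using distributivity plus the fact that reduce only removes pairs dominated in \emph{both} coordinates, one checks $\overline{A}(x)\preceq_2 y$. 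For $x=\widetilde{A}(x)$: the inequality $x\preceq_1\widetilde{A}(x)$ is Lemma~\ref{lem:xtilde}; the reverse, $\widetilde{A}(x)\preceq_1 x$, follows because $\{\tuple{x,y}\}\subseteq A$ already gives $\widetilde{A}(x)=\bigotimesone{J\in K_x^A}\bigoplusone{\tuple{\alpha,\beta}\in J}\alpha\preceq_1\bigoplusone{\tuple{\alpha,\beta}\in\{\tuple{x,y}\}}\alpha = x$, using that $\{\tuple{x,y}\}$ is one of the $J$'s in $K_x^A$.

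The main obstacle will be bookkeeping the exact correspondence between the \emph{lexical} second component $y$ attached to $x$ inside $\mathsf{Saturate}(A)$ and the \emph{value} $\overline{A}(x)$: the saturate algorithm nests $\bigotimes_2\bigoplus_2$ whereas $\overline{A}$ is defined via $\bigoplus_2\bigotimes_2$, so the equality $y=\overline{A}(x)$ hinges on distributivity of $\otimes_2$ over $\oplus_2$ together with idempotency of $\otimes_1$ (to guarantee that the $x$-side matches $\widetilde{A}(x)$ exactly, not merely up to $\preceq_1$). One must also be careful that the $\mathsf{Reduce}$ phase does not collapse the pair $\tuple{x,y}$ we are tracking --- but this is exactly the hypothesis $\tuple{x,y}\in\mathsf{Normalise}(A)$, so the pair is not dominated in both coordinates, and in particular $y$ is not $\preceq_2$-below any other value paired with an $x'\succeq_1 x$. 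Assembling these pieces, both equalities $x=\widetilde{A}(x)$ and $y=\overline{A}(x)$ follow. Once Lemma~\ref{lem:match} is in place, the $\Rightarrow$ direction of Theorem~\ref{thm:normalisation} follows: if $\overline{A}=\overline{B}$ then for each $\tuple{x,y}\in\mathsf{Normalise}(A)$ one has $x=\widetilde{A}(x)=\widetilde{B}(x)$ and $y=\overline{A}(x)=\overline{B}(x)$, whence $\tuple{x,y}\in\mathsf{Normalise}(B)$ (and symmetrically), giving $\mathsf{Normalise}(A)=\mathsf{Normalise}(B)$.
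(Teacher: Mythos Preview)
Your argument contains a genuine gap. In proving $\widetilde{A}(x)\preceq_1 x$ you write ``because $\{\tuple{x,y}\}\subseteq A$ already gives $\widetilde{A}(x)\preceq_1 x$'', but the hypothesis is only $\tuple{x,y}\in\mathsf{Normalise}(A)$, and pairs in $\mathsf{Normalise}(A)$ need not lie in $A$ at all --- $\mathsf{Saturate}$ may produce entirely new pairs. Hence $\{\tuple{x,y}\}$ is not in general an element of $K_x^A$, and your argument for the reverse inequality collapses. There is also a confusion about which pair $\mathsf{Saturate}$ produces: with $X=K_x^A$, the \emph{second} line of Algorithm~\ref{alg:saturate} directly yields the pair $\tuple{\widetilde{A}(x),\overline{A}(x)}$, with second component already equal to $\bigoplustwo{J\in K_x^A}\bigotimestwo{\tuple{\alpha,\beta}\in J}\beta=\overline{A}(x)$ --- no distributivity bridge between $\bigotimes_2\bigoplus_2$ and $\bigoplus_2\bigotimes_2$ is needed.

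The paper's proof avoids all of this by a single domination argument rather than separate inequalities. One observes that the pair $\tuple{\widetilde{A}(x),\overline{A}(x)}$ lies in $\mathsf{Saturate}(A)$ (second line of the algorithm with $X=K_x^A$). By Lemma~\ref{lem:xtilde} (which actually holds for every $x\in L_1$, not just first components of $A$) and Lemma~\ref{lem:yoverline} (applied to $\mathsf{Saturate}(A)$, using $\overline{\mathsf{Saturate}(A)}=\overline{A}$), the surviving pair $\tuple{x,y}$ satisfies $x\preceq_1\widetilde{A}(x)$ and $y\preceq_2\overline{A}(x)$. But then $\tuple{x,y}$ is dominated in both coordinates by $\tuple{\widetilde{A}(x),\overline{A}(x)}$, and $\mathsf{Reduce}$ would delete it unless the two pairs coincide. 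That is the whole proof; you never need the reverse inequalities $\widetilde{A}(x)\preceq_1 x$ or $\overline{A}(x)\preceq_2 y$ on their own.
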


\begin{proof}
Let $\tuple{x,y}\in\mathsf{Normalise}(A)$. Consequently, $\tuple{x,y}\in\mathsf{Saturate}(A)$.
Moreover, by definition of $\mathsf{Saturate}$, the pair $\tuple{\widetilde{A}(x),\overline{A}(x)}$ must exist in $\mathsf{Saturate}(A)$. Additionally, from Lemma~\ref{lem:yoverline} and Lemma~\ref{lem:xtilde}, we have that $x\preceq_1\widetilde{A}(x)$ and $y=\overline{A}(x)$, which implies that $\tuple{x,y}$ should be eliminated by the $\mathsf{Reduce}$ algorithm during normalisation, unless $x=\widetilde{A}(x)$ and $y=\overline{A}(x)$.
\end{proof}

\begin{lemma}\label{lem:exist}
For all $x\in L_1$, there exists $u\in L_1$ such that $\tuple{u,\overline{A}(x)}\in\mathsf{Normalise}(A)$ and $x\preceq_1 u$.
\end{lemma}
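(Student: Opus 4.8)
The plan is to produce, for a given $x\in L_1$, an explicit candidate pair inside the saturated set, and then to track what the reduction phase does to it. Recall the notation introduced in the proof of Theorem~\ref{thm:normalisation}: $K_x^A=\{J\subseteq A\mid x\preceq_1\bigoplusone{\tuple{\alpha,\beta}\in J}\alpha\}$, $\overline{A}(x)=\bigoplustwo{J\in K_x^A}\bigotimestwo{\tuple{\alpha,\beta}\in J}\beta$ and $\widetilde{A}(x)=\bigotimesone{J\in K_x^A}\bigoplusone{\tuple{\alpha,\beta}\in J}\alpha$. The key observation is that $K_x^A$ is itself a subset of $\pow{A}$, hence it is one of the index sets $X$ over which the loop of Algorithm~\ref{alg:saturate} ranges; reading off Line~6 of that algorithm with $X=K_x^A$ shows that the pair $\tuple{\widetilde{A}(x),\overline{A}(x)}$ belongs to $\mathsf{Saturate}(A)$. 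Moreover, since $D_1$ is a lattice, $\otimes_1$ is the greatest lower bound for $\preceq_1$, so from $x\preceq_1\bigoplusone{\tuple{\alpha,\beta}\in J}\alpha$ for every $J\in K_x^A$ we obtain (by a trivial induction on $|K_x^A|$, the empty case giving $\widetilde{A}(x)=\top_1$) that $x\preceq_1\widetilde{A}(x)$; this is exactly the argument of Lemma~\ref{lem:xtilde}, now applied to an arbitrary $x$.

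It then remains to understand what $\mathsf{Reduce}$ does to $\tuple{\widetilde{A}(x),\overline{A}(x)}$. If this pair survives into $\mathsf{Normalise}(A)=\mathsf{Reduce}(\mathsf{Saturate}(A))$, we are done taking $u=\widetilde{A}(x)$. Otherwise it is deleted at some step, which — by the guard of the first \textsf{while} loop of Algorithm~\ref{alg:reduce} — can only happen because the current set still contains a distinct pair $\tuple{a_1,b_1}$ with $\widetilde{A}(x)\preceq_1 a_1$ and $\overline{A}(x)\preceq_2 b_1$. Iterating, one builds a chain $\tuple{\widetilde{A}(x),\overline{A}(x)}=\tuple{a_0,b_0},\tuple{a_1,b_1},\tuple{a_2,b_2},\dots$ in which each pair is the reason for the deletion of the previous one, so $a_0\preceq_1 a_1\preceq_1\cdots$ and $b_0\preceq_2 b_1\preceq_2\cdots$, and the pairs are pairwise distinct (each $\tuple{a_{i+1},b_{i+1}}$ is still present when $\tuple{a_i,b_i}$ is removed, hence differs from all earlier chain members). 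As $\mathsf{Saturate}(A)$ is finite, the chain stabilises at a pair $\tuple{u,v}$ that is never removed by the first loop; I expect this finiteness/termination bookkeeping about $\mathsf{Reduce}$ to be the main technical step. We thus get $\tuple{u,v}\in\mathsf{Normalise}(A)$, together with $\widetilde{A}(x)\preceq_1 u$ — whence $x\preceq_1\widetilde{A}(x)\preceq_1 u$ — and $\overline{A}(x)\preceq_2 v$.

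Finally I would upgrade $\overline{A}(x)\preceq_2 v$ to an equality, so that the second component is exactly $\overline{A}(x)$ as the statement demands. Since $\tuple{u,v}\in\mathsf{Saturate}(A)$, the argument of Lemma~\ref{lem:yoverline} — which uses nothing about $A$ beyond $\tuple{u,v}$ being one of its pairs — gives $v\preceq_2\overline{\mathsf{Saturate}(A)}(u)$, and the proof of Proposition~\ref{prop:normalP} establishes $\overline{\mathsf{Saturate}(A)}=\overline{A}$, so $v\preceq_2\overline{A}(u)$. By Theorem~\ref{thm:strongquasi}, $\overline{A}$ is a quasihomomorphism, hence antitone by Lemma~\ref{prop:antitone}; combined with $x\preceq_1 u$ this yields $\overline{A}(u)\preceq_2\overline{A}(x)$. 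Chaining, $v\preceq_2\overline{A}(u)\preceq_2\overline{A}(x)\preceq_2 v$, so $v=\overline{A}(x)$, and $\tuple{u,\overline{A}(x)}\in\mathsf{Normalise}(A)$ with $x\preceq_1 u$, as required.

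One point of care is the degenerate case $\overline{A}(x)=\bot_2$: there the candidate pair is wiped out by the second \textsf{while} loop of $\mathsf{Reduce}$ and no surviving pair has second component $\bot_2$, so the statement is to be read (and is used, e.g.\ in the proof of Theorem~\ref{thm:normalisation}) with the harmless side condition $\overline{A}(x)\neq\bot_2$; for such $x$ the chain argument above goes through verbatim, since $b_0=\overline{A}(x)\neq\bot_2$ forces $v\neq\bot_2$ as well.
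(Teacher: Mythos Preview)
Your proof is correct and follows essentially the same approach as the paper: produce the pair $\tuple{\widetilde{A}(x),\overline{A}(x)}$ in $\mathsf{Saturate}(A)$, extract from the reduction a surviving $\tuple{u,v}$ that dominates it componentwise, and then squeeze $v=\overline{A}(x)$ via antitonicity of $\overline{A}$ together with Lemma~\ref{lem:yoverline}. You are in fact more explicit than the paper on two points: you spell out the chain argument for why $\mathsf{Reduce}$ yields such a dominating survivor (the paper simply asserts it), and you correctly route the application of Lemma~\ref{lem:yoverline} through $\mathsf{Saturate}(A)$ and Proposition~\ref{prop:normalP}, whereas the paper invokes it directly as if $\tuple{u,v}\in A$. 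Your remark about the degenerate case $\overline{A}(x)=\bot_2$ is also a genuine observation that the paper does not address.
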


\begin{proof}
 Let $x\in L_1$. Again, $\tuple{\widetilde{A}(x),\overline{A}(x)}\in\mathsf{Saturate}(A)$. Then, due to the reduction algorithm, there must exist $\tuple{u,v}\in\mathsf{Normalise}(A)$ such that $u\succeq_1\widetilde{A}(x)$ and $v\succeq_2\overline{A}(x)$.We can consider the following assertions:
\begin{center}
{\footnotesize
\begin{tabular}{lrr}
  $D_1$ is a lattice             & (by hypothesis) & (H1) \\
  $\widetilde{A}(x) \preceq_1 u$ & (due to $\mathsf{Reduce}$) & (R1) \\
  $\overline{A}(x)  \preceq_2 v$ & (due to $\mathsf{Reduce}$) & (R2) \\
  $\overline{A}$ is antitone     & (from (H1) and Lemma~\ref{prop:antitone}) & (A1) \\
  $x \preceq_1 \widetilde{A}(x)$    & (from Lemma~\ref{lem:xtilde}) & (A2) \\
  $\overline{A}(x)  \succeq_2 \overline{A}(\widetilde{A}(x))$ & (from (A1) and (A2)) & (A3) \\
  $\overline{A}(\widetilde{A}(x)) \succeq_2 \overline{A}(u)$ & (from (R1) and (A1)) & (A4) \\
  $\overline{A}(u)  \succeq_2 v$ & (from Lemma~\ref{lem:yoverline}) & (A5) \\
  $\overline{A}(x) = v$          & (from (A3), (A4), (A5) and (R2)) & (C1) \\
  $x \preceq_1 u$                & (from (A2) and (R1)) & (C2) \\
\end{tabular}
}
\end{center}
Assertions (C1) and (C2) establish the lemma.
\end{proof}

\begin{proof} (Of Theorem~\ref{thm:normalisation})
Let $\tuple{x,y}\in\mathsf{Normalise}(A)$. From Lemma~\ref{lem:match}, we know that $x=\widetilde{A}(x)$ and $y=\overline{A}(x)$. Moreover, from the hypothesis of the theorem, $\overline{A}(x)=\overline{B}(x)$. Hence, due to Lemma~\ref{lem:exist}, there exists $u\in L_1$ such that 
$\tuple{u,y}\in\mathsf{Normalise}(B)$ and $x\preceq_1 u$. By using the same reasoning, we can infer that there exists $v\in L_1$ such that 
$\tuple{v,y}\in\mathsf{Normalise}(A)$ and $u\preceq_1 v$. But due to Lemma~\ref{lem:match}, we have that $y = v$ and therefore $\tuple{x,y}\in\mathsf{Normalise}(B)$.

The situation is symmetrical with respect to $A$ and $B$, so finally $\mathsf{Normalise}(A)=\mathsf{Normalise}(B)$.
\end{proof}

\end{small}

\end{document}